%% LyX 2.3.8 created this file.  For more info, see http://www.lyx.org/.
%% Do not edit unless you really know what you are doing.
\documentclass[12pt]{article}
\usepackage[latin9]{inputenc}
\usepackage[letterpaper]{geometry}
\geometry{verbose,tmargin=1in,bmargin=1in,lmargin=1in,rmargin=1in}
\usepackage{verbatim}
\usepackage{amsmath}
\usepackage{amsthm}
\usepackage{amssymb}
\usepackage{graphicx}
\usepackage{setspace}
\usepackage[authoryear]{natbib}
\onehalfspacing
\usepackage[unicode=true,
 bookmarks=false,
 breaklinks=false,pdfborder={0 0 1},backref=false,colorlinks=false]
 {hyperref}
\hypersetup{
 colorlinks,citecolor=blue,bookmarksnumbered=true,urlcolor=Indigo,linkcolor=blue}

\makeatletter

%%%%%%%%%%%%%%%%%%%%%%%%%%%%%% LyX specific LaTeX commands.
%% Because html converters don't know tabularnewline
\providecommand{\tabularnewline}{\\}

%%%%%%%%%%%%%%%%%%%%%%%%%%%%%% Textclass specific LaTeX commands.
\theoremstyle{plain}
\newtheorem{assumption}{\protect\assumptionname}
\theoremstyle{definition}
 \newtheorem{example}{\protect\examplename}
\theoremstyle{remark}
\newtheorem{rem}{\protect\remarkname}
\theoremstyle{plain}
\newtheorem{prop}{\protect\propositionname}
\theoremstyle{plain}
\newtheorem{thm}{\protect\theoremname}
\theoremstyle{definition}
\newtheorem{defn}{\protect\definitionname}
\theoremstyle{plain}
\newtheorem{lem}{\protect\lemmaname}

%%%%%%%%%%%%%%%%%%%%%%%%%%%%%% User specified LaTeX commands.
\usepackage{bbm}
\usepackage{amsmath}
\usepackage{amsthm}
\usepackage{amssymb}
\usepackage{setspace}
\usepackage{comment}
\usepackage{graphicx}
\usepackage{rotating}
\usepackage[authoryear]{natbib}
\usepackage{xcolor}

\sloppy
\allowdisplaybreaks

\makeatother

\providecommand{\assumptionname}{Assumption}
\providecommand{\definitionname}{Definition}
\providecommand{\examplename}{Example}
\providecommand{\lemmaname}{Lemma}
\providecommand{\propositionname}{Proposition}
\providecommand{\remarkname}{Remark}
\providecommand{\theoremname}{Theorem}

\begin{document}
\global\long\def\a{\alpha}%
\global\long\def\b{\beta}%
\global\long\def\g{\gamma}%
\global\long\def\d{\delta}%
\global\long\def\e{\epsilon}%
\global\long\def\l{\lambda}%
\global\long\def\t{\theta}%
\global\long\def\o{\omega}%
\global\long\def\s{\sigma}%
\global\long\def\G{\Gamma}%
\global\long\def\D{\Delta}%
\global\long\def\L{\Lambda}%
\global\long\def\T{\Theta}%
\global\long\def\O{\Omega}%
\global\long\def\R{\mathcal{R}}%
\global\long\def\N{\mathbb{N}}%
\global\long\def\Q{\mathbb{Q}}%
\global\long\def\I{\mathbb{I}}%
\global\long\def\P{\mathbb{P}}%
\global\long\def\E{\mathbb{E}}%
\global\long\def\B{\mathbb{\mathbb{B}}}%
\global\long\def\S{\mathbb{\mathbb{S}}}%
\global\long\def\V{\mathbb{\mathbb{V}}\text{ar}}%
\global\long\def\GG{\mathbb{G}}%
\global\long\def\TT{\mathbb{T}}%
\global\long\def\X{{\bf X}}%
\global\long\def\cX{\mathscr{X}}%
\global\long\def\cY{\mathscr{Y}}%
\global\long\def\cA{\mathscr{A}}%
\global\long\def\cB{\mathscr{B}}%
\global\long\def\cF{\mathscr{F}}%
\global\long\def\cM{\mathscr{M}}%
\global\long\def\cN{\mathcal{N}}%
\global\long\def\cG{\mathcal{G}}%
\global\long\def\cC{\mathcal{C}}%
\global\long\def\sp{\,}%
\global\long\def\es{\emptyset}%
\global\long\def\mc#1{\mathscr{#1}}%
\global\long\def\ind{\mathbf{\mathbbm1}}%
\global\long\def\indep{\perp}%
\global\long\def\any{\forall}%
\global\long\def\ex{\exists}%
\global\long\def\p{\partial}%
\global\long\def\cd{\cdot}%
\global\long\def\Dif{\nabla}%
\global\long\def\imp{\Rightarrow}%
\global\long\def\iff{\Leftrightarrow}%
\global\long\def\up{\uparrow}%
\global\long\def\down{\downarrow}%
\global\long\def\arrow{\rightarrow}%
\global\long\def\rlarrow{\leftrightarrow}%
\global\long\def\lrarrow{\leftrightarrow}%
\global\long\def\abs#1{\left|#1\right|}%
\global\long\def\norm#1{\left\Vert #1\right\Vert }%
\global\long\def\rest#1{\left.#1\right|}%
\global\long\def\bracket#1#2{\left\langle #1\middle\vert#2\right\rangle }%
\global\long\def\sandvich#1#2#3{\left\langle #1\middle\vert#2\middle\vert#3\right\rangle }%
\global\long\def\turd#1{\frac{#1}{3}}%
\global\long\def\ellipsis{\textellipsis}%
\global\long\def\sand#1{\left\lceil #1\right\vert }%
\global\long\def\wich#1{\left\vert #1\right\rfloor }%
\global\long\def\sandwich#1#2#3{\left\lceil #1\middle\vert#2\middle\vert#3\right\rfloor }%
\global\long\def\abs#1{\left|#1\right|}%
\global\long\def\norm#1{\left\Vert #1\right\Vert }%
\global\long\def\rest#1{\left.#1\right|}%
\global\long\def\inprod#1{\left\langle #1\right\rangle }%
\global\long\def\ol#1{\overline{#1}}%
\global\long\def\ul#1{\underline{#1}}%
\global\long\def\td#1{\tilde{#1}}%
\global\long\def\bs#1{\boldsymbol{#1}}%
\global\long\def\upto{\nearrow}%
\global\long\def\downto{\searrow}%
\global\long\def\pto{\overset{p}{\longrightarrow}}%
\global\long\def\dto{\overset{d}{\longrightarrow}}%
\global\long\def\asto{\overset{a.s.}{\longrightarrow}}%

\setlength{\abovedisplayskip}{6pt} 
\setlength{\belowdisplayskip}{6pt}
\title{Identification in Nonlinear Dynamic Panel \\Models under Partial Stationarity\thanks{We are grateful to Jason Blevins, Xiaohong Chen, Bryan Graham, Jiaying
Gu, Robert de Jong, Shakeeb Khan, Kyoo il Kim, Louise Laage, Xiao
Lin, Eric Mbakop, Chris Muris, Adam Rosen, Frank Schorfheide, Liyang
Sun, Elie Tamer, Valentin Verdier, Jeffrey Wooldridge, as well as numerous
seminar and conference participants for helpful comments and suggestions.}}
\author{Wayne Yuan Gao\thanks{Department of Economics, University of Pennsylvania, 133 S 36th St,
Philadelphia, PA 19104, USA. Email: waynegao@upenn.edu}$\ \ $and Rui Wang\thanks{This work was completed while Rui Wang was at Ohio State University, prior to joining Amazon. Department of Economics, Ohio State University, 1945 N High St, Columbus, OH 43210, USA. Email: rwang.econ@gmail.com} }

%Department of Economics, Ohio State University, 1945 N High St, Columbus, OH 43210, USA.

\maketitle
%\noindent
\begin{abstract}
This paper provides a general identification approach for a wide range of nonlinear panel data models, including binary choice, ordered response, and other types of limited dependent variable models. Our approach accommodates dynamic models with any number of lagged dependent variables as well as other types of endogenous covariates. Our identification strategy relies on a partial stationarity condition, which allows for not only an unknown distribution of errors, but also temporal dependencies in errors. We derive partial identification results under flexible model specifications and establish sharpness of our identified set in the binary choice setting. We demonstrate the robust finite-sample performance of our approach using Monte Carlo simulations, and apply the approach to the empirical analysis of income categories using various ordered choice models.

%The approach is applied to study...

\noindent \textbf{~}\\
 \textbf{Keywords}: Panel Discrete Choice Models; Stationarity; Dynamic
Models; Partial Identification; Endogeneity 
\end{abstract}
\newpage{}

\section{\label{sec:intr}Introduction}

This paper provides a general and unified identification approach
for a wide range of panel data models with limited dependent variables,
including various discrete (binary, multinomial, and ordered) choice
models and censored outcome models. In particular, our approach accommodates dynamic models with any number of lagged dependent variables and contemporaneously endogenous covariates. Moreover, the identification
approach does not impose parametric distributions on unobserved heterogeneity, nor on the exact form of endogeneity, thus allowing for more flexible model specifications.     

To fix ideas, we start with the following dynamic binary choice model,
which is on its own of considerable theoretical and applied interest.
Section \ref{sec:gene} generalizes the approach to other limited
dependent variable models. Specifically, consider 
\begin{equation}
Y_{it}=\ind\left\{ W_{it}^{'}\t_{0}+\alpha_{i}+\epsilon_{it}\geq0\right\} ,\label{eq:bin}
\end{equation}
where $Y_{it}\in\left\{ 0,1\right\} $ denotes a binary outcome variable
for individual $i=1,2,...$ and time $t=1,...,T$, $W_{it}\in\R^{d_{w}}$
denotes a vector of observed covariates, $\alpha_{i}\in\R$ denotes
the unobserved fixed effect for individual $i$, and $\epsilon_{it}$
denotes the unobserved time-varying error term for individual $i$
at time $t$. The objective is to identify the parameter $\t_{0}$\footnote{We discuss in Appendix \ref{subsec:ID_CF} how our results can be used to derive bounds on certain counterfactual parameters.}
using a panel of observed variables $\left(Y_{i},W_{i}\right){}_{i=1}^{n}$,
where $W_{i}:=\left(W_{i1},...,W_{iT}\right)$, and similarly for
$Y_{i}$.  We focus on short panels, where the number of time periods
$T\geq2$ is fixed and finite.

The identification of model \eqref{eq:bin} has been explored in the
literature under various assumptions. For example, \citet{chamberlain1980}
examines identification under the logistic distribution of $\e_{it}$
and the independence of $\e_{it}$ with respect to $\left(\alpha_{i},W_{i}\right)$.
Subsequently, \citet{manski1987} relaxes the distributional assumption
and employs the following conditional stationarity of $\epsilon_{it}$
to achieve identification: 
\begin{equation}
\epsilon_{is}\,\sim\,\epsilon_{it}\mid\alpha_{i},W_{i}\quad\forall s,t=1,...,T\label{eq:cond_sta}
\end{equation}
This condition is also referred to as ``group stationarity'' or
``group homogeneity'' and has been exploited in studies such
as \citet{chernozhukov2013}, \citet*{shi2018} and \citet{pakes2019}.\footnote{To be precise, condition \eqref{eq:cond_sta} is often stated in the
following weaker ``pairwise'' version in the literature, 
\[
\epsilon_{is}\,\sim\,\epsilon_{it}\mid\alpha_{i},W_{is},W_{it},\ \quad\forall s,t=1,...,T,
\]
where only covariate realizations from the two periods $\left(s,t\right)$
are conditioned on. However, the difference between condition \eqref{eq:cond_sta}
and the pairwise version above usually only leads to a minor adaptation
of the results in the aforementioned papers (as well as in the current
one). See Remark \ref{rem:PairPS} for a follow-up discussion.} Condition \eqref{eq:cond_sta} does not impose parametric restrictions
on the distributions of $\e_{it}$ and allows for dependence between the
fixed effect $\alpha_{i}$ and the covariates $W_{i}$. However, condition
\eqref{eq:cond_sta} does impose substantial restriction on the dependence
between $W_{i}$ and the time-varying error term $\e_{it}$: it effectively
requires that all covariates in $W_{i}$ are exogenous with respect
to the time varying error $\epsilon_{it}$.\footnote{For instance, suppose $W_{it}=\left(Z_{it},X_{it}\right)$ and $\E\left[\rest{\e_{it}}W_{i}\right]=X_{it}^{'}\eta$,
then the conditional distributions of $\e_{it}$ and $\e_{is}$ cannot
be the same as long as $X_{it}^{'}\eta\neq X_{is}^{'}\eta$. Hence
condition \eqref{eq:cond_sta} fails in general.}

In many economic applications, certain components of the observable
covariates $W_{i}$, namely $X_{i}$, may exhibit endogeneity issues in violation of condition \eqref{eq:cond_sta}. For
example, in a \emph{dynamic} setting where $X_{it}$ includes the
lagged outcome variable $Y_{i,t-1}$, then the dependence  of $Y_{i,t-1}$
on $\e_{i,t-1}$
arises immediately, violating the condition \eqref{eq:cond_sta}.\footnote{We note that lagged outcome variable $Y_{it}$ are sometimes referred to as ``predetermined'' or ``weakly or sequentially exogenous'' covariates in panel data literature. We clarify that in this paper, we use the words  ``exogenous'' and ``endogenous'' in the stronger ``all-period'' sense, or more precisely, in the sense of whether condition \eqref{eq:cond_sta} holds when conditioned on the covariate in question.} For another example, if $X_{it}$ includes ``price''
or other variables that may be endogenously chosen by economic agents
after observing $\e_{it}$, then $X_{it}$ would be correlated with the
contemporaneous $\e_{it}$, so the exogeneity restriction imposed by
condition \eqref{eq:cond_sta} will again fail to hold.

In this paper, we instead impose and exploit a weaker version of  
condition \eqref{eq:cond_sta}  by excluding all endogenous components
of $W_{i}$ from the conditioning set. To be precise,
from now on we suppose that we can decompose $W_{it}$ as:
\[
W_{it}\equiv\left(Z_{it},X_{it}\right),
\]
where $Z_{it}$ is of dimension $d_{z}$, and $X_{it}$ is of dimension
$d_{x}$ with $d_{w}=d_{z}+d_{x}$. Our ``partial stationarity''
assumption is then formulated as follows:
\begin{equation}
\e_{is}\sim\e_{it}\mid\a_{i},Z_{i},\quad\forall s,t=1,...,T.\label{eq:part_stat}
\end{equation}
Our partial stationarity condition \eqref{eq:part_stat}, as its name
suggests, only requires that the errors are stationary conditional
on the realizations of a subvector of the covariates (i.e., the exogenous
covariates denoted by $Z_{i}$) while allowing the remaining covariates
(denoted by $X_{i}$) to be endogenous in arbitrary manners.\footnote{Our identification strategy and results can be easily adapted under
the alternative ``pairwise partial stationarity'' condition $\e_{is}\sim\e_{it}\mid\a_{i},Z_{is},Z_{it}.$
See Remarks \ref{rem:PairPS} and \ref{rem:PairPS_prop1} for follow-up
discussions.} In short, condition \eqref{eq:part_stat} imposes exogeneity conditions
only on a subset of covariates, which we label as ``$Z$'' and will thereafter be referred to as the ``exogenous covariates''.\footnote{Condition \eqref{eq:part_stat} also accommodates the standard stationarity
assumption conditional on all covariates.} %We can interpret condition
%\eqref{eq:part_stat} as an assumption of the existence of \emph{some}
%covariates being exogenous.

We describe how to exploit the partial stationarity condition \eqref{eq:part_stat}
to derive the identified set on the model parameters $\theta_{0}$
through a class of conditional moment inequalities, which take the
form of lower and upper bounds for the conditional distribution $\e_{it}+\a_{i}\mid Z_{i}$,
solely as functions of observed variables and the model parameters
$\t_{0}$. We show that these bounds must have nonzero intersections
over time under the partial stationarity assumption, thereby forming
a class of identifying restrictions for the parameter $\t_{0}$. Conditional
on the exogenous covariates $Z_{i}$, our class of inequalities is
indexed by a scalar $c\in\R$, which implicitly traces out all possible
values that the parametric index $Z_{it}^{'}\b_{0}+X_{it}^{'}\g_{0}$
can take. That said, we show how the\emph{ effective }number of identifying
restrictions can be reduced to be finite when $X_{it}$ has finite
support, a condition naturally satisfied in the important special
case of ``$p$-th order autoregressive'' dynamic binary choice models,
where $X_{it}$ consists of lagged outcome variables $Y_{i,t-1},Y_{i,t-2},...,Y_{i,t-p}$
that are by construction discrete.

We demonstrate the sharpness of the identified set we derived for
binary choice models. More precisely, we show that for any $\t$
that satisfies all the conditional moment inequalities we derived,
we can construct an observationally equivalent joint distribution
of the observed and unobserved variables in our model. Our proof of
sharpness consists of three main steps: we begin by demonstrating
``per-period'' sharpness, and then progressively generalize the
result from ``per-period'' to ``all-period'' sharpness, and from
discrete $X_{it}$ to general $X_{it}$. A key innovation in our proof
technique is using an explicit, simple, and general construction that
shows how marginal/aggregate stationarity restrictions and joint choice
probability restrictions can be satisfied simultaneously, which might
be of independent and wider use. 
%While our main result is about set identification, we also provide sufficient conditions for the point identification of the coefficients on exogenous covariates (under scale normalization) as well as the signs of the coefficients on endogenous covariates.

Our identification strategy based on partial stationarity applies
more broadly beyond the context of dynamic binary choice models. In
Section \ref{sec:gene}, we demonstrate its applicability in a general
nonseparable semiparametric model, and show how
it can be applied to a range of alternative limited dependent variable
models, such as ordered response models, multinomial choice models,
and censored outcome models. The results of our approach accommodate
both static and dynamic settings across all these models.

We characterize the identified set using a collection of conditional
moment inequalities, based on which estimation and inference can be
conducted using established econometric methods in the literature,
such as \citet*{chernozhukov2007}, \citet{andrews2013}, and \citet*{chernozhukov2013}.
Through Monte Carlo simulations, we demonstrate that our identification
method yields informative and robust finite-sample confidence intervals
for coefficients in both static and dynamic models.

\subsubsection*{Literature Review}

Our paper contributes directly to the line of econometric literature
on semiparametric panel discrete choice models. Dating back to \citet{manski1987},
a series of work exploits ``full'' stationarity conditions for identification,
such as \citet{abrevaya2000rank}, \citet*{chernozhukov2013}, \citet*{shi2018},
\citet{pakes2019}, \citet*{khan2021inference},
\citet{gao2020}, \citet{wang2022semiparametric}, and \citet*{botosaru2023identification}.
As discussed above, full stationarity conditions given all observable
covariates effectively require that all covariates are exogenous with
no dynamic effects (i.e., lagged dependent variables). In contrast,
we exploit the ``partial'' stationarity condition, allowing
for lagged dependent variables, as well as other endogenous covariates.

In the literature on dynamic discrete choice models, our paper is
most closely related to \citet*[KPT thereafter]{khan2023identification},
who studies the following dynamic panel binary choice model 
\begin{equation}
Y_{it}=\ind\left\{ Z_{it}^{'}\b_{0}+Y_{i,t-1}\g_{0}+\a_{i}+\e_{it}\geq0\right\} ,\label{eq:KPT}
\end{equation}
where the one-period lagged dependent variable $Y_{i,t-1}\in\{0,1\}$
serves as the endogenous covariate, and $Z_{it}$ are exogenous covariates.
KPT exactly imposes the ``partial stationarity'' condition \eqref{eq:part_stat}
in the specific context of \eqref{eq:KPT}, and derives the sharp
identified set for $\t_{0}$ by explicitly enumerating the realizations
of the one-period lagged outcome variable $Y_{i,t-1}$ (across two
periods $t,s$). In contrast, our model \eqref{eq:bin}, along with
the ``partial stationarity'' condition, is stated with more general
specifications of the endogenous covariates $X_{it}$. The covariates
$X_{it}$ can include more than one lagged dependent variables (e.g.
$Y_{i,t-1},Y_{i,t-2},$ ...) and other endogenous variables (such
as ``price'' if $Y_{it}$ represents the purchase of a particular
product), which may be continuously valued. Consequently, our identification
strategy is substantially different from that of KPT, and can be applied
more broadly to various other dynamic nonlinear panel models. In the
specific model \eqref{eq:KPT}, we show that the identifying restrictions
we derived are equivalent to those derived in KPT and thus both approaches
lead to sharp identification. Relatedly, \citet{mbakop2023identification}
proposes a computational algorithm to derive conditional moment inequalities
in a general class of dynamic discrete choice models (potentially
with multiple lags). The focus of \citet{mbakop2023identification}
is on scenarios where the lagged discrete outcome variables are the
only endogenous covariates in the model, and the proposed algorithm
relies on the discreteness of these variables. Relative to these works,
our paper introduces an analytic approach that directly applies to
a more general class of dynamic binary choice models, as well as other
types of models with continuous limited dependent variables and any
number of endogenous covariates, regardless of whether they are discrete
or continuous and whether they take the form of lagged outcome variables
or not.

Our identification strategy relies on a type of stationarity condition,
while alternative approaches utilize other notions of exogeneity.
For example, \citet{honore2000panel} provides identification by exploiting
events where the exogenous covariates stay the same across two periods:
they consider both the logit case and a semiparametric case, but both
under the independence between time-changing errors and the lagged
dependent variable, as well as the intertemporal independence of errors.
Additionally, \citet{aristodemou2021semiparametric} exploits an alternative
type of ``full independence'' assumption for identification in dynamic binary
choice models. The ``full independence'' assumption essentially
requires that the time-varying errors from all time periods and the
exogenous variables from all time periods are independent (conditional
on initial conditions), but does not make intertemporal restrictions
on the errors (such as stationarity). Hence, such `full independence''
assumption and the partial stationarity assumption in our paper do
not nest each other as special cases. \citet*{chesher2023identification}
applies the framework of generalized instrumental variables \citep{chesher2017generalized}
to the context of various dynamic discrete choice models with fixed
effects, and utilizes a similar ``full independence'' assumption
\citep{aristodemou2021semiparametric} for identification.\footnote{Our identification strategy shares some conceptual similarity with
the idea of generalized instrumental variable (GIV) in \citet*{chesher2017generalized},
who proposes a general approach for representing the identified set
of structural models with endogeneity. %as an infinite collection of inequalities about conditional probabilities of the structural errorstaking values in an arbitary given set (conditional on the exogenous variables). 
\citet*{chesher2017generalized}, \citet*{chesher2020generalized},
and \citet*{chesher2023identification} demonstrate how the GIV framework
can be applied to various settings, but focus mostly on the use of
exclusion restrictions and/or full independence assumptions. In this
paper, we neither impose exclusion restrictions nor independence assumptions
but instead explore identification under a partial stationarity condition.} More differently, some other papers work with sequential exogeneity
in various dynamic panel models and provide (non-)identification results
under different model restrictions. For example, \citet{shiu2013identification}
imposes a high-level invertibility condition along with a restriction
that rules out the dependence of covariates on past dependent variables.
More recently, \citet*{bonhomme2023identification} investigates panel
binary choice models with a single binary predetermined covariate
under \emph{sequential exogeneity}, whose evolution may depend on
the past history of outcome and covariate values. The sequential exogeneity
condition considered in these papers and the partial stationarity
condition in ours again do not nest each other as special cases: in
particular, our current paper accommodates contemporaneously endogenous
covariates that violate sequential exogeneity. In summary, the key
assumptions, identification strategy, and identification results of
these studies are substantially different from and thus not directly
comparable to those in our current paper.

%More differently, a recent paper by \citet*{bonhomme2023identification} investigates the identification of panel binary choice models under \emph{sequential exogeneity}. Specifically, they focuses on a stylized setting with a single binary predetermined covariate, whose evolution may be dependent on the past history of outcome and covariate values. Again, their setting and results are very different from and not directly comparable to ours: on one hand, in their setting with a single binary predetermined regressor, there is no ``partially stationary" covariate as in our setting; on the other hand, our identification strategy does not rely on discreteness of outcome and covariates, and more importantly can accommodate contemporaneously endogenous covariates that do not satistfy sequential exogeneity as in \citet*{bonhomme2023identification}.

\begin{comment}
That said, our identification strategy is conceptually related to
the idea of generalized instrumental variable (GIV) in \citet*{chesher2017generalized},
who proposes a general approach for representing the identified set
of a structural model with endogeneity as an infinite collection of
inequalities about conditional probabilities of the structural errors
taking values in an arbitary given set (conditional on the exogenous
variables). discuss how the approach can be used to obtain sharp identified
set under various model specifications and structural assumptions,
and their subsequent work applies the approach to specific settings. 
\end{comment}

Our paper is also complementary to the literature that studies
dynamic logit models with fixed effects. This literature typically assumes that time-varying errors are conditionally independent across time, independent from all other variables, and follow the logistic distribution. The
logit assumption in panel settings has long been studied, such
as in \citet*{chamberlain1984panel} and \citet*{chamberlain2010binary}.
We do not impose the logit assumption, nor require conditional independence
across time, and our identification strategy is very different from
those based on the logit assumption.

Our paper also contributes to the general panel data literature on
linear and nonlinear models with and without endogeneity and dynamics.
Most relatedly, \citet{botosaru2017binarization} proposes a binarization
strategy for general panel data models with fixed effects without
requiring time homogeneity, but focuses on static settings. \citet{botosaru2022}
considers a model where the outcome variable is generated as a strictly
monotone (and thus invertible) transformation of a linear model, and
they exploit time homogeneity in conditional means (instead of the
whole distributions) for identification. Our current paper, with a
focus on discrete choice models, does not require strict monotonicity
and invertibility. 

It should be acknowledged that our partial stationarity condition \eqref{eq:part_stat}, as well as the full stationarity condition \eqref{eq:cond_sta}, cannot handle random coefficients as in \cite*{berry1995automobile} and \cite{mcfadden2000mixed}. On the other hand, our approach can be naturally extended to handle nonadditive and functional fixed effects as discussed in Remark \ref{rem:scalaradd} and Section \ref{subsec:generic}. It would be interesting in future research to investigate whether our approach can be further adapted to incorporate heterogeneous coefficients. 

~

The rest of the paper is organized as follows. Section \ref{sec:bin}
studies the sharp identification of panel binary choice models with
endogenous covariates. Section \ref{sec:gene} demonstrates how our
key identification strategy generalizes to a wide range of dynamic
nonlinear panel data models, such as ordered response models, multinomial
choice models, and censored outcome models. Section \ref{sec:simu}
presents simulation results about the finite-sample performances of
our approach, and Section \ref{sec:appl} explores the empirical application
of income categories using various ordered response models. We conclude
with Section \ref{sec:conc}.

\section{\label{sec:bin}Dynamic Binary Choice Model}

\subsection{Model}

To explain the partial stationarity and our key identification strategy,
we start with the canonical binary choice model, which is of wide
theoretical interest itself. In Section \ref{sec:gene}, we explain
how our identification strategy can be applied more generally.

Specifically, consider the same binary model as introduced in \eqref{eq:bin}:
$$
Y_{it}=\ind\left\{ W_{it}^{'}\t_{0}+\alpha_{i}+\epsilon_{it}\geq0\right\}.
$$ 
Recall that we decompose $W_{it}\equiv\left(Z_{it}^{'},X_{it}^{'}\right)^{'}$, and, throughout this paper, we will refer to $Z_{i}$ as ``exogenous covariates'',
and refer to $X_{i}$ as ``endogenous covariates''. The exact difference
between $Z_{i}$ and $X_{i}$ is formalized through the ``\emph{partial stationarity}'' condition, which we now state as a formal assumption:
\begin{assumption}[Partial Stationarity]
 \label{assu:PartStat} The conditional distribution of $\e_{it}\mid Z_{i},\a_{i}$
is stationary over time, i.e., 
\[
\e_{it}\mid Z_{i},\a_{i}\stackrel{d}{\sim}\e_{is}\mid Z_{i},\a_{i}\quad\forall t,s=1,...,T.
\]
\end{assumption}
Assumption \ref{assu:PartStat} essentially requires that the (conditional)
distribution of $\e_{it}$ stays the same across all time periods
$t=1,...,T$ even if $Z_{i}$ realize to different values. To illustrate,
suppose that there are only two periods $t=1,2$, and that $Z_{i1},Z_{i2}$
realize to two values $z_{1},z_{2}$, respectively, with $z_{1}<z_{2}$.
Then Assumption \ref{assu:PartStat} requires that $\e_{i1}$ and
$\e_{i2}$ still have the same (conditional) distributions: in particular,
$\e_{i1}$ cannot be stochastically smaller (or larger) than $\e_{i2}$
because of $z_{1}<z_{2}$. Hence, Assumption \ref{assu:PartStat}
can be thought as a definition of the ``exogeneity'' of the covariates
$Z_{it}$ in our context.

In contrast, Assumption \ref{assu:PartStat} imposes no such restrictions
on the (potentially) endogenous covariates $X_{i}$. In fact, since
$X_{i}$ does not appear in Assumption \ref{assu:PartStat} at all,
here we are completely agnostic about the dependence structure between
$\e_{i}$ and $X_{i}$: in particular, the conditional distribution
of $\e_{it}$ is allowed to vary across $t$ arbitrarily for any particular
realization of $X_{i}$. As a result, different forms of endogeneity
in $X_{i}$ can be incorporated under our framework in a unified manner,
as we illustrate in the examples below. 
\begin{example}[Dynamic Effects via Lagged Outcomes]
\label{exa:dyn} Consider the following ``AR(1)'' dynamic binary
choice model studied in \citet*[KPT thereafter]{khan2023identification}:
\[
Y_{it}=\ind\left\{ Z_{it}^{'}\b_{0}+Y_{i,t-1}\g_{0}+\a_{i}+\e_{it}\geq0\right\} ,
\]
which is a special case of our model with $X_{it}$ set to be the
one-period lagged binary outcome variable $Y_{i,t-1}$. Here, $X_{it}$
is endogenous since $X_{it}\equiv Y_{i,t-1}$ and $\e_{i,t-1}$ is
by construction positively correlated with $Y_{i,t-1}$ for any $t$,
and thus the distribution of $\e_{it}$ cannot be stationary across
$t$ when conditioned on the realizations of $Y_{i0},...,Y_{i,T-1}$.
For example, given $Y_{i0}=Y_{i1}=1,$ $Y_{i2}=0$ (and $Z_{i},\a_{i}$),
the conditional distribution of $\e_{i1}$ will naturally be different
from that of $\e_{i2}$. To obtain identification under the endogeneity
of $Y_{i,t-1}$, KPT imposes the stationarity of $\e_{it}$ conditional
on the exogenous covariates $Z_{i}$ only, which coincides with our
``partial stationarity'' condition (Assumption \ref{assu:PartStat})
when specialized to their setting.

A natural generalization of the AR(1) model above in KPT is the following
``AR($p$)'' model, which is again a special case of our model with
$X_{it}$ taken to be the vector of $p$ lagged outcomes $Y_{i,t-1},...,Y_{i,t-p}$:
\[
Y_{it}=\ind\left\{ Z_{it}^{'}\b_{0}+\sum_{j=1}^{p}Y_{i,t-j}\g_{j}+\a_{i}+\e_{it}\geq0\right\} .
\]
Similarly, $X_{it}$ is endogenous here due to dependence on $\e_{i,t-1},...,\e_{t-p}$,
which can again be handled in our framework under the ``partial stationarity''
assumption. While it is not clear how the identification results in
KPT can be easily generalized to the AR($p$) model above, we show
in the next subsection how our identification strategy provides a
simple and unified approach to derive moment inequalities regardless
of the exact specifications of $X_{it}$. 
\end{example}
\begin{example}[Contemporaneously Endogenous Covariates]
\label{exa:ContempEndo} Alternatively, consider the following binary
choice model with contemporaneously endogenous covariates: 
\begin{align*}
Y_{it} & =\ind\left\{ Z_{it}^{'}\b_{0}+X_{it}^{'}\g_{0}+\a_{i}+\e_{it}\geq0\right\} ,\\
X_{it} & =\phi\left(Z_{it},u_{it}\right)
\end{align*}
where $\phi$ is an unknown ``first-stage'' function and $u_{it}$
is allowed to be arbitrarily correlated with $\e_{it}$. For example,
$X_{it}$ may be a ``price '' variable that is strategically chosen
by a decision maker after observing the current-period error $\e_{it}$,
which generates contemporaneous dependence between $X_{it}$ and $\e_{it}$.
Even though contemporaneous endogeneity of this type is very different
in nature from the dynamic endogeneity discussed in the previous example,
it also induces non-stationarity of $\e_{it}$ when conditioned on
$X_{i}$: for example, if $X_{it}$ and $\e_{it}$ are positively
correlated, then, conditional on $X_{i1}<X_{i2}$, it is unreasonable
to assume the distribution of $\e_{i1}$ is the same as $\e_{i2}$.
That said, such type of contemporaneous endogeneity can also be handled
in our framework under the ``partial stationarity'' condition (Assumption
\ref{assu:PartStat}). 
\end{example}
\begin{rem}[Combination of Dynamic and contemporaneous Endogeneity]
 We separately discussed two types of endogenous covariates, dynamic
covariates (lagged outcome variables) and contemporaneously endogenous
covariates, in the two examples above, but our identification strategy
also applies if both types of endogenous covariates are present together,
since our identification strategy works generally under ``partial
stationarity'', which does not impose or exploit any restrictions
on the form of endogeneity between $\e_{it}$ and $X_{i}$.
\end{rem}
\begin{rem}[Full Stationarity as Special Case]
 Obviously, the standard ``full stationarity'' condition \eqref{eq:cond_sta}
is nested under ``partial stationarity'' condition (Assumption \ref{assu:PartStat})
as a special case, where the endogenous covariate $X_{it}$ contains
no variables. Hence, ``full stationarity'' is in general stronger
than ``partial stationarity''. 
\end{rem}
\begin{rem}[Focus on Time-Varying Endogeneity]
 Technically, our partial stationarity condition also allows some
endogeneity between $\e_{it}$ and $Z_{i}$, as long as such endogeneity
is time-invariant. This is because Assumption \ref{assu:PartStat}
is stated conditional on the full vector $Z_{i}=\left(Z_{i1},...,Z_{iT}\right)$
and the time-invariant fixed effect $\a_{i}$. Hence, as long as the
conditional distribution of $\e_{it}$ depends on $Z_{i1},...,Z_{iT}$
and $\a_{i}$ in a time-invariant manner, the stationarity of $\e_{it}$
can still hold. That said, since in empirical applications we are
mostly interested in ``time-varying endogeneity'', such as the dynamic
and contemporaneous endogeneity discussed in the examples above, in this
paper we refer to $Z_{i}$ as ``exogenous'' even though it may be
endogenous in a time-invariant manner, and only call $X_{i}$, which
features time-varying endogeneity, the ``endogenous'' covariates. 
\end{rem}
\begin{rem}[Pairwise Version of Partial Stationarity]
\label{rem:PairPS} In Assumption \ref{assu:PartStat}, we impose
partial stationarity of $\e_{it}$ conditional on $Z_{it}$ from all
periods $t=1,...,T$. Alternatively, we could impose partial stationarity
in a ``pairwise'' version, conditional on $\left(Z_{it},Z_{is}\right)$
from any pair of time periods $\left(t,s\right)$ only: 
\begin{equation}
\text{\textbf{Pairwise Partial Stationarity}:\ensuremath{\quad}}\e_{it}\mid Z_{it},Z_{is},\a_{i}\stackrel{d}{\sim}\e_{is}\mid Z_{it},Z_{is},\a_{i},\quad\forall t,s=1,...,T.\label{eq:PartStat_ts}
\end{equation}
Clearly, the ``pairwise'' version is equivalent to the ``all-periods''
version when $T=2$, but is weaker when $T\geq3$. Our identification
strategy applies under both versions of partial stationarity, though
the identification results and the corresponding proofs have slightly
different representations. Essentially, conditioning on all-period
covariate realizations would be replaced with conditioning the realizations
in any specific pair of periods. See Remark \ref{rem:PairPS_prop1}
at the end of Section \ref{subsec:ind} for a follow-up discussion. 
\end{rem}
\begin{rem}[Initial Conditions in Dynamic Settings]
 \label{rem:initial_cond} In dynamic settings where $X_{it}$ includes lagged
outcome variables such as $Y_{i,t-1}$, the treatment of the initial
condition $Y_{i0}$ warrants some additional discussion. Our current
setup \eqref{eq:bin} treats $X_{it}$ (and the lagged outcome
variables involved) as \emph{observed}\footnote{If only $\left(Y_{i1},...,Y_{iT}\right)$ are observed,
we can truncate the time periods to satisfy this requirement. For
example, in the AR(1) setting, we can treat $Y_{i1}$ as the initial
condition $Y_{i0}$ and relabel periods 2 as period 1.} and \emph{endogenous}. However, one may consider alternative setups
where $Y_{i0}$ is treated as unobserved and/or exogenous. In Appendix
\eqref{subsec:Yi0}, we explain how our approach can be adapted to
such settings.
\end{rem}
\begin{rem}[Scalar Additivity]\label{rem:scalaradd}
 We work with the binary choice model \eqref{eq:bin} with scalar-additive
fixed effects $\a_{i}$ and error $\e_{it}$. This restriction is
unnecessary: We explain in Section \ref{sec:gene} that our identification
strategy does not rely at all on the scalar-additivity of $\a_{i}$
and $\e_{it}$. However, in this section we stick with the scalar-additive
representation \eqref{eq:bin}, since it is the most standard
specification (or notation) that is adopted in a wide body of work on
binary choice models. It thus provides a context in which most clearly
we can explain our partial stationarity condition in relation to
previous work. 
\end{rem}

\subsection{\label{subsec:ind} Key Identification Strategy}

We now explain our key identification strategy based on the partial
stationarity condition. Write $v_{it}:=-\left(\epsilon_{it}+\alpha_{i}\right)$
so that model \eqref{eq:bin} can be rewritten as 
\[
Y_{it}=\ind\left\{ v_{it}\leq W_{it}^{'}\t_{0}\right\} .
\]
For any constant $c\in\R$, consider first the event 
\[
Y_{it}=1\text{ and }W_{it}^{'}\t_{0}\leq c.
\]
Whenever the event above happens, we must have $v_{it}\leq W_{it}^{'}\t_{0}\leq c$,
implying that $v_{it}\leq c$. Formally, the above can be summarized
by the following inequality: 
\begin{equation}
\begin{aligned}Y_{it}\ind\left\{ W_{it}^{'}\t_{0}\leq c\right\}  & =\ind\left\{ v_{it}\leq W_{it}^{'}\t_{0}\right\} \ind\left\{ W_{it}^{'}\t_{0}\leq c\right\} \leq\ind\left\{ v_{it}\leq c\right\} \end{aligned}
\label{eq:vleqc_LB}
\end{equation}
Symmetrically, we can also consider the ``flipped'' event 
\[
Y_{it}=0\text{ and }W_{it}^{'}\t_{0}\geq c,
\]
which implies $v_{it}>c$:
\begin{align*}
\left(1-Y_{it}\right)\ind\left\{ W_{it}^{'}\t_{0}\geq c\right\} =\  & \ind\left\{ v_{it}>W_{it}^{'}\t_{0}\right\} \ind\left\{ W_{it}^{'}\t_{0}\geq c\right\} \\
\leq\  & \ind\left\{ v_{it}>c\right\} \equiv1-\ind\left\{ v_{it}\leq c\right\} 
\end{align*}
Rearranging the above, we have
\begin{equation}
\ind\left\{ v_{it}\leq c\right\} \leq1-\left(1-Y_{it}\right)\ind\left\{ W_{it}^{'}\t_{0}\geq c\right\} .\label{eq:vleqc_UB}
\end{equation}
Next, taking conditional expectations of \eqref{eq:vleqc_LB} and
\eqref{eq:vleqc_UB} given $Z_{i}=z$, we have
\begin{align}
\P\left(\rest{Y_{it}=1,\ W_{it}^{'}\t_{0}\leq c}z\right)\leq\  & \P\left(\rest{v_{it}\leq c}z\right)\nonumber \\
=\  & \P\left(\rest{v_{is}\leq c}z\right)\nonumber \\
\leq\  & 1-\P\left(\rest{Y_{is}=0,\ W_{is}^{'}\t_{0}\geq c}z\right)\label{eq:Bound_ts}
\end{align}
where ``$|z$'' is a shorthand for ``$Z_{i}=z$'' that we will
use throughout the paper. Note that the middle equality of \eqref{eq:Bound_ts}
follows from the partial stationarity condition (Assumption \ref{assu:PartStat}).\footnote{Specifically, observe that Assumption \ref{assu:PartStat} implies
the partial stationarity of $v_{it}$ given $Z_{i}$, since 
\begin{align*}
\P\left(\rest{\a_{i}+\e_{it}\leq c}z\right) & =\E\left[\rest{\P\left(\rest{\a_{i}+\e_{it}\leq c}z,\a_{i}\right)}\right]\\
 & =\E\left[\rest{\P\left(\rest{\a_{i}+\e_{is}\leq c}z,\a_{i}\right)}\right]=\P\left(\rest{\a_{i}+\e_{is}\leq c}z\right)
\end{align*}
for any $c$, and hence $v_{it}\mid Z_{i}\stackrel{d}{\sim}v_{is}\mid Z_{i}.$} Essentially, in the above we exploit the joint occurrence of $v_{it}\leq W_{it}^{'}\t_{0}$
and $W_{it}^{'}\t_{0}\leq c$ to deduce an implication on the composite
error $v_{it}\leq c$ that is free of the endogenous covariates $X_{it}$,
and then leverage the partial stationarity of $v_{it}$ given $Z_{i}$
for intertemporal comparisons.

Since the lower and upper bounds in \eqref{eq:Bound_ts} hold for
any $t$ and $s$, we summarize the identifying restrictions \eqref{eq:Bound_ts}
across all time periods in the following proposition.
\begin{prop}[Identified Set]
\label{prop:bin} Write 
\begin{align}
L_{t}\left(\rest cz,\t\right) & :=\P\left(\rest{Y_{it}=1,\ W_{it}^{'}\t\leq c}z\right)\nonumber \\
U_{t}\left(\rest cz,\t\right) & :=1-\P\left(\rest{Y_{it}=0,\ W_{it}^{'}\t\geq c}z\right)\label{eq:Lt_Ut}
\end{align}
and
\begin{equation}
\ol L\left(\rest cz;\t\right):=\max_{t=1,...,T}L_{t}\left(c|z;\t\right),\quad\ul U\left(\rest cz;\t\right):=\min_{t=1,...,T}U_{t}\left(c|z;\t\right),\label{eq:Lbar_Ubar}
\end{equation}
Define $\T_{I}$ as the set of $\t\in\R^{d_{w}}$ such that 
\begin{equation}
\ol L\left(\rest cz,\t\right)\leq\ul U\left(\rest cz,\t\right),\quad\forall c\in\R,\ \forall z\in{\cal Z}:=\text{Supp\ensuremath{\left(Z_{i}\right)}},\label{eq:ID_Set}
\end{equation}
Then, under model \eqref{eq:bin} and Assumption \ref{assu:PartStat},
$\t_{0}\in\T_{I}$.
\end{prop}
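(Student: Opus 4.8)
The plan is to read the proposition directly off the chain of inequalities \eqref{eq:Bound_ts} that has already been assembled, repackaging the per-pair bounds into the single envelope condition \eqref{eq:ID_Set}. First I would record that the pointwise set inclusions \eqref{eq:vleqc_LB} and \eqref{eq:vleqc_UB} are purely algebraic consequences of the representation $Y_{it}=\ind\left\{ v_{it}\leq W_{it}^{'}\t_{0}\right\}$ and hold for every realization, hence are preserved under the (monotone) conditional expectation given $Z_{i}=z$. Using $Y_{it}\in\{0,1\}$ to rewrite the conditional means as joint probabilities, this yields for each fixed $t$ and each $(c,z)$ the two-sided bound $L_{t}\left(\rest cz,\t_{0}\right)\leq\P\left(\rest{v_{it}\leq c}z\right)\leq U_{t}\left(\rest cz,\t_{0}\right)$, with $L_{t},U_{t}$ as defined in \eqref{eq:Lt_Ut}.

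The one step that requires genuine (if routine) care is the middle equality in \eqref{eq:Bound_ts}, namely $\P\left(\rest{v_{it}\leq c}z\right)=\P\left(\rest{v_{is}\leq c}z\right)$ for all $t,s$. Assumption \ref{assu:PartStat} is stated for $\e_{it}$ conditional on the pair $\left(Z_{i},\a_{i}\right)$, whereas here I need stationarity of the composite error $v_{it}=-\left(\e_{it}+\a_{i}\right)$ conditional on $Z_{i}$ alone. As in the footnote to \eqref{eq:Bound_ts}, this follows by conditioning additionally on $\a_{i}$, applying the stationarity of $\e_{it}\mid Z_{i},\a_{i}$ to equate $\P\left(\rest{\a_{i}+\e_{it}\leq c}z,\a_{i}\right)=\P\left(\rest{\a_{i}+\e_{is}\leq c}z,\a_{i}\right)$, and then integrating $\a_{i}$ out by the law of iterated expectations. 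This is the only place where the structural assumption enters.

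With the stationarity equality in hand, combining the lower bound at time $t$ with the upper bound at time $s$ gives $L_{t}\left(\rest cz,\t_{0}\right)\leq U_{s}\left(\rest cz,\t_{0}\right)$ for \emph{every} ordered pair $(t,s)\in\{1,\dots,T\}^{2}$ and every $(c,z)$. The crucial observation is that the index $t$ in the lower bound and the index $s$ in the upper bound range independently, so the inequality is preserved under maximizing the left-hand side over $t$ and, separately, minimizing the right-hand side over $s$. Since $T$ is finite the extrema are attained, and I obtain
\[
\ol L\left(\rest cz,\t_{0}\right)=\max_{t}L_{t}\left(\rest cz,\t_{0}\right)\leq\min_{s}U_{s}\left(\rest cz,\t_{0}\right)=\ul U\left(\rest cz,\t_{0}\right).
\]
As $c\in\R$ and $z\in{\cal Z}$ were arbitrary throughout, this is precisely the defining inequality \eqref{eq:ID_Set}, so $\t_{0}\in\T_{I}$.

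I do not expect a serious obstacle here: the substantive content lives in the already-derived bound \eqref{eq:Bound_ts}, and the proposition is essentially a bookkeeping step that decouples the two time indices. If anything deserves emphasis, it is (i) the iterated-expectation translation of stationarity from the $\e_{it}$-level to the $v_{it}$-level, and (ii) making explicit that taking the max over $t$ and the min over $s$ independently is legitimate precisely because \eqref{eq:Bound_ts} holds for all pairs separately rather than only on the diagonal $t=s$. The genuinely hard direction --- that $\T_{I}$ is also \emph{sharp}, i.e.\ that any $\t$ satisfying \eqref{eq:ID_Set} is observationally equivalent to $\t_{0}$ --- is the converse inclusion and is not claimed by this proposition.
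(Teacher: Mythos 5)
Your proof is correct and follows essentially the same route as the paper: the pointwise inequalities \eqref{eq:vleqc_LB}--\eqref{eq:vleqc_UB}, conditional expectation given $Z_{i}=z$, the iterated-expectation transfer of stationarity from $\e_{it}\mid Z_{i},\a_{i}$ to $v_{it}\mid Z_{i}$ (exactly the paper's footnote argument), and the decoupled max-over-$t$/min-over-$s$ step yielding \eqref{eq:ID_Set}. Nothing is missing, and your emphasis on the two delicate points --- the $\a_{i}$-integration and the off-diagonal pairs $(t,s)$ --- matches precisely how the paper assembles \eqref{eq:Bound_ts} into Proposition \ref{prop:bin}.
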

\begin{rem}
We note that, once conditioned on $z\equiv\left(z_{1},...,z_{T}\right)$,
the randomness in $W_{it}^{'}\t=z_{t}^{'}\b+X_{it}^{'}\g$ lies purely
in $X_{it}$ given $z$, and thus it is equivalent to write 
\[
L_{t}\left(\rest cz,\t\right):=\P\left(\rest{Y_{it}=1,\ z_{t}^{'}\b+X_{it}^{'}\g\leq c}z\right)
\]
and similarly for $U_{t}$. We will continue to use the notation $W_{it}^{'}\t_{0}$
for simplicity, but would like to emphasize this degeneracy of $Z_{it}^{'}\b$
given $Z_{i}=z.$ In particular, this means that $z_{t}^{'}\b$ can
be ``absorbed'' into the constant $c$, in a sense that will become
clearer below.
\end{rem}
Proposition \ref{prop:bin} characterizes the identified set $\T_{I}$
for $\t_{0}$ as restrictions on the conditional joint distribution
of $Y_{it}$ and $X_{it}$ given $z$. More specifically, the restrictions
in \eqref{eq:ID_Set} can be regarded as a collection of conditional
moment inequalities that relate $\ind\left\{ Y_{it}=1,\ W_{it}^{'}\t\leq c\right\} $
and $\ind\left\{ Y_{it}=0,\ W_{it}^{'}\t\geq c\right\} $ conditional
on $z$.

Proposition \ref{prop:bin} holds regardless of whether the endogenous
covariates $X_{it}$ are discrete or continuous. When $X_{it}$ are
continuous (taking a continuum of values), then Proposition \ref{prop:bin}
requires that condition \eqref{eq:ID_Set} hold for a continuum of
constants $c\in\R$, so that (the information in) the whole joint
distribution of the binary variable $Y_{it}$ and the continuous variable
$W_{it}^{'}\t=z_{t}^{'}\b+X_{it}^{'}\g$ can be captured by the collection
of joint distributions of $\left(Y_{it},\ind\left\{ W_{it}^{'}\t\leq c\right\} \right)$
across all possible cutoff values $c$.

However, when $X_{it}$ are discrete, such as in the AR$\left(p\right)$
dynamic model where $X_{it}$ consists of $p$ lagged binary outcome
variables, there is no need to evaluate \eqref{eq:ID_Set} at all
possible values of $c\in\R$, since the inequalities in \eqref{eq:ID_Set}
can only bind at finitely many values of $c$. We formalize this observation
via the following Proposition. 
\begin{prop}[Identified Set with Discrete Endogenous Covariates]
 \label{prop:disc} Suppose that the endogenous covariate $X_{it}$
can only take finite number of values in $\left\{ \ol x_{1},...,\ol x_{K}\right\} $
across all time periods $t=1,...,T$. Then $\T_{I}=\T_{I}^{disc},$
where $\T_{I}^{disc}$ consists of all $\t=\left(\b^{'},\g^{'}\right)^{'}\in\R^{d_{z}}\times\R^{d_{x}}$
that satisfy condition \eqref{eq:ID_Set} for any 
\begin{equation}
c\in\left\{ z_{t}^{'}\b+\ol x_{k}^{'}\g:k=1,...,K,t=1,...,T\right\} ,\label{eq:c_KTset}
\end{equation}
and for any $z\in{\cal Z}$. 
\end{prop}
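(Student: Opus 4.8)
The inclusion $\T_I\subseteq\T_I^{disc}$ is immediate: $\T_I^{disc}$ only imposes the inequality \eqref{eq:ID_Set} at a subset of cutoffs, so any $\t$ satisfying \eqref{eq:ID_Set} for all $c\in\R$ a fortiori satisfies it on $\cC\left(z\right):=\left\{ z_{t}^{'}\b+\ol x_{k}^{'}\g:k=1,...,K,\ t=1,...,T\right\}$. The entire content of the proposition is therefore the reverse inclusion $\T_I^{disc}\subseteq\T_I$, i.e.\ that checking finitely many cutoffs suffices.

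The plan is to exploit the step-function structure of $c\mapsto L_{t}\left(\rest cz,\t\right)$ and $c\mapsto U_{t}\left(\rest cz,\t\right)$. First I would observe that, conditional on $Z_{i}=z$, the index $W_{it}^{'}\t=z_{t}^{'}\b+X_{it}^{'}\g$ takes values only in the finite set $\left\{ z_{t}^{'}\b+\ol x_{k}^{'}\g:k=1,...,K\right\} \subseteq\cC\left(z\right)$. Consequently $L_{t}\left(\rest{\cd}z,\t\right)$ is a non-decreasing, right-continuous step function of $c$ whose jumps occur only at points of $\cC\left(z\right)$; hence $\ol L\left(\rest{\cd}z,\t\right)=\max_{t}L_{t}$ inherits these properties, and in particular is constant on each half-open interval $\left[c_{(m)},c_{(m+1)}\right)$ determined by the sorted distinct elements $c_{(1)}<\cdots<c_{(M)}$ of $\cC\left(z\right)$. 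Second, I would note that $U_{t}\left(\rest{\cd}z,\t\right)$ is non-decreasing in $c$ (as $c$ grows the event $\left\{ W_{it}^{'}\t\geq c\right\}$ shrinks), so $\ul U\left(\rest{\cd}z,\t\right)=\min_{t}U_{t}$ is non-decreasing as well.

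With these monotonicity and continuity facts in hand, the reverse inclusion follows by a rounding-down argument. Fix $\t\in\T_I^{disc}$, $z\in{\cal Z}$, and an arbitrary $c\in\R$. If $c<c_{(1)}$ then the event $\left\{ W_{it}^{'}\t\leq c\right\}$ is empty given $z$, so $\ol L\left(\rest cz,\t\right)=0\leq\ul U\left(\rest cz,\t\right)$ trivially. Otherwise set $\bar c:=\max\left\{ c^{\prime}\in\cC\left(z\right):c^{\prime}\leq c\right\}$. Because $\ol L$ is constant on the interval with left endpoint $\bar c$ containing $c$, we have $\ol L\left(\rest cz,\t\right)=\ol L\left(\rest{\bar c}z,\t\right)$; because $\bar c\in\cC\left(z\right)$ and $\t\in\T_I^{disc}$, the verified inequality gives $\ol L\left(\rest{\bar c}z,\t\right)\leq\ul U\left(\rest{\bar c}z,\t\right)$; and because $\ul U$ is non-decreasing with $\bar c\leq c$, we have $\ul U\left(\rest{\bar c}z,\t\right)\leq\ul U\left(\rest cz,\t\right)$. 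Chaining these yields $\ol L\left(\rest cz,\t\right)\leq\ul U\left(\rest cz,\t\right)$. Since $c$ and $z$ were arbitrary, $\t\in\T_I$, completing the reverse inclusion. (Note the case $c\geq c_{(M)}$ needs no separate treatment: it is absorbed by the same argument with $\bar c=c_{(M)}$, since $\ol L$ is constant beyond $c_{(M)}$.)

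The step I expect to require the most care is pinning down the correct one-sided continuity. The argument works precisely because $\ol L$ is right-continuous, so that its value on $\left[c_{(m)},c_{(m+1)}\right)$ equals the value at the left endpoint $c_{(m)}\in\cC\left(z\right)$ — exactly a point at which the inequality has been checked — and because $\ul U$ only increases as $c$ moves right. The genuine crux is thus verifying that the cutoffs $\cC\left(z\right)$ capture the worst-case comparisons: had the continuity directions been reversed, the binding cutoff would sit at the right endpoint and one would round up instead. I would double-check the jump directions of $L_{t}$ and $U_{t}$ from their definitions in \eqref{eq:Lt_Ut} and confirm that the only value of $c$ needing separate (and trivial) handling is $c<c_{(1)}$, where $\ol L\equiv0$.
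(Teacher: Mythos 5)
Your proposal is correct and follows essentially the same route as the paper's proof: both reduce an arbitrary $c$ to the largest checked cutoff below it, using that $\ol L\left(\rest{\cd}z,\t\right)$ is a right-continuous step function constant between consecutive points of the cutoff set and that $\ul U\left(\rest{\cd}z,\t\right)$ is non-decreasing (the paper phrases this as event equivalences $\left\{ W_{it}^{'}\t\leq c\right\} =\left\{ W_{it}^{'}\t\leq\ol c_{j}\right\}$ and $\left\{ W_{is}^{'}\t\geq c\right\} =\left\{ W_{is}^{'}\t\geq\ol c_{j+1}\right\}$ for $\ol c_{j}<c<\ol c_{j+1}$, which is the same mechanism). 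The only cosmetic difference is that the paper handles $c>\ol c_{KT}$ as a separate trivial case while you absorb it into the rounding-down step, which is equally valid.
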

Proposition \ref{prop:disc} shows that the discreteness of the endogenous
covariates $X_{it}$ help reduce the infinite number of inequality
restrictions in Proposition \ref{prop:bin} to finitely many, or more
precisely, $KT$ ones (conditional on $z$).

The case of discrete $X_{it}$ is conceptually important, since it
nests the dynamic AR$\left(p\right)$ model widely studied in the
literature as a special case. Clearly, when $X_{it}$ consists of
$p$ (finitely many) lagged binary outcome variables $Y_{i,t-1},...,Y_{i,t-p}$,
then $X_{it}$ by construction can only take $K=2^{p}$ discrete values.
Specialized further to the AR(1) model in KPT, Proposition \ref{prop:disc}
shows that the identified set $\T_{I}$ is characterized by $2T$
conditional restrictions, which is drastically smaller than the $9T\left(T-1\right)$
conditional restrictions listed in KPT (even when $T$ is small).
\begin{rem}
\label{rem:PairPS_prop1} Following up on Remark \ref{rem:PairPS},
if pairwise partial stationarity is adopted, then Propositions \ref{prop:bin}
and \ref{prop:disc} continue to hold with \eqref{eq:ID_Set} adapted
to the following ``pairwise'' version: 
\begin{equation}
\P\left(\rest{Y_{it}=1,\ W_{it}^{'}\t\leq c}z_{ts}\right)\leq1-\P\left(\rest{Y_{is}=0,\ W_{is}^{'}\t\geq c}z_{ts}\right),\label{eq:ID_Set_PS}
\end{equation}
for all $(t,s)$, where ``$|z_{ts}$'' denotes conditioning on the
event $\left(Z_{it},Z_{is}\right)=\left(z_{t},z_{s}\right)=:z_{ts}$.
Relative to \eqref{eq:ID_Set}, the statement in \eqref{eq:ID_Set_PS}
reflects the fact that pairwise partial stationarity is imposed on
all pairs of time periods separately instead of all $T$ time periods
jointly. It is straightforward to verify that the identification arguments
above, in particular \eqref{eq:vleqc_LB}-\eqref{eq:Bound_ts}, carry
over with all conditional probabilities/expectations taken conditional
on $z_{ts}$ instead of $z$. 
\end{rem}

\subsection{\label{subsec:Sharp}Sharpness}

So far we have only shown that $\T_{I}$ is a valid identified set
for $\t_{0}$. However, it is not yet clear whether it has incorporated
all the available information for $\t_{0}$ under the current model
specification. We now proceed to establish the sharpness of $\T_{I}$
under appropriate conditions.

We start with the discrete case where the support of $X_{it}$ is
assumed to be finite. Remarkably, the sharpness of our identified
set can be established without any additional assumptions in this
case.
\begin{thm}[Sharpness: Discrete Case]
\label{thm:sharp_disc} Suppose that $X_{it}$ only takes finitely
many values for each $t$. Then, under model \eqref{eq:bin} and
Assumption \ref{assu:PartStat}, the identified set $\T_{I}^{disc}$
is sharp. 
\end{thm}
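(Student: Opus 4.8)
The plan is to establish sharpness by showing that every $\t=(\b',\g')'\in\T_{I}^{disc}$ is observationally equivalent to a data-generating process satisfying model \eqref{eq:bin} and Assumption \ref{assu:PartStat}. By Proposition \ref{prop:disc}, $\T_{I}^{disc}=\T_{I}$, so any such $\t$ satisfies $\ol L(c|z,\t)\le\ul U(c|z,\t)$ for every $c$ and every $z\in\mathcal Z$. Working with the composite error $v_{it}=-(\e_{it}+\a_{i})$, model \eqref{eq:bin} reads $Y_{it}=\ind\{v_{it}\le W_{it}'\t\}$, and it suffices to construct, for each $z$, a conditional law of $v_{i}=(v_{i1},\dots,v_{iT})$ given the observables such that (a) the observed joint law of $(Y_{i},X_{i})\mid z$ is preserved, (b) $Y_{it}=\ind\{v_{it}\le z_{t}'\b+X_{it}'\g\}$ holds almost surely, and (c) the per-period marginal of $v_{it}\mid z$ does not depend on $t$. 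Having done so, I would set $\a_{i}\equiv0$ and $\e_{it}:=-v_{it}$, so that Assumption \ref{assu:PartStat} reduces exactly to (c), while the arbitrary $\e$--$X$ dependence generated by the construction is left unrestricted, matching the fact that Assumption \ref{assu:PartStat} imposes nothing on the endogenous covariates.

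First I would single out a common target CDF. Because $L_{t}(\cdot|z,\t)$ and $U_{t}(\cdot|z,\t)$ are nondecreasing with $L_{t}\le\ol L\le\ul U\le U_{t}$, the condition $\ol L(c|z,\t)\le\ul U(c|z,\t)$ for all $c$ guarantees the existence of a valid one-dimensional CDF $F(\cdot|z)$ sandwiched as $\ol L(c|z,\t)\le F(c|z)\le\ul U(c|z,\t)$ for all $c$; concretely, one may take $F=\ol L$ up to the largest realized threshold and let it rise to $1$ in the tail, which stays below $\ul U$ since $\ul U\equiv1$ beyond all (finitely many) thresholds in the discrete case. This $F$ serves as the common marginal demanded by (c), and this step is exactly where membership $\t\in\T_{I}=\T_{I}^{disc}$ is used.

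The heart of the argument is the per-period construction. For each fixed $t$ and $z$, since $L_{t}(\cdot|z,\t)\le F(\cdot|z)\le U_{t}(\cdot|z,\t)$, I must allocate the probability mass of $F(\cdot|z)$ among the observed cells $\{(Y_{it},X_{it})=(y,x)\}$ so that mass labeled $(1,x)$ lies in $(-\infty,z_{t}'\b+x'\g]$, mass labeled $(0,x)$ lies in $(z_{t}'\b+x'\g,\infty)$, each cell $(y,x)$ receives exactly its observed probability $\P(Y_{it}=y,X_{it}=x\mid z)$, and the labeled pieces reassemble into $F(\cdot|z)$. The two Hall-type feasibility requirements for this allocation---that the cells forced below any cutoff $c$ carry mass at most $F(c|z)$, and that the cells forced above $c$ carry mass at most $1-F(c|z)$---translate precisely into $L_{t}(c|z,\t)\le F(c|z)$ and $F(c|z)\le U_{t}(c|z,\t)$, which hold by construction of $F$. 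I would realize the allocation by an explicit monotone ``filling'' of the quantile axis of $F$, processing thresholds from left to right and assigning each band of $F$-mass a cell label consistent with its location; this yields the conditional laws of $v_{it}$ given $(Y_{it},X_{it},z)$ directly and is the reusable device on which the three-step proof is built.

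Finally, I would assemble the all-period law by coupling the $T$ per-period constructions, e.g.\ taking $v_{i1},\dots,v_{iT}$ conditionally independent given $(Y_{i},X_{i},Z_{i})$. Since partial stationarity constrains only the one-dimensional marginals, any such coupling is admissible: each $v_{it}$ carries the common marginal $F(\cdot|z)$, giving (c); consistency (b) holds period by period by the support constraints; and the observables' law (a) is untouched. I expect the main obstacle to be the per-period allocation step, i.e.\ producing an explicit splitting of $F$ into cell pieces on the correct half-lines that is simultaneously valid for every cutoff $c$; the pointwise sandwich $L_{t}\le F\le U_{t}$ is clearly necessary, and the real work is in turning it into a single global construction rather than a family of cutoff-by-cutoff feasibility checks.
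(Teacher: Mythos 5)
Your proposal is correct in substance and, perhaps without your realizing it, shares the paper's two-lemma architecture; the genuine difference lies in how the per-period step is executed. The coupling step is in fact identical to the paper's: the explicit joint pmf \eqref{eq:p*cw} in Lemma \ref{lem:SharpJoint} is precisely your conditional-independence coupling of $v_{i1},\dots,v_{iT}$ given $\left(Y_{i},W_{i}\right)$, since each factor $f_{t}^{*}\left(\rest{c_{t}}w\right)/\left[p_{t}\left(w\right)^{y_{t}\left(c_{t}\right)}\left(1-p_{t}\left(w\right)\right)^{1-y_{t}\left(c_{t}\right)}\right]$ is just the law of $v_{it}$ conditional on $Y_{it}=y_{t}\left(c_{t}\right)$; your observation that joint CCP matching then follows in one line from the support constraints replaces the paper's Step 4 computation, and your marginal bookkeeping matches its Steps 2--3. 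Where you diverge is the analogue of Lemma \ref{lem:SharpMarginal}: the paper builds the disintegration explicitly by mixing two extreme conditional CDFs $F_{t}^{L}$ and $F_{t}^{U}$ at grid points of ${\cal C}$, with weights chosen so the $z$-aggregate equals $\ol L$; you instead pose a finite transportation problem---split the common CDF $F$ (your choice, $\ol L$ extended to $1$ strictly past the largest threshold, is exactly the paper's \eqref{eq:PS_vt} with its $\d$ offset) into cell masses $\P\left(\rest{Y_{it}=y,X_{it}=x}z\right)$ supported on $(-\infty,z_{t}^{'}\b+x^{'}\g]$ or $(z_{t}^{'}\b+x^{'}\g,\infty)$---and justify feasibility by a Hall-type condition. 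That claim is right, and the obstacle you flag is routine rather than substantive in the discrete case: because the allowed supports are nested half-lines, every relevant cut set has the form $(-\infty,a]\cup(b,\infty)$, and the cut conditions reduce to $L_{t}\left(a\right)\leq F\left(a\right)$ together with $1-U_{t}\left(b\right)\leq1-F\left(b\right)$, which your sandwich supplies; with finitely many cells and a finitely supported $F$, this is finite max-flow/min-cut, and a greedy filling (assigning $Y=1$ cells from the bottom in increasing threshold order and $Y=0$ cells from the top in decreasing order) realizes it---indeed the paper's $F_{t}^{L}/F_{t}^{U}$ mixtures can be read as one explicit solution of your transport problem. Two minor remarks: conditioning the per-period law on $\left(Y_{it},X_{it},z\right)$ only, rather than the full $w$, is a harmless simplification, since your coupling takes the observed law of $\left(Y_{i},X_{i}\right)\mid z$ as given, so observational equivalence conditional on $w$ holds regardless; and what the paper's more explicit route buys is that it adapts directly to the continuous case of Theorem \ref{thm:sharp_cts} via inverse functions, whereas your flow-feasibility framing would there require a measure-theoretic Hall-type theorem.
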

The formal definition of sharpness, along with the complete proof
of Theorem \ref{thm:sharp_disc}, are available in Appendix \ref{subsec:pf_sharp_disc}.
In short, we show (by construction) that, for each $\t\in\T_{I}\backslash\left\{ \t_{0}\right\} $,
there exists a data generating process (DGP) that satisfies Assumption
\ref{assu:PartStat} and produces the same joint distribution of observable
data $\left(Y_{i},W_{i}\right)$ under model \eqref{eq:bin} with
parameter $\t$. Theorem \ref{thm:sharp_disc} demonstrates that our
key identification strategy based on the bounding of (endogenous)
parametric index by arbitrary constants, as described in Section \ref{subsec:ind},
is able to extract all the available information for $\t_{0}$ from
the model and the observable data, and thus it is impossible to further
differentiate $\t_{0}$ from alternatives in the identified set $\T_{I}$
under model \eqref{eq:bin} and our assumption of partial stationarity
(without further restrictions).

Theorem \ref{thm:sharp_disc} immediately implies that, in the special
case of dynamic AR$\left(p\right)$ models where $X_{it}$ consists
of discrete lagged outcomes, our characterization of the identified
set $\T_{I}^{disc}$ in Proposition \ref{prop:disc} is sharp. In
particular, our result generalizes the corresponding result in KPT,
which focuses on the AR(1) model. Furthermore, KPT characterizes the
sharp identified set via $9T\left(T-1\right)$ conditional restrictions,
the derivation of which is based on an exhaustive enumeration of lagged
outcome realizations $Y_{i,t-1}$. In this paper we adopt an entirely
different (and much more general) identification strategy, and arrive
at a characterization of the identified set by $2T$ conditional restrictions,
which we also show to be sharp by Theorem \ref{thm:sharp_disc}. Since
our model and assumption specialize exactly to that in KPT under the
AR(1) specification, it follows that our $2T$ restrictions must be
able to reproduce all the $9T\left(T-1\right)$ restrictions in KPT.
This demonstrates that our identification strategy not only applies
more generally than the one in KPT, but also leads to a more elegant
characterization of the sharp identified set with much fewer restrictions.
We provide a more detailed explanation about this point in the next
subsection.

Another conceptually remarkable, or surprising, feature of Proposition
\ref{prop:disc} and Theorem \ref{thm:sharp_disc} is that they are
established without reference to the exact nature, or interpretation,
of the endogenous covariates $X_{it}$. The identified set $\T_{I}$
we characterized is valid and sharp regardless of whether $X_{it}$
are specified as lagged outcome variables, contemporaneously endogenous
covariates, or a combination of both.

Our proof of sharpness consists of two main steps. First, we show
for each $\t\in\T_{I}\backslash\left\{ \t_{0}\right\} $ how to construct
the per-period marginal distributions of errors that match the per-period
marginal choice probabilities. Second, we show how to combine the
$T$ per-period marginal distributions into an all-period joint distribution
that matches the all-period joint choice probabilities, so that observational
equivalence holds.

The proof techniques we exploited are also different from, and thus
novel relative to, those used in the related work that leverages stationarity-type
conditions for partial identification, such as \citet*{pakes2019}
for static multinomial choice model and KPT for dynamic AR(1) model.
Instead of showing existence only, we provide a more explicit construction
of the joint distribution of the latent variables, which is valid
regardless of the exact type of endogeneity in $X_{it}$. In particular,
a key challenge in proving sharpness based on stationarity-type conditions
lies in that stationarity imposes only aggregate restrictions (via
integrals/sums) of the joint distribution of errors, which is rather
implicit to work with. A key innovation in our proof technique is
to show how marginal/aggregate stationarity restrictions and joint
choice probability restrictions can be satisfied simultaneously by
an explicit, simple and general construction, which might be of independent
and wider use.

~

Next, we seek to establish the sharpness of our identification set
in the case where certain or all components of $X_{it}$ may be continuous.
Below we present an additional set of regularity conditions for the
continuous case and the corresponding sharpness result, followed by
a discussion of the conditions and the result.
\begin{assumption}[Regularity Conditions for the Continuous Case]
 \label{assu:Cts}Suppose that: 
\begin{itemize}
\item[(a)] $\rest{W_{it}^{'}\t_{0}}z$ is continuously distributed with strictly
positive density on a bounded interval support for each $t$. 
\item[(b)] $\P\left(\rest{Y_{it}=1}W_{i}=w\right)\in\left(0,1\right)$ for each
$t$. 
\item[(c)] $\ol L\left(\rest cz,\t_{0}\right)=\ul U\left(\rest cz,\t_{0}\right)$
only for $c$'s in a set of Lebesgue measure $0$.
\end{itemize}
\end{assumption}
\begin{thm}[Sharpness: Continuous Case]
\label{thm:sharp_cts} Let $\T_{I}^{cts}$ be the set of $\t$ such
that model \eqref{eq:bin}, Assumptions \ref{assu:PartStat} and
\ref{assu:Cts} all hold with $\t$ in lieu of $\t_{0}$. Then $\T_{I}^{cts}$
is sharp. 
\end{thm}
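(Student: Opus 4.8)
The plan is to establish sharpness by an explicit construction: I read the claim as asserting that the moment inequalities of Proposition \ref{prop:bin} are not only necessary (which they are, by the argument preceding Proposition \ref{prop:bin}) but also \emph{sufficient} under the regularity of Assumption \ref{assu:Cts}, so that any $\t=(\b',\g')'$ satisfying $\ol L\le\ul U$ and compatible with Assumption \ref{assu:Cts} admits a latent data generating process (DGP) reproducing the observed joint law of $(Y_i,W_i)$. As in the proof of Theorem \ref{thm:sharp_disc}, I would first pass to the composite error $v_{it}:=-(\a_i+\e_{it})$, under which the model reads $Y_{it}=\ind\{v_{it}\le W_{it}'\t\}$. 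Taking $\a_i\equiv 0$, Assumption \ref{assu:PartStat} is equivalent to requiring that the conditional law of $v_{it}\mid Z_i=z$ be the same across $t$; it therefore suffices to construct a joint law of $(v_{i1},\dots,v_{iT})$ given $(Z_i,X_i)$ that has a common marginal given $z$ and matches the observed conditional choice probabilities.

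The construction proceeds in three layers. First (the stationary target), for each $z$ I would exhibit a single CDF $F(\cdot\mid z)$ with $\ol L(c\mid z,\t)\le F(c\mid z)\le \ul U(c\mid z,\t)$ for all $c$. Such an $F$ exists precisely because $\t$ satisfies $\ol L\le\ul U$: both bounds are monotone in $c$ with the correct tail limits $0$ and $1$, and Assumption \ref{assu:Cts}(a) (continuous index with positive density on a bounded support) makes them continuous, so $F$ may be chosen continuous and strictly increasing. Since $[\ol L,\ul U]\subseteq[L_t,U_t]$ for every $t$, this single $F$ lies in the feasible band of every period at once, which is exactly what will force stationarity. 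Assumption \ref{assu:Cts}(c) gives $\ol L<\ul U$ off a null set, providing strict slack almost everywhere, while Assumption \ref{assu:Cts}(b) keeps the per-cell choice probabilities $p_{tx}:=\P(Y_{it}=1\mid z,x)$ interior.

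Second (per-period realization, the crux), for each $t$ I would construct conditional laws $H_{x,t}(\cdot):=\P(v_{it}\le\cdot\mid z,x)$ that place mass exactly $p_{tx}$ below the cell-specific cutoff $c_{tx}:=z_t'\b+x_t'\g$ and that aggregate back to the common target, $\int H_{x,t}\,dP(x\mid z)=F(\cdot\mid z)$. This is feasible at each level $c$ because the attainable range of the aggregate below-$c$ mass, subject to the single pointwise constraint $H_{x,t}(c_{tx})=p_{tx}$, is exactly $[L_t(c\mid z,\t),U_t(c\mid z,\t)]$, inside which $F(c\mid z)$ sits; turning this pointwise feasibility into one coherent family of CDFs is where I would deploy an explicit monotone coupling of the ``free'' mass $F-L_t$, well defined and exact thanks to the continuity in Assumption \ref{assu:Cts}(a). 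Third (the joint coupling), within each cell $(z,x)$ I would assemble the $T$ marginals by mixing over sign patterns: draw $y\in\{0,1\}^T$ with observed weight $\P(Y_i=y\mid z,x)$, then draw each $v_{it}$ from $H_{x,t}$ restricted below $c_{tx}$ if $y_t=1$ and above if $y_t=0$. Because the observed marginals coincide with $p_{tx}$, this preserves each $H_{x,t}$, reproduces the full joint cell probabilities, and---after averaging over $x\mid z$---preserves the common marginal $F$, hence stationarity.

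Finally I would verify that the constructed DGP satisfies model \eqref{eq:bin} with $\t$, Assumption \ref{assu:PartStat} (via $\a_i\equiv0$ and the common marginal $F$), and Assumption \ref{assu:Cts} (continuity and positive density of $F$ by construction), while inducing exactly the observed law of $(Y_i,W_i)$; this exhibits $\t$ as observationally equivalent and establishes sharpness of $\T_I^{cts}$. I expect the main obstacle to be the second layer: reconciling the aggregate stationarity constraint (a condition on integrals over a continuum of cells $x$) with the per-cell choice-probability constraints, simultaneously for all cutoff values, through a single coherent and explicit family $\{H_{x,t}\}$. This is the continuum analogue of the ``simultaneous matching'' innovation used for the discrete case, and it is precisely where the regularity conditions earn their place---continuity and positive density in (a) remove the atoms and ties that would otherwise obstruct the coupling, interiority in (b) rules out degenerate cells, and the null binding set in (c) supplies the almost-everywhere slack that makes the monotone allocation exact.
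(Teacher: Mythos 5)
Your architecture mirrors the paper's: a common stationary target CDF given $z$, per-period conditional laws matching the CCP at the index cutoff while aggregating to the target, and a joint coupling that mixes over outcome paths $y$ with weight $\P\left(Y_i=y\mid z,x\right)$ and places each $v_{it}$ below or above the cutoff according to $y_t$ (your Layer 3 is exactly the paper's marginal-to-joint lemma, where the product form $p\left(y\mid w\right)\prod_t f_t^*\left(\rest{c_t}w\right)/\bigl(p_t(w)^{y_t}(1-p_t(w))^{1-y_t}\bigr)$ makes the restriction-and-mix idea precise). You also assign the correct roles to Assumptions \ref{assu:Cts}(a)--(c). However, there is a genuine gap at exactly the point you flag as the crux: Layer 2 is asserted, not constructed. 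Pointwise feasibility --- that at each fixed $c$ the attainable aggregate mass below $c$, subject to $H_{x,t}(c_{tx})=p_{tx}$, is the interval $\left[L_t(c),U_t(c)\right]$ --- does not by itself yield a single measurable family of CDFs $\{H_{x,t}\}$ satisfying the aggregation constraint at \emph{all} $c$ simultaneously; that simultaneous solvability is the substance of the theorem, and "an explicit monotone coupling of the free mass $F-L_t$" names the goal without supplying the map. The paper resolves it by quantile transport composed with two-point auxiliary CDFs: taking the target $F=\ol L$, it sets $F_t^*\left(\rest cw\right)=F_t^L\left(\rest{\psi(c)}w\right)$ with $\psi=L_t^{-1}\circ\ol L$ when $\ol L(c)<q^*:=L_t\left(\ol c_t\right)$, and $F_t^*\left(\rest cw\right)=F_t^U\left(\rest{\psi(c)+\d}w\right)$ with $\psi=U_t^{-1}\circ\ol L$ otherwise; the inequalities $L_t\leq\ol L\leq U_t$ guarantee $\psi(c)$ lands on the correct side of $c$, which is precisely what delivers $F_t^*\left(\rest{w_t'\t}w\right)=p_t(w)$, and monotonicity of $F_t^*\left(\rest{\cd}w\right)$ has to be (and is) checked across the two regimes.

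A second, related imprecision: you claim the construction matches CCPs everywhere, but on the contact set $\{c:\ol L(c)=U_t(c)\}$ matching provably fails in the paper's construction ($\psi(c)=c$ forces $F_t^*\left(\rest{w_t'\t}w\right)=1\neq p_t(w)$); the proof survives only because Assumption \ref{assu:Cts}(c) makes this set Lebesgue-null and Assumption \ref{assu:Cts}(a) (continuously distributed index) converts it into a $\P_{\rest{W_i}z}$-null set of $w$, so observational equivalence holds almost surely --- your "a.e.\ slack" remark gestures at this but the argument needs to be run in this order. Two smaller points: your final verification that the constructed DGP satisfies Assumption \ref{assu:Cts} "by construction" conflates matters --- (a)--(c) are conditions on observables and $\t$, not on the latent law, so nothing about $F$ needs positive density (indeed the paper's per-$w$ latent laws are finitely supported discrete distributions, which is what lets the discrete-case joint lemma apply verbatim); and your extra generality in Layer 1 (any strictly increasing $F$ in the band rather than $F=\ol L$) buys nothing and slightly obscures why the transport maps are well defined.
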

Theorem \ref{thm:sharp_cts} establishes the sharpness of our identification
set under the additional regularity conditions imposed in Assumption
\ref{assu:Cts}. The proof, presented in Appendix \ref{subsec:pf_sharp_cts},
follows the general construction strategy used in the discrete-case
proof, with some key adaptions to handle several continuity and measure-zero
issues arising in the continuous case. Such adaptions utilize the
conditions in Assumption \ref{assu:Cts}, which we now explain in
more details.

Assumption \ref{assu:Cts}(a) can be effectively regarded as a setup
of the continuous-case model. In our current context, given $z$,
the induced index $W_{it}^{'}\t_{0}=z_{t}^{'}\b_{0}+X_{it}^{'}\g_{0}$
is what enter most directly into our model, rather than $X_{it}$
per se. Part of Assumption \ref{assu:Cts}(a) states that $W_{it}^{'}\t_{0}$
is continuously distributed on a bounded interval, which can be satisfied
with various lower-level conditions on $X_{it}$. For example, if
$X_{it}|z$ is continuously distributed on a bounded and connected
support with nonempty interior, and if $\g_{0}$ is restricted to
lie within a bounded set (which can be imposed as a scale normalization
without loss of generality), then $X_{it}^{'}\g_{0}|z$ is continuously
distributed on a bounded connected interval. If in addition $X_{it}|z$
is assumed to have a density that is strictly positive (almost) everywhere
on its support, then the induced density of $X_{it}^{'}\g_{0}|z$
will also be (almost) everywhere strictly positive. Note also Assumption
\ref{assu:Cts}(a) may also be satisfied if some (but not all) components
of $X_{it}$ are discrete, as long as some other component(s) of $X_{it}$
is continuously distributed with nonzero coefficient and a sufficiently
large support.

Assumption \ref{assu:Cts}(b), along with the assumptions of connectedness
(interval representation of the support) and strictly positive densities
(strictly increasing CDFs) for $X_{it}^{'}\g_{0}|z$ in Assumption
\ref{assu:Cts}(a), are imposed mainly as simplifying restrictions
that are not conceptually necessary but allow for a more convenient
notation. Essentially, they jointly imply that the per-period CCPs
on the left-hand and right-hand sides of \eqref{eq:ID_Set} are continuous
and strictly increasing in $c$ on connected intervals, leading to
simpler notation in the proof via the use of the inverse function
and the intermediate value theorem. Without these conditions, we would
need to handle ``flat regions'', ``jump points'', and ``continuously
increasing regions'' separately and then combine them together to
produce the final result, which  should be achievable using
a  combination of the proof techniques in the discrete case
and the continuous case.

Assumption \ref{assu:Cts}(c) is a key condition for the validity
of our adapted construction in the continuous case, but it is admittedly
the most nonstandard and implicit one, which warrants further explanation.
Effectively, Assumption 2(c) rules out certain ``knife-edge'' degenerate
DGPs that result in a ``flat region of contact'' between $\ol L$
and $\ul U$, though the exact form of such degeneracy can be rather
complicated given the nonlinear nature of the binary choice model
and the generality of the endogeneity we incorporate. Below we provide some intuition for why Assumption \ref{assu:Cts}(c) should be regarded as a relatively mild condition. For more detail, see Appendix \ref{subsec:cond_A2c} for two illustrative examples where Assumption \ref{assu:Cts}(c) (almost) trivially holds, as well as a general sufficient condition for Assumption \ref{assu:Cts}(c).  

Note that under Assumption \ref{assu:Cts}(a)(b), we have $L_{t}\left(c|z,\t_{0}\right)<U_{t}\left(c|z,\t_{0}\right)$
with strict inequality, so $\ol L\left(\rest cz,\t_{0}\right)\leq\ul U\left(\rest cz,\t_{0}\right)$
can only hold with equality if there exist two different periods $t\neq s$
such that $L_{t}\left(c|z,\t_{0}\right)=U_{s}\left(c|z,\t_{0}\right)$.
If this holds for all $c$'s in a small open interval, i.e. with positive
Lebesgue measure in violation of Assumption 2(c), then we can deduce
that their derivatives in $c$ must also match, i.e., 
\begin{equation}
L_{t}^{'}\left(\rest cz,\t_{0}\right)=U_{s}^{'}\left(\rest cz,\t_{0}\right)\label{eq:L'=00003DU'}
\end{equation}
on an open interval, with $L_{t}^{'}$ and $U_{s}^{'}$ given by
\begin{align}
L_{t}^{'}\left(\rest cz,\t_{0}\right) & =\P\left(Y_{it}=1|X_{it}^{'}\g_{0}=c-z_{t}^{'}\b_{0},Z_{i}=z\right)\pi_{t}\left(\rest{c-z_{t}^{'}\b_{0}}z\right),\nonumber \\
U_{s}^{'}\left(\rest cz,\t_{0}\right) & =\P\left(Y_{is}=0|X_{is}^{'}\g_{0}=c-z_{s}^{'}\b_{0},Z_{i}=z\right)\pi_{s}\left(\rest{c-z_{s}^{'}\b_{0}}z\right),\label{eq:LU_deriv}
\end{align}
where $\pi_{t}\left(\rest{\cd}z\right)$ denotes the conditional pdf
of $X_{it}^{'}\g_{0}$ given $Z_{i}=z$.

Consequently, $L_{t}^{'}=U_{s}^{'}$ on an open interval essentially
means that the density-weighted CCPs on the right-hand sides of \eqref{eq:LU_deriv}
must change continuously in $c$ in \emph{exactly the same functional
form on a continuum, despite all of the following}: (i) $L_{t}^{'}$
is defined on the event $Y_{it}=1$ while $U_{s}^{'}$ is defined
as on the event $Y_{is}=0$, which are ``flipped'' events that may
generally vary with $c$ in different manners, (ii) the values of
$z_{t}^{'}\b_{0}$ and $z_{s}^{'}\b_{0}$ can be different, so the
conditioning events are generally different for $L_{t}$ and $U_{s}$
as well, (iii) the conditional distribution of $X_{it}$ given $Z_{i}=z$
may be different (nonstationary) across periods $t$, so $\pi_{t}\left(\rest{c-z_{t}^{'}\b_{0}}z\right)$
and $\pi_{s}\left(\rest{c-z_{s}^{'}\b_{0}}z\right)$ may be different
even if $z_{t}^{'}\b_{0}=z_{s}^{'}\b_{0}$, (iv) the dependence structure
between $X_{it}$ and $\e_{it}$ may vary across $t$. For all these
reasons, it appears unlikely that \eqref{eq:L'=00003DU'}
can hold for a continuum of $c$, except under carefully designed ``knife-edge'' DGPs. Hence, we consider Assumption 2(c) as a mild condition. See Appendix \ref{subsec:cond_A2c} for more examples and details.

% Even if it is possible at all, it probably requires a very carefully designed ``knife-edge'' DGP for \eqref{eq:L'=00003DU'} to hold on a continuum.

% While we acknowledge that there might be an alternative proof approach
% that establishes sharpness in the continuous case under weaker conditions
% than those imposed in Assumption \ref{assu:Cts}, we hope that Assumption
% \ref{assu:Cts} and Theorem \ref{thm:sharp_cts} demonstrate the conceptual
% point that the inequality restrictions generated by our ``bounding-by-$c$''
% technique is able to extract the continuum of identifying information
% under the continuous case, which does not appear obviously true to
% us ex ante.

\subsection{\label{subsec:Recon}Reconciliation with Related Work}

Our identifying restrictions in \eqref{eq:ID_Set} and \eqref{eq:c_KTset}
have a somewhat ``nonstandard'' representation in terms of (conditional)
joint probabilities of $Y_{it}$ and $X_{it}$ (given $Z_{i}$), instead
of conditional probabilities of $Y_{it}$ given $X_{it}$ (such as
lagged outcomes), which are more usually found in the related literature.
Hence, we provide a more detailed discussion about the content and
interpretation of our identifying restrictions, as well as a more
explicit explanation of how they relate to existing results in the
related literature.

\subsubsection*{Reconciliation with \citet{manski1987}}

Consider first the special case where there are \emph{no} endogenous
covariates $X_{it}$, or in other words, $X_{it}$ is degenerate.
In this case, our ``partial stationarity'' condition specializes
to the ``full stationarity'' condition \eqref{eq:cond_sta} as in
\citet{manski1987}. However, our identifying restriction \eqref{eq:ID_Set}
still has a different form than the identifying restriction in \citet{manski1987}.
To illustrate, focus on any two periods $\left(t,s\right)$, and observe
that our identifying restriction becomes:
\begin{equation}
\P\left(\rest{Y_{it}=1,\ z_{t}^{'}\b_{0}\leq c}z\right)\leq1-\P\left(\rest{Y_{is}=0,\ z_{s}^{'}\b_{0}\geq c}z\right),\ \forall c,\label{eq:ID_ts}
\end{equation}
while the ``maximum-score-type'' identifying restrictions in \citet{manski1987}
are of the form 
\begin{equation}
z_{s}^{'}\b_{0}\geq z_{t}^{'}\b_{0}\ \iff\ \P\left(\rest{Y_{is}=1}z\right)\geq\P\left(\rest{Y_{it}=1}z\right).\label{eq:MaxScore}
\end{equation}
The ``maximum-score-type'' identifying restriction \eqref{eq:MaxScore}
has a quite intuitive and interpretable representation: across two
periods $\left(t,s\right)$ under full stationarity, the conditional
choice probability at period $s$ is larger if and only if the index
$z_{s}^{'}\b_{0}$ is larger. In contrast, our restriction \eqref{eq:ID_ts}
has a somewhat twisted representation even in this simple setting.

However, a closer look reveals that our \eqref{eq:ID_ts} reproduces Manski's ``maximum-score-type'' identifying restrictions
in the current context. To see this, notice that, by setting $c=z_{t}^{'}\b_{0}$
in \eqref{eq:ID_ts}, we obtain 
\begin{align*}
\P\left(\rest{Y_{it}=1}z\right)=\, & \P\left(\rest{Y_{it}=1}z\right)\ind\left\{ z_{t}^{'}\b_{0}\leq z_{t}^{'}\b_{0}\right\} \leq1-\P\left(\rest{Y_{is}=0}z\right)\ind\left\{ z_{s}^{'}\b_{0}\geq z_{t}^{'}\b_{0}\right\} 
\end{align*}
Hence, if $z_{s}^{'}\b_{0}\geq z_{t}^{'}\b_{0}$, i.e., the left-hand
side of \eqref{eq:MaxScore} holds, then the above implies that 
\[
\P\left(\rest{Y_{it}=1}z\right)\leq1-\P\left(\rest{Y_{is}=0}z\right)=\P\left(\rest{Y_{is}=1}z\right),
\]
which becomes exactly the right-hand side of \eqref{eq:MaxScore}.
Switching $t$ with $s$ in the argument above produces the other
implication $z_{s}^{'}\b_{0}\leq z_{t}^{'}\b_{0}$ $\imp$ $\P\left(\rest{Y_{is}=1}z\right)\leq\P\left(\rest{Y_{it}=1}z\right)$.
Together these exactly constitute the ``if-and-only-if'' restriction
in \eqref{eq:MaxScore}. Hence, even though our inequality restriction
\eqref{eq:ID_ts} looks different from the more intuitive ``maximum-score-type''
restriction, they both incorporate the same information.

\subsubsection*{Reconciliation with KPT}

Now, consider the dynamic AR(1) model as studied in KPT, where the
only endogenous covariate is the one-period lagged outcome variable,
i.e., $X_{it}:=Y_{i,t-1}$.

To illustrate, first focus on any two periods $\left(t,s\right)$
only, and observe that in this case our identifying restriction becomes
\begin{equation}
\P\left(\rest{Y_{it}=1,\ z_{t}^{'}\b_{0}+Y_{i,t-1}\g_{0}\leq c}z\right)\leq1-\P\left(\rest{Y_{is}=0,\ z_{s}^{'}\b_{0}+Y_{i,s-1}\g_{0}\geq c}z\right),\ \forall c.\label{eq:ID_ts_KPT}
\end{equation}
Under the same model and assumption, KPT derives the following 9 inequality
implications for $\left(t,s\right)$:\footnote{We adapt the notation in KPT to our current notation, and state these
9 inequalities as strict inequalities, which lead to a simpler and
more focused explanation. The equivalence between our restriction
and the KPT restrictions still hold if their inequalities are stated
in the weak form.}

KPT(i): $\P\left(\rest{Y_{it}=1}z\right)>\P\left(\rest{Y_{is}=1}z\right)$
$\imp$ $\left(z_{t}-z_{s}\right)^{'}\b_{0}+\left|\g_{0}\right|>0.$

KPT(ii): $\P\left(\rest{Y_{it}=1}z\right)>1-\P\left(\rest{Y_{i,s}=0,Y_{i,s-1}=1}z\right)$
$\imp$ $\left(z_{t}-z_{s}\right)^{'}\b_{0}-\min\left\{ 0,\g_{0}\right\} >0.$

KPT(iii): $\P\left(\rest{Y_{it}=1}z\right)>1-\P\left(\rest{Y_{i,s}=0,Y_{i,s-1}=0}z\right)$
$\imp$ $\left(z_{t}-z_{s}\right)^{'}\b_{0}+\max\left\{ 0,\g_{0}\right\} >0.$

KPT(iv): $\P\left(\rest{Y_{it}=1,Y_{it-1}=1}z\right)>\P\left(\rest{Y_{is}=1}z\right)$
$\imp$ $\left(z_{t}-z_{s}\right)^{'}\b_{0}+\max\left\{ 0,\g_{0}\right\} >0.$

KPT(v): $\P\left(\rest{Y_{it}=1,Y_{it-1}=1}z\right)>1-\P\left(\rest{Y_{is}=0,Y_{i,s-1}=1}z\right)$
$\imp$ $\left(z_{t}-z_{s}\right)^{'}\b_{0}>0.$

KPT(vi): $\P\left(\rest{Y_{it}=1,Y_{it-1}=1}z\right)>1-\P\left(\rest{Y_{is}=0,Y_{i,s-1}=0}z\right)$
$\imp$ $\left(z_{t}-z_{s}\right)^{'}\b_{0}+\g_{0}>0.$

KPT(vii): $\P\left(\rest{Y_{it}=1,Y_{it-1}=0}z\right)>1-\P\left(\rest{Y_{is}=0}z\right)$
$\imp$ $\left(z_{t}-z_{s}\right)^{'}\b_{0}-\min\left\{ 0,\g_{0}\right\} >0.$

KPT(viii): $\P\left(\rest{Y_{it}=1,Y_{it-1}=0}z\right)>1-\P\left(\rest{Y_{is}=0\text{\ensuremath{,}}Y_{i,s-1}=1}z\right)$
$\imp$ $\left(z_{t}-z_{s}\right)^{'}\b_{0}-\g_{0}>0.$

KPT(ix): $\P\left(\rest{Y_{it}=1,Y_{it-1}=0}z\right)>1-\P\left(\rest{Y_{is}=0,Y_{i,s-1}=0}z\right)$
$\imp$ $\left(z_{t}-z_{s}\right)^{'}\b_{0}>0.$

~

\noindent In a way, the 9 inequality restrictions in KPT above are
similar to the ``maximum-score restrictions'', in the sense that
all of them take the form of logical implications between intertemporal
comparisons of various conditional probabilities and intertemporal
differences of covariate indexes.

Using a very different identification strategy than the one in KPT,
we arrived at our inequality restriction \eqref{eq:ID_ts_KPT}, which
looks very different from the collection of 9 inequality restrictions
in KPT. At first sight it is not clear how \eqref{eq:ID_ts_KPT} relates
to and compares with the 9 KPT restrictions. However, a closer look
again reveals that our restriction \eqref{eq:ID_ts_KPT} can reproduce
all the 9 restrictions in KPT, and thus incorporate all the information
therein in a unified format.

Take KPT(v) as an illustration and suppose that the left-hand side
of KPT(v) holds, then it implies 
\begin{equation}
\P\left(\rest{Y_{it}=1,Y_{it-1}=1}z\right)>1-\P\left(\rest{Y_{is}=0,Y_{i,s-1}=1}z\right).\label{eq:KPTv_LHS}
\end{equation}

With $X_{it}=Y_{i,t-1}$, our inequality restriction \eqref{eq:ID_ts_KPT}
can be equivalently rewritten as follows, 
\begin{align}
 & \P\left(\rest{Y_{it}=1,\ Y_{i,t-1}=1}z\right)\ind\left\{ z_{t}^{'}\b_{0}+\g_{0}\leq c\right\} +\P\left(\rest{Y_{it}=1,\ Y_{i,t-1}=0}z\right)\ind\left\{ z_{t}^{'}\b_{0}\leq c\right\} \nonumber \\
\leq\  & 1-\P\left(\rest{Y_{is}=0,\ Y_{i,s-1}=1}z\right)\ind\left\{ z_{s}^{'}\b_{0}+\g_{0}\geq c\right\} -\P\left(\rest{Y_{is}=0,\ Y_{i,s-1}=0}z\right)\ind\left\{ z_{s}^{'}\b_{0}\geq c\right\} ,\label{eq:ID_KPT_expand}
\end{align}
where the realization of $Y_{i,t-1}$ is explicitly enumerated as
in KPT.

Note that we can further relax condition \eqref{eq:ID_KPT_expand}
by dropping the two probabilities $\P\left(\rest{Y_{it}=1,\ Y_{i,t-1}=0}z\right)\ind\left\{ z_{t}^{'}\b_{0}\leq c\right\} $
and $\P\left(\rest{Y_{is}=0,\ Y_{i,s-1}=0}z\right)\ind\left\{ z_{s}^{'}\b_{0}\geq c\right\} $
as it makes the lower bound smaller and the upper bound larger: 
\begin{align*}
 & \P\left(\rest{Y_{it}=1,\ Y_{i,t-1}=1}z\right)\ind\left\{ z_{t}^{'}\b_{0}+\g_{0}\leq c\right\} \\
\leq\  & 1-\P\left(\rest{Y_{is}=0,\ Y_{i,s-1}=1}z\right)\ind\left\{ z_{s}^{'}\b_{0}+\g_{0}\geq c\right\} .
\end{align*}
Then, the statement that $\ind\left\{ z_{t}^{'}\b_{0}+\g_{0}\leq c\right\} $
and $\ind\left\{ z_{s}^{'}\b_{0}+\g_{0}\geq c\right\} $ both hold
is precisely equivalent to the following statement of 
\[
z_{t}^{'}\b_{0}\leq z_{s}^{'}\b_{0}\ \imp\ \P\left(\rest{Y_{it}=1,\ Y_{i,t-1}=1}z\right)\leq\ 1-\P\left(\rest{Y_{is}=0,\ Y_{i,s-1}=1}z\right).
\]
By contraposition, it leads to exactly the same implication of KPT(v):
\[
\P\left(\rest{Y_{it}=1,\ Y_{i,t-1}=1}z\right)>\ 1-\P\left(\rest{Y_{is}=0,\ Y_{i,s-1}=1}z\right)\Longrightarrow z_{t}^{'}\b_{0}>z_{s}^{'}\b_{0}.
\]
Hence, we have shown that \eqref{eq:ID_KPT_expand} implies KPT(v).

Similarly, it is shown in Appendix \ref{appe:RelateKPT} that \eqref{eq:ID_KPT_expand}
implies all 9 restrictions in KPT. In fact, the representation \eqref{eq:ID_KPT_expand}
reveals why there are precisely 9 KPT-type restrictions. The two period-$t$
indicators $\ind\left\{ z_{t}^{'}\b_{0}+\g_{0}\leq c\right\} $ and
$\ind\left\{ z_{t}^{'}\b_{0}\leq c\right\} $ in the upper expression
of \eqref{eq:ID_KPT_expand} may take 3 ``useful''\footnote{The 4th combination, $\ind\left\{ z_{t}^{'}\b_{0}+\g_{0}\leq c\right\} =\ind\left\{ z_{t}^{'}\b_{0}\leq c\right\} =0$,
will make the upper expression of \eqref{eq:ID_KPT_expand} equal
to $0$, so that the inequality \eqref{eq:ID_KPT_expand} holds trivially.
Hence, this $\left(0,0\right)$ combination is not useful.} combinations $\left(1,0\right),\left(0,1\right)$ and $\left(1,1\right)$,
while the two period-$s$ indicators $\ind\left\{ z_{s}^{'}\b_{0}+\g_{0}\geq c\right\} $
and $\ind\left\{ z_{s}^{'}\b_{0}\geq c\right\} $ in the lower expression
of \eqref{eq:ID_KPT_expand} may also take 3 useful combinations.
Consequently, in total there are $3\times3=9$ useful combinations,
which exactly correspond to the $9$ left-hand-side suppositions in
the 9 KPT restrictions.

Hence, while our restriction \eqref{eq:ID_ts_KPT} appears very different
from the 9 KPT restrictions, it actually automatically incorporates
all the KPT restrictions. In particular, by treating the endogenous
covariate $X_{it}=Y_{i,t-1}$ as a random variable, our restriction
\eqref{eq:ID_ts_KPT} automatically aggregates the identifying information
across all possible realizations of $Y_{i,t-1}$, without the need
to explicitly consider each possibility separately.

Now, consider a general setting with $T\geq2$ periods. By our Proposition
\ref{prop:disc} and Theorem \ref{thm:sharp_disc}, the sharp identified
set can be characterized by $2T$ restrictions, which are generated
by evaluating \eqref{eq:ID_Set} at each $c$ of the $2T$ points
in $\left\{ z_{t}^{'}\b,z_{t}^{'}\b+\g:t=1,...,T\right\} $. In contrast,
across $T$ periods the KPT approach produces $9T\left(T-1\right)$
restrictions, which are generated by imposing the $9$ KPT restrictions
across all ordered time pairs $\left(t,s\right)$. Hence, our approach
provides a much simpler characterization of the sharp identified set,
using a significantly smaller number of restrictions. For example,
with $T=2$ periods, we have $4$ restrictions while KPT has $18$;
with $T=3$, we have $6$ restrictions while KPT has 54. Hence, the
reduction in the number of restrictions relative to KPT is quite remarkable.

~

In summary, while the representation of our identifying restrictions
in Propositions \ref{prop:bin} and \ref{prop:disc} may appear somewhat
unusual in the first place, it actually becomes equivalent to the more
familiar representations in the specialized settings of \citet{manski1987}
and KPT.

\section{\label{sec:gene}Extensions}

The key idea underlying our identification strategy extends beyond the binary choice model and can be applied to a wide range of nonlinear panel data models with dynamics and endogeneity. We first present our general identification strategy in a
generic semiparametric model (Section 3.1), and then demonstrate how this strategy
can be applied and adapted to ordered response (Section 3.2), multinomial
choice (Section 3.3) and censored outcome (Appendix B.3) settings.

\subsection{Identification Strategy in a Generic Model\label{subsec:generic}}

We start with a generic semiparametric model that helps convey the
generality of our key identification strategy
\begin{align}
Y_{it} & =G\left(W_{it}^{'}\t_{0},\,\a_{i},\,\e_{it}\right),\label{eq:model_gen}
\end{align}
where $Y_{it}\in\mathcal{Y}$ can be either a discrete or continuous
variable, $\a_{i}$ is the individual fixed effect of arbitrary dimension,
$\e_{it}$ is the time-varying error of arbitrary dimension, $W_{it}$
is a vector of observable covariates, $\t_{0}\in\mathcal{R}^{d_{w}}$
is a conformable vector of parameters, and the function $G$ is allowed
to be unknown, nonseparable but assumed to satisfy the following:
\begin{assumption}[Index Monotonicity]
 \label{assu:Mono} The mapping $\d\longmapsto G\left(\d,\a,\e\right)$
is weakly increasing in $\d\in\mathcal{R}$ for each realization of
$\left(\a,\e\right)$. 
\end{assumption}
Note that, we can obtain the binary choice model in Section \ref{sec:bin}
by setting $\alpha_{i},\e_{it}$ to be scalar-valued, and $G\left(W_{it}^{'}\t_{0},\,\a_{i},\,\e_{it}\right)=\ind\left\{ W_{it}^{'}\t_{0}+\a_{i}+\e_{it}\geq0\right\} $,
where $G$ is by construction weakly increasing in $W_{it}^{'}\t_{0}$. 

As before, we decompose $W_{it}$, and correspondingly
$\t_{0}$, into two components, $W_{it}=\left(Z_{it}^{'},X_{it}^{'}\right)^{'}$
and $\t_{0}=\left(\b_{0}^{'},\g_{0}^{'}\right)^{'}$, and impose the
partial stationarity condition (Assumption \ref{assu:PartStat}).
We now show how partial stationarity can be exploited in conjunction
with weak monotonicity (Assumption \ref{assu:Mono}) to obtain identifying
restrictions in the presence of endogeneity.

Let $\mathcal{Y}$ denote the support of $Y_{it}$. For any $c\in\mathcal{R}$
and $y\in\mathcal{Y}$, observe that
\[
\begin{aligned}\ind\left\{ Y_{it}\leq y,\ W_{it}^{'}\t_{0}\geq c\right\}  & =\ind\left\{ G\left(W_{it}^{'}\t_{0},\,\a_{i},\,\e_{it}\right)\leq y,\ W_{it}^{'}\t_{0}\geq c\right\} \\
 & \leq\ind\left\{ G\left(c,\,\a_{i},\,\e_{it}\right)\leq y\right\} ,
\end{aligned}
\]
where the inequality holds by the monotonicity of the function $G$.
Symmetrically, we have
\[
\begin{aligned}\ind\left\{ Y_{it}>y,\ W_{it}^{'}\t_{0}\leq c\right\}  & =\ind\left\{ G\left(W_{it}^{'}\t_{0},\,\a_{i},\,\e_{it}\right)>y,\ W_{it}^{'}\t_{0}\leq c\right\} \\
 & \leq\ind\left\{ G\left(c,\,\a_{i},\,\e_{it}\right)>y\right\} \\
 & =1-\ind\left\{ G\left(c,\,\a_{i},\,\e_{it}\right)\leq y\right\} .
\end{aligned}
\]
which is equivalent to 
\[
\ind\left\{ G\left(c,\,\a_{i},\,\e_{it}\right)\leq y\right\} \leq1-\ind\left\{ Y_{it}>y,\ Z_{it}^{'}\b_{0}+X_{it}^{'}\g_{0}\leq c\right\} .
\]
The partial stationarity assumption $\e_{it}\mid Z_{i},\a_{i}\sim\e_{is}\mid Z_{i},\a_{i}$
implies the stationarity of the transform function $G$: $G\left(c,\a_{i},\e_{it}\right)\mid Z_{i},\a_{i}\sim G\left(c,\a_{i},\e_{is}\right)\mid Z_{i},\a_{i}$.
After integrating out $\a_{i}$, the stationarity condition persists
conditioned on $Z_{i}$ alone: 
\[
G\left(c,\a_{i},\e_{it}\right)\mid Z_{i}\sim G\left(c,\a_{i},\e_{is}\right)\mid Z_{i}.
\]
Combining the above derived bounds on $\ind\{G\left(c,\,\a_{i},\,\e_{it}\right)\leq y\}$,
we have

\begin{equation}
\begin{aligned} & \P\left(Y_{it}\leq y,\ Z_{it}^{'}\b_{0}+X_{it}^{'}\g_{0}\geq c\mid z\right)\\
\leq \  & \P\left(G\left(c,\,\a_{i},\,\e_{it}\right)\leq y\mid z\right)=\P\left(G\left(c,\,\a_{i},\,\e_{is}\right)\leq y\mid z\right)\\
\leq\  & 1-\P\left(Y_{is}>y,\ Z_{is}^{'}\b_{0}+X_{is}^{'}\g_{0}\leq c\mid z\right)=:U_{s}\left(\rest{c,y}z,\t_{0}\right)
\end{aligned}
\label{eq:Bounds_z}
\end{equation}
The key difference of the above and the corresponding identifying
restrictions in Section 2 lies in that the ``middle term'' in \eqref{eq:Bounds_z}
is no longer the conditional CDF of $\a_{i}+\e_{it}$, but the conditional
probability of $G\left(c,\,\a_{i},\,\e_{is}\right)\leq y$, with the
latter representation not dependent on scalar-additivity of fixed
effect $\a_{i}$ and time-varying errors $\e_{it}$. 

We summarize the identifying restrictions derived above by the following
proposition: 
\begin{prop}
\label{prop:gene} Define $\T_{I,gen}$ as the set of all $\t\in\R^{d_{w}}$
such that 
\begin{align}
\max_{t}\P\left(Y_{it}\leq y,\ Z_{it}^{'}\b+X_{it}^{'}\g\geq c\mid z\right)\leq1-\max_{s}\P\left(Y_{is}>y,\ Z_{is}^{'}\b+X_{is}^{'}\g\leq c\mid z\right),\label{eq:gen_id_ineq}
\end{align}
where for any $c\in\mathcal{R},$ $y\in\mathcal{Y}$, and any $z$.
Under model \eqref{eq:model_gen}, Assumptions \ref{assu:PartStat}
and \ref{assu:Mono}, $\t_{0}\in\T_{I,gen}$. 
\end{prop}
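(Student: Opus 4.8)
The plan is to follow the same three-move structure as the proof of Proposition \ref{prop:bin}, only replacing the conditional CDF of the scalar composite error $\a_i+\e_{it}$ by the conditional probability of the event $\{G(c,\a_i,\e_{it})\le y\}$. For each fixed pair $(c,y)$ and each ordered pair of periods $(t,s)$, I would sandwich the common quantity $\P(G(c,\a_i,\e_{it})\le y\mid z)$ between a period-$t$ lower bound and a period-$s$ upper bound, establish that this middle quantity is period-invariant, and then take maxima over $t$ and $s$ to recover \eqref{eq:gen_id_ineq}.

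First I would record the two pointwise indicator inequalities displayed in the text preceding the proposition. On the event $\{Y_{it}\le y,\ W_{it}'\t_0\ge c\}$, weak monotonicity of $G$ in its first argument (Assumption \ref{assu:Mono}) together with $c\le W_{it}'\t_0$ gives $G(c,\a_i,\e_{it})\le G(W_{it}'\t_0,\a_i,\e_{it})=Y_{it}\le y$, so $\ind\{Y_{it}\le y,\ W_{it}'\t_0\ge c\}\le\ind\{G(c,\a_i,\e_{it})\le y\}$; the symmetric argument on $\{Y_{it}>y,\ W_{it}'\t_0\le c\}$ yields, after rearrangement, $\ind\{G(c,\a_i,\e_{it})\le y\}\le 1-\ind\{Y_{it}>y,\ W_{it}'\t_0\le c\}$. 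Taking conditional expectations given $Z_i=z$ converts these into a valid lower and upper bound for $\P(G(c,\a_i,\e_{it})\le y\mid z)$ in every period $t$.

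The crux, and the only step needing genuine care, is showing that the sandwiched middle quantity is the same across periods, i.e. $\P(G(c,\a_i,\e_{it})\le y\mid z)=\P(G(c,\a_i,\e_{is})\le y\mid z)$. Here I would first condition on $(Z_i,\a_i)$, so that with $c$ fixed the map $G(c,\a_i,\cdot)$ is a fixed measurable transformation of the error alone. Partial stationarity (Assumption \ref{assu:PartStat}) supplies $\e_{it}\mid Z_i,\a_i\stackrel{d}{\sim}\e_{is}\mid Z_i,\a_i$, and applying the same measurable map $G(c,\a_i,\cdot)$ preserves equality in conditional law, giving $\P(G(c,\a_i,\e_{it})\le y\mid Z_i,\a_i)=\P(G(c,\a_i,\e_{is})\le y\mid Z_i,\a_i)$ almost surely. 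Integrating out $\a_i$ via the law of iterated expectations given $Z_i$ then delivers the claimed period-invariance conditional on $Z_i$ alone. The subtle point I would flag is that $\a_i$ must be retained inside the conditioning set for the distributional equality to transfer through $G$, precisely because $G$ depends on $\a_i$; this is exactly why Assumption \ref{assu:PartStat} conditions on $(Z_i,\a_i)$ rather than on $Z_i$ only, and why the argument does not rely on scalar-additivity of $\a_i$ or $\e_{it}$.

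Chaining the lower bound at $t$, the period-invariance, and the upper bound at $s$ gives, for every ordered pair $(t,s)$, the pairwise inequality $\P(Y_{it}\le y,\ W_{it}'\t_0\ge c\mid z)\le 1-\P(Y_{is}>y,\ W_{is}'\t_0\le c\mid z)$, which is precisely \eqref{eq:Bounds_z}. Finally, writing $L_t$ for the left side and $U_s$ for the right side, the inequality $L_t\le U_s$ holds for all $t,s$; fixing $t$ and minimizing over $s$, then maximizing over $t$, yields $\max_t L_t\le\min_s U_s=1-\max_s\P(Y_{is}>y,\ W_{is}'\t_0\le c\mid z)$, which is exactly \eqref{eq:gen_id_ineq} evaluated at $\t_0$. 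Since $(c,y,z)$ were arbitrary, I conclude $\t_0\in\T_{I,gen}$.
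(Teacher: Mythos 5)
Your proposal is correct and essentially reproduces the paper's own derivation in Section \ref{subsec:generic}: the same two monotonicity-based indicator bounds, the same transfer of partial stationarity through the map $G(c,\a_{i},\cdot)$ conditional on $(Z_{i},\a_{i})$ with $\a_{i}$ then integrated out by iterated expectations, and the same aggregation over period pairs $(t,s)$ to pass from \eqref{eq:Bounds_z} to \eqref{eq:gen_id_ineq}. Your only departure is cosmetic and in fact a small improvement: where the first line of \eqref{eq:Bounds_z} displays an equality, you correctly state it as the inequality $\P\left(Y_{it}\leq y,\ W_{it}^{'}\t_{0}\geq c\mid z\right)\leq\P\left(G\left(c,\a_{i},\e_{it}\right)\leq y\mid z\right)$, which is all the argument needs.
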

Note that in the binary choice setting of Section \ref{sec:bin},
it suffices to set $y=0$ in \eqref{eq:gen_id_ineq}, which then coincides
with the identifying results in Proposition \ref{prop:bin}. This also shows that the identified set does not change  at all, regardless of whether scalar-additivity of $\a_i$ and $\e_{it}$ is imposed or not in the binary choice model.

The results in Proposition \ref{prop:gene} generally hold regardless
of whether the dependent variable and the endogenous covariate are
discrete or continuous. The next proposition shows that additional
discreteness in either the dependent variable or endogenous covariates
can further simplify and reduce the number of the identifying conditions
in \eqref{eq:gen_id_ineq}.
\begin{prop}
\label{prop:gene_dis} When $X_{it}\in\left\{ \ol x_{1},...,\ol x_{K}\right\} $
for any $t$, then $\T_{I,gen}=\T_{I,gen}^{disc_{x}},$ where $\T_{I,gen}^{disc_{x}}$
consists of all $\t=\left(\b^{'},\g^{'}\right)^{'}$ that satisfy
condition \eqref{eq:gen_id_ineq} for any $c\in\left\{ z_{t}^{'}\b+\ol x_{k}^{'}\g:k=1,...,K,t=1,...,T\right\} $.
\end{prop}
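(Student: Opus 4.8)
The plan is to establish the set equality via two inclusions. The inclusion $\T_{I,gen}\subseteq\T_{I,gen}^{disc_{x}}$ is immediate, since membership in $\T_{I,gen}$ forces \eqref{eq:gen_id_ineq} to hold at every $c\in\R$, and the stated grid is a finite subset of $\R$. The content is the reverse inclusion $\T_{I,gen}^{disc_{x}}\subseteq\T_{I,gen}$: starting from \eqref{eq:gen_id_ineq} imposed only at the grid points, I must recover it for all $c\in\R$, uniformly over $y$ and $z$.

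To do so I would fix $\t$, a conditioning value $z=\left(z_{1},...,z_{T}\right)$, and $y\in\mathcal{Y}$, and abbreviate the two sides of \eqref{eq:gen_id_ineq} as
\[
\mathcal{L}\left(c\right):=\max_{t}\P\left(Y_{it}\leq y,\ z_{t}^{'}\b+X_{it}^{'}\g\geq c\mid z\right),\quad\mathcal{U}\left(c\right):=1-\max_{s}\P\left(Y_{is}>y,\ z_{s}^{'}\b+X_{is}^{'}\g\leq c\mid z\right).
\]
The structural fact to exploit is that, given $z$, the index $z_{t}^{'}\b+X_{it}^{'}\g$ takes values only in the finite grid $\mathcal{C}:=\left\{ z_{t}^{'}\b+\ol x_{k}^{'}\g:k=1,...,K,\ t=1,...,T\right\}$. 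Consequently $\mathcal{L}$ and $\mathcal{U}$ are both non-increasing step functions in $c$ whose jumps occur only at points of $\mathcal{C}$.

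The core step treats an arbitrary $c$ strictly between two consecutive grid points $c_{(j)}<c_{(j+1)}$. Since every realized index value $a$ lies in $\mathcal{C}$ and none lies strictly inside $\left(c_{(j)},c_{(j+1)}\right)$, one has $\ind\left\{ a\geq c\right\} =\ind\left\{ a\geq c_{(j+1)}\right\}$ and $\ind\left\{ a\leq c\right\} =\ind\left\{ a\leq c_{(j)}\right\}$ for every such $a$; taking conditional expectations yields $\mathcal{L}\left(c\right)=\mathcal{L}\left(c_{(j+1)}\right)$ and $\mathcal{U}\left(c\right)=\mathcal{U}\left(c_{(j)}\right)$. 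Combining the grid-point hypothesis at $c_{(j)}$ with the monotonicity of $\mathcal{L}$ then gives
\[
\mathcal{L}\left(c\right)=\mathcal{L}\left(c_{(j+1)}\right)\leq\mathcal{L}\left(c_{(j)}\right)\leq\mathcal{U}\left(c_{(j)}\right)=\mathcal{U}\left(c\right).
\]
The two tails are immediate: for $c<\min\mathcal{C}$ the event $\left\{ \cdot\leq c\right\}$ is empty, so $\mathcal{U}\left(c\right)=1\geq\mathcal{L}\left(c\right)$, and for $c>\max\mathcal{C}$ the event $\left\{ \cdot\geq c\right\}$ is empty, so $\mathcal{L}\left(c\right)=0\leq\mathcal{U}\left(c\right)$. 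As $y$ and $z$ were arbitrary, \eqref{eq:gen_id_ineq} then holds for all $c\in\R$, so $\t\in\T_{I,gen}$.

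The step I expect to demand the most care is the asymmetry inside the core step: the ``$\geq c$'' event on the left collapses to the \emph{right} endpoint $c_{(j+1)}$ while the ``$\leq c$'' event on the right collapses to the \emph{left} endpoint $c_{(j)}$, so the interior inequality is not inherited from a single grid point but must be bridged through the monotonicity of $\mathcal{L}$. Everything else is routine step-function bookkeeping, and since the argument never uses discreteness of $Y_{it}$, it applies verbatim whether $Y_{it}$ is discrete or continuous.
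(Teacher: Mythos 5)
Your proof is correct and takes essentially the same route as the paper, which omits the proof of this proposition and states that it follows the reasoning of Proposition \ref{prop:disc}: trivial tails outside $\left[\min\mathcal{C},\max\mathcal{C}\right]$, collapse of the events at an interior $c$ to the adjacent grid points, and a bridge through monotonicity to the grid-point inequality. You correctly handle the one genuine adaptation relative to Proposition \ref{prop:disc} --- in \eqref{eq:gen_id_ineq} the events are flipped ($\geq c$ in the lower bound, $\leq c$ in the upper), so both sides are non-increasing in $c$ and the interior case collapses to opposite endpoints --- which is exactly the mirror of the paper's argument.
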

Moreover, when $Y_{it}\in\left\{ \ol y_{1},...,\ol y_{K}\right\} $
with $\ol y_{j}\leq\ol y_{j+1}$ for any $t$, then $\T_{I,gen}=\T_{I,gen}^{disc_{y}},$
where $\T_{I,gen}^{disc_{y}}$ consists of all $\t=\left(\b^{'},\g^{'}\right)^{'}$
that satisfy condition \eqref{eq:gen_id_ineq} for any $y\in\left\{ \ol y_{1},...,\ol y_{K-1}\right\} $. 

Proposition \ref{prop:gene_dis} shows that for the general model,
when both the outcome and the endogenous variable are discrete, it
is sufficient to focus on a finite number of identifying restrictions.
The number of these restrictions is determined by the support of the
outcome variable and the covariate index. The proof of Proposition \ref{prop:gene_dis} follows the same reasoning
as Proposition \ref{prop:disc}, so it is omitted here. The central
idea is that for any point $c$ or $y$ outside the range specified
in Proposition \ref{prop:gene_dis}, we can find a point within the
specified range that provides weakly more informative results. Therefore,
the inclusion of these outside points would not provide additional
information for the identified set. 

\begin{rem} It is  natural to ask whether sharpness can be established in this general setup. While we do not present a formal result, we provide a discussion of this  in Appendix \ref{subsec:sharp_gen}.
\end{rem}

\subsection{Ordered Response Model\label{subsec:order}}

Consider that the outcome variable $Y_{it}$ takes $J$ ordered values:
$Y_{it}\in\left\{ y_{1},..,y_{J}\right\} $ with $y_{j}<y_{j+1}$.
Examples of such ordered outcomes include various income categories,
health outcomes, or levels of educational attainment. We study the
following panel ordered choice model: 
\begin{equation}
\begin{aligned}Y_{it}^{*} & =W_{it}'\t_{0}+v_{it},\\
Y_{it} & =\sum_{j=1}^{J}y_{j}\ind\left\{ b_{j}<Y_{it}^{*}\leq b_{j+1}\right\} ,
\end{aligned}
\label{model:order}
\end{equation}
where $Y_{it}^{*}$ denotes the latent dependent variable, and $Y_{it}$
denotes the ordered outcome which takes value $y_{j}$ when $Y_{it}^{*}\in(b_{j},b_{j+1}]$.
The threshold parameters satisfy $b_{1}=-\infty,b_{J+1}=+\infty$,
and the remaining threshold parameters $b_{j}$ (where $b_{j}\leq b_{j+1}$)
can be either known or unknown for $2\leq j\leq J-1$. The binary
choice model in \eqref{eq:bin} is nested with $J=2$ and $b_{2}=0$.

While the ordered response model \eqref{model:order} here can be
regarded as a special case of the generic model \eqref{eq:model_gen},
the special ``ordered cutoffs'' structure in \eqref{model:order}
contains more information than an unknown generic $G$ function in
\eqref{eq:model_gen}. As a result, even though the general identification
strategy in Section \ref{subsec:generic} still applies, we can adapt
the identification argument to the special additional structure imposed
here, obtaining a sharper result than a direct application of Proposition
\ref{prop:gene}. In particular, we explain why the line of our identification
arguments help us find such an adaption that exploits the special
model structure.

We now explain this in more details. Following the arguments in Section
\ref{subsec:generic}, we have
\[
\ind\left\{ Y_{it}\leq y_{j},\ b_{j+1}-W_{it}^{'}\t_{0}\leq c\right\} \leq\ind\left\{ v_{it}\leq c\right\} 
\]
For a given $c$, the above inequality holds for any response index
$j$. This immediately implies that we can take the largest one to
get a tighter lower bound:
\[
\max_{j}\ind\left\{ Y_{it}\leq y_{j},\ b_{j+1}-W_{it}^{'}\t_{0}\leq c\right\} \leq\ind\left\{ v_{it}\leq c\right\} .
\]
In addition, an inspection of the LHS reveals that the maximum is
attained at 
\[
j=\ol j_{c}\left(W_{it}\right):=\max\left\{ j:b_{j+1}-W_{it}^{'}\t_{0}\leq c\right\} 
\]
since such a (random) $j$ would maximize $\ind\left\{ Y_{it}\leq y_{j}\right\} $
subject to $b_{j+1}-W_{it}^{'}\t_{0}\leq c$. Consequently, we obtain
\begin{align}
\ind\left\{ v_{it}\leq c\right\}  & \geq\ind\left\{ Y_{it}\leq y_{\ol j_{c}\left(W_{it}\right)},\ b_{\ol j_{c}\left(W_{it}\right)+1}-W_{it}^{'}\t_{0}\leq c\right\} \nonumber \\
 & =\sum_{j=1}^{\ol j_{c}\left(W_{it}\right)}\ind\left\{ Y_{it}=y_{j},\ b_{\ol j_{c}\left(W_{it}\right)+1}-W_{it}^{'}\t_{0}\leq c\right\} \nonumber \\
 & =\sum_{j=1}^{J}\ind\left\{ Y_{it}=y_{j},\ b_{j+1}-W_{it}^{'}\t_{0}\leq c\right\} \label{eq:order_LB}
\end{align}
where the last equality holds since $\ind\left\{ b_{j+1}-W_{it}^{'}\t_{0}\leq c\right\} =0$
for any choice $j>\ol j_{c}\left(W_{it}\right)$. 

The final expression \eqref{eq:order_LB} is particularly nice for
three reasons: First, it aggregates the information aggregated from
different $y_{j}$ together to produce a tighter lower bound. Second,
the expression circumvent the need to compute the maximizer cutoff
$\ol j_{c}$. Third, it is represented as a linear sum (instead of
a maximum) so that conditional expectation of \eqref{eq:order_LB}
remains a linear sum of conditional expectations.

To see the advantage of the third point above, we take conditional
expectation of \eqref{eq:order_LB} given $z$ as before, obtaining
\[
\P\left(\rest{v_{it}\leq c}z\right)\geq\sum_{j=1}^{J}\P\left(\rest{Y_{it}=y_{j},\ b_{j+1}-W_{it}^{'}\t_{0}\leq c}z\right).
\]
where the RHS can be computed as a simple sum of CCPs about each ordered
value $y_{j}$. 

Similarly, we can derive an upper bound 
\[
\P\left(v_{is}\leq c\mid z\right)\leq1-\sum_{j=1}^{J}\P\left(\rest{Y_{is}=y_{j},b_{j}-W_{is}^{'}\t_{0}\geq c}z\right),
\]
which can be combined with the lower bound to yield the following
result. 
\begin{prop}
\label{prop:order} Define $\T_{I,order}$ as the set of $\t=\left(\b^{'},\g^{'}\right)^{'}$
such that 
\begin{align}
 & \max_{t=1,...,T}\sum_{j=1}^{J}\P\left(Y_{it}=y_{j},b_{j+1}-z_{t}^{'}\b-X_{it}^{'}\g\leq c\mid z\right)\nonumber \\
\leq\  & 1-\max_{s=1,...,T}\sum_{j=1}^{J}\P\left(Y_{is}=y_{j},b_{j}-z_{s}^{'}\b-X_{is}^{'}\g\geq c\mid z\right),\label{eq:order}
\end{align}
for any $c\in\mathcal{R}$ and any realization $z$ in the support
of $Z_{i}$. Under Assumptions \ref{assu:PartStat}, $\t_{0}\in\T_{I,order}$.
\end{prop}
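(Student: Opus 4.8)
The plan is to build on the lower bound already derived in \eqref{eq:order_LB} and close the argument in three moves: (i) obtain the symmetric upper bound, (ii) lift the partial stationarity of $\e_{it}$ to the composite error $v_{it}$ conditional on $Z_i$, and (iii) aggregate the per-pair restrictions over all $(t,s)$ by taking maxima. First I would complete the upper-bound counterpart of \eqref{eq:order_LB}. By the same monotonicity reasoning, the event $\{Y_{it}\geq y_j\}$ forces $Y_{it}^{*}>b_j$, hence $v_{it}>b_j-W_{it}^{'}\t_0$; intersecting with $\{b_j-W_{it}^{'}\t_0\geq c\}$ yields $v_{it}>c$, so $\ind\{Y_{it}\geq y_j,\ b_j-W_{it}^{'}\t_0\geq c\}\leq\ind\{v_{it}>c\}$. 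Taking the maximum over $j$---attained at $\ul j_c\left(W_{it}\right):=\min\{j:b_j-W_{it}^{'}\t_0\geq c\}$ and collapsing into a linear sum exactly as in \eqref{eq:order_LB}---gives $\ind\{v_{it}\leq c\}\leq1-\sum_{j=1}^{J}\ind\{Y_{it}=y_j,\ b_j-W_{it}^{'}\t_0\geq c\}$. Taking conditional expectations given $z$ produces the stated upper bound on $\P\left(\rest{v_{is}\leq c}z\right)$.

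Second, I would establish that the composite error is itself partially stationary, i.e. $v_{it}\mid Z_i\stackrel{d}{\sim}v_{is}\mid Z_i$, which is what lets the lower and upper bounds meet. Writing $v_{it}=\a_i+\e_{it}$ and conditioning additionally on $\a_i$, Assumption \ref{assu:PartStat} gives $\P\left(\rest{v_{it}\leq c}z,\a_i\right)=\P\left(\rest{v_{is}\leq c}z,\a_i\right)$ for every $c$; integrating out $\a_i$ yields $\P\left(\rest{v_{it}\leq c}z\right)=\P\left(\rest{v_{is}\leq c}z\right)$, exactly mirroring the footnote argument used in the binary case. Denote this common value $F_v\left(\rest cz\right)$.

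Third, I would chain the bounds. For any ordered pair $(t,s)$, the lower bound at $t$, the equality $\P\left(\rest{v_{it}\leq c}z\right)=F_v\left(\rest cz\right)=\P\left(\rest{v_{is}\leq c}z\right)$, and the upper bound at $s$ combine into $\sum_{j}\P\left(Y_{it}=y_j,\ b_{j+1}-W_{it}^{'}\t_0\leq c\mid z\right)\leq1-\sum_{j}\P\left(Y_{is}=y_j,\ b_j-W_{is}^{'}\t_0\geq c\mid z\right)$. Since this holds for all $t$ and all $s$ simultaneously, I maximize the left-hand sum over $t$ and the subtracted right-hand sum over $s$---both operations preserve the inequality---and substitute $W_{it}^{'}\t_0=z_t^{'}\b_0+X_{it}^{'}\g_0$ under $Z_i=z$ to recover \eqref{eq:order} evaluated at $\t=\t_0$. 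Hence $\t_0\in\T_{I,order}$.

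The main obstacle is bookkeeping rather than conceptual: the two bounds use the \emph{different} thresholds $b_{j+1}$ (lower bound, event $\{Y_{it}\leq y_j\}$) and $b_j$ (upper bound, flipped event $\{Y_{it}\geq y_j\}$), with inequalities pointing in opposite directions, so an off-by-one in the threshold index or a sign slip between the $\{v_{it}\leq c\}$ and $\{v_{it}>c\}$ events would misalign the two sides. The one genuine subtlety worth verifying carefully is the collapse of $\max_j\ind\{\cdots\}$ into the linear sum $\sum_{j=1}^{J}\ind\{\cdots\}$ for the upper bound: I would confirm that, because the thresholds are ordered, $\ind\{b_j-W_{it}^{'}\t_0\geq c\}=1$ holds exactly for $j\geq\ul j_c\left(W_{it}\right)$, so the sum telescopes to $\ind\{Y_{it}\geq y_{\ul j_c\left(W_{it}\right)}\}$ and is dominated by $\ind\{v_{it}>c\}$ from above, just as \eqref{eq:order_LB} does on the lower-bound side.
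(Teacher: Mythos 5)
Your proposal is correct and takes essentially the same route as the paper: the paper derives the lower bound \eqref{eq:order_LB} in detail, states the upper bound ``similarly,'' and combines the two via the stationarity of $v_{it}\mid Z_{i}$ (the same integrate-out-$\a_{i}$ argument as the binary-case footnote) before maximizing over $t$ and $s$. Your three moves fill in exactly these steps, including the correct threshold bookkeeping ($b_{j+1}$ paired with $\left\{ Y_{it}\leq y_{j}\right\}$ versus $b_{j}$ paired with $\left\{ Y_{it}\geq y_{j}\right\}$, with the maximum attained at $\ul j_{c}=\min$ rather than $\ol j_{c}=\max$) and the telescoping of the maximum into the linear sum $\sum_{j=1}^{J}\ind\left\{ Y_{it}=y_{j},\ b_{j}-W_{it}^{'}\t_{0}\geq c\right\}$, so nothing is missing.
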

We emphasize again that Proposition \ref{prop:order} is \emph{not
}a direct application of Proposition \ref{prop:gene}, since Proposition
\ref{prop:order} explicitly utilizes the special model structure
of the order response model to aggregate information from all response
index $j$ together to form tighter bounds for each $c$. In contrast,
a naive application of Proposition \ref{prop:gene} would yield 
\begin{multline*}
\max_{t=1,...,T}\P\left(Y_{it}\leq y_{j},b_{j+1}-z_{t}^{'}\b-X_{it}^{'}\g\leq c\mid z\right)\\
\leq1-\max_{s=1,...,T}\P\left(Y_{is}>y_{j},b_{j}-z_{s}^{'}\b-X_{is}^{'}\g\geq c\mid z\right),\forall j,\ \forall\left(c,z\right)
\end{multline*}
which remains valid but is a collection of bounds imposed on each
$j$ separately, thus is generally not as tight as the bounds in \eqref{eq:order}.

\subsection{Multinomial Choice Model \label{subsec:mul}}

In this subsection, we apply our key identification strategy to panel
multinomial choice model with endogeneity. Specifically, consider
a set of unordered choice alternatives $\mathcal{J}=\left\{ 0,1,...,J\right\} $.
Let $u_{ijt}$ denote the latent utility for individual $i$ of selecting
choice $j$ at time $t$, which depends on the three components: observed
covariate $W_{ijt}=(Z_{ijt}',X_{ijt}')'$, unobserved fixed effects
$\alpha_{ij}$, and unobserved time-varying preference shock $\e_{ijt}$.
%The utility of choice 0 (outside option) is normalized to zero: $u_{i0t}=0$.
Let $Y_{it}\in\mathcal{J}$ denote individual $i$'s choice at time
$t$. We study the following panel multinomial choice model: 
\[
\begin{aligned}u_{ijt} & =W_{ijt}^{'}\t_{0}+\alpha_{ij}+\epsilon_{ijt},\\
Y_{it} & =\arg\max_{j\in\mathcal{J}}u_{ijt},
\end{aligned}
\]
and impose the same partial stationarity assumption: 
\[
\e_{is}\mid Z_{i},\a_{i}\stackrel{d}{\sim}\e_{it}\mid Z_{i},\a_{i}\quad\text{for any}\ s,t\leq T.
\]
with $Z_{it}:=\{Z_{ijt}\}_{j\in\mathcal{J}}$,$\a_{i}:=\{\a_{ij}\}_{j\in\mathcal{J}}$
and $\e_{it}:=\{\e_{ijt}\}_{j\in\mathcal{J}}$ defined to collect
terms across all $J$ choice alternatives.

We emphasize that this model is not a special case of the generic
model \eqref{subsec:generic} in Subsection \eqref{subsec:generic},
since in the current model the $J$ outcome values are unordered,
and the model involves multiple indexes and multivariate monotonicity.
Hence, we cannot directly apply Proposition \ref{prop:gene} to the
current setting. That said, we explain how the key idea from Subsection
\eqref{subsec:generic} can again be adapted to obtain identification
result in the panel multinomial choice setting.

We start by looking at the indicator variable $Y_{it}^{j}:=\ind\{Y_{it}=j\}$
of choosing alternative $j$, which maintains a similar monotone structure
with Assumption \ref{assu:Mono}: 
\begin{align*}
Y_{it}^{j}=1\  & \iff\ W_{ijt}'\t_{0}+\alpha_{ij}+\epsilon_{ijt}\geq W_{ikt}'\t_{0}+\alpha_{ik}+\epsilon_{ikt},\ \forall k\in\mathcal{J}\\
 & \iff\ W_{ijt}'\t_{0}-W_{ikt}'\t_{0}\geq\alpha_{ik}+\epsilon_{ikt}-\alpha_{ij}-\epsilon_{ijt},\ \forall k\in\mathcal{J}
\end{align*}
and the new variable $Y_{it}^{j}$ is increasing in $W_{ijt}'\t_{0}-W_{ikt}'\t_{0}$
$\forall k\in\mathcal{J}$

More generally, for any subset $K\subset\mathcal{J}$, the indicator
variable $Y_{it}^{K}:=\ind\{Y_{it}\in K\}$ represents individual
$i$'s choice belonging to the subset $K$, given by
\begin{align*}
Y_{it}^{K}=1\  & \iff\ W_{ijt}'\t_{0}+\alpha_{ij}+\epsilon_{ijt}\geq W_{ikt}'\t_{0}+\alpha_{ik}+\epsilon_{ikt},\ \exists j\in K,\forall k\in\mathcal{J}\setminus K,\\
 & \iff\ W_{ijt}'\t_{0}-W_{ikt}'\t_{0}\geq\alpha_{ik}+\epsilon_{ikt}-\alpha_{ij}-\epsilon_{ijt}\ \exists j\in K,\forall k\in\mathcal{J}\setminus K
\end{align*}
and the variable $Y_{it}^{K}$ is increasing in $W_{ijt}'\t_{0}-W_{ikt}'\t_{0}$
for any $j\in K$ and $k\in\mathcal{J}\setminus K$.

Adapting the key identification idea to the current context, the identification results
for panel multinomial choice models are presented in the following
proposition.
\begin{prop}
\label{prop:mul} Define $\T_{I,mul}$ as the set that consists of all $\t=\left(\b^{'},\g^{'}\right)^{'}$
such that 
\begin{align}
 & \max_{t=1,...,T}\P\left(\rest{Y_{it}^{K}=1,\left(W_{ijt}-W_{ikt}\right)^{'}\t\leq c_{jk}\ \forall j\in K,k\in\mathcal{J}\setminus K}z\right)\nonumber \\
\leq & 1-\max_{t=1,...,T}\P\left(\rest{Y_{is}^{K}=0,\left(W_{ijs}-W_{iks}\right)^{'}\t\geq c_{jk}\ \forall j\in K,k\in\mathcal{J}\setminus K}z\right),\label{eq:ind_mul}
\end{align}
for any subset $K\subset\mathcal{J}$, any $c_{jk}\in\R$, any $j\in K$
and $k\in\mathcal{J}\setminus K$, and any realization $z$ in the
support of $Z_{i}$. Then, under Assumption \ref{assu:PartStat},
$\t_{0}\in\T_{I,mul}$.
\end{prop}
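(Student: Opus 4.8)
The plan is to mirror the derivation of Proposition \ref{prop:gene}, adapting the scalar ``bounding-by-$c$'' argument to the vector of pairwise indexes and to the existential/universal quantifier structure that defines $Y_{it}^{K}$. For fixed $K\subset\mathcal{J}$, abbreviate $\d_{jk,t}:=\left(W_{ijt}-W_{ikt}\right)^{'}\t_{0}$ and $\D_{jk,t}:=\left(\a_{ik}+\e_{ikt}\right)-\left(\a_{ij}+\e_{ijt}\right)$ for $j\in K,\ k\in\mathcal{J}\setminus K$, so that, as already noted in the text, $Y_{it}^{K}=1\iff\ex j\in K,\ \any k\in\mathcal{J}\setminus K:\ \d_{jk,t}\geq\D_{jk,t}$, and $Y_{it}^{K}$ is weakly increasing in each $\d_{jk,t}$. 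The key device is the ``pseudo-choice indicator'' obtained by freezing every index at the constant $c_{jk}$,
\[
H\left(c;\a_{i},\e_{it}\right):=\ind\left\{ \ex j\in K,\ \any k\in\mathcal{J}\setminus K:\ c_{jk}\geq\D_{jk,t}\right\},
\]
which, for fixed $\a_{i}$, is a measurable function of $\e_{it}$ alone; the period-$s$ version $H\left(c;\a_{i},\e_{is}\right)$ is defined analogously with $\D_{jk,s}$.

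First I would establish the lower bound. On the event $\{Y_{it}^{K}=1,\ \d_{jk,t}\leq c_{jk}\ \any j,k\}$, the witness $j^{*}$ from $Y_{it}^{K}=1$ satisfies $c_{j^{*}k}\geq\d_{j^{*}k,t}\geq\D_{j^{*}k,t}$ for every $k$, so the same $j^{*}$ witnesses $H\left(c;\a_{i},\e_{it}\right)=1$, giving
\[
\ind\left\{ Y_{it}^{K}=1,\ \d_{jk,t}\leq c_{jk}\ \any j,k\right\} \leq H\left(c;\a_{i},\e_{it}\right).
\]
Symmetrically, on $\{Y_{is}^{K}=0,\ \d_{jk,s}\geq c_{jk}\ \any j,k\}$, negating the choice condition yields $\any j\in K\ \ex k:\ \d_{jk,s}<\D_{jk,s}$, and combining with $c_{jk}\leq\d_{jk,s}$ gives $\any j\ \ex k:\ c_{jk}<\D_{jk,s}$, i.e. $H\left(c;\a_{i},\e_{is}\right)=0$, so that
\[
H\left(c;\a_{i},\e_{is}\right)\leq1-\ind\left\{ Y_{is}^{K}=0,\ \d_{jk,s}\geq c_{jk}\ \any j,k\right\}.
\]

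Next I would take conditional expectations given $Z_{i}=z$ and link the two middle terms via partial stationarity. Since Assumption \ref{assu:PartStat} gives $\e_{it}\mid Z_{i},\a_{i}\stackrel{d}{\sim}\e_{is}\mid Z_{i},\a_{i}$ and $H\left(c;\a_{i},\cd\right)$ is a fixed measurable map for fixed $\a_{i}$, we obtain $H\left(c;\a_{i},\e_{it}\right)\mid Z_{i},\a_{i}\stackrel{d}{\sim}H\left(c;\a_{i},\e_{is}\right)\mid Z_{i},\a_{i}$; integrating out $\a_{i}$ given $Z_{i}$ yields the common middle probability $\P\left(H\left(c;\a_{i},\e_{it}\right)=1\mid z\right)=\P\left(H\left(c;\a_{i},\e_{is}\right)=1\mid z\right)$. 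Chaining the two displayed indicator inequalities through this common value gives, for every ordered pair $\left(t,s\right)$,
\[
\P\left(\rest{Y_{it}^{K}=1,\ \d_{jk,t}\leq c_{jk}\ \any j,k}z\right)\leq1-\P\left(\rest{Y_{is}^{K}=0,\ \d_{jk,s}\geq c_{jk}\ \any j,k}z\right).
\]
As the left side is free of $s$ and the right-side probability is free of $t$, taking $\max$ over $t$ on the left and over $s$ in the right-side probability preserves the inequality and reproduces exactly \eqref{eq:ind_mul}, so $\t_{0}\in\T_{I,mul}$.

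I expect the main obstacle to be the careful bookkeeping of the nested quantifiers $\ex j\in K,\ \any k\in\mathcal{J}\setminus K$ when passing between the actual choice event and the frozen-index pseudo-indicator $H$: the lower bound relies on reusing a \emph{single} witness $j^{*}$ simultaneously across all $k$, while the upper bound relies on the correct De Morgan negation to a $\any j\ \ex k$ statement, and in both cases one must verify that componentwise monotonicity in the $\d_{jk,t}$ suffices even though all indexes are frozen at once. Everything else is a direct transcription of the generic argument underlying Proposition \ref{prop:gene}.
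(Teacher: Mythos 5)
Your proof is correct and follows essentially the same route as the paper's: your frozen-index indicator $H\left(c;\a_{i},\e_{it}\right)$ is exactly the event whose conditional probability the paper denotes $Q_{t}\left(c_{jk}\mid\cd\right)$, and you derive the same lower and upper bounds, equate the middle term across periods via partial stationarity, and chain over ordered pairs $\left(t,s\right)$ before taking maxima. The only difference is cosmetic --- you argue pointwise with indicators, an explicit witness $j^{*}$, and the De Morgan negation, whereas the paper states the same bounds at the level of probabilities conditional on $W_{i}=w$ and then integrates out $X_{i}$ given $z$.
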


Below we show that Proposition \ref{prop:mul} specializes to the
corresponding result in \citet{pakes2019}, who focuses on the static
panel multinomial choice model without any endogeneity. Since \citet{pakes2019}
establishes the sharpness of their identification result under their
setup, our Proposition \ref{prop:mul} is also sharp (under their
static two-period setting).

However, a key improvement of our result relative to that in \citet{pakes2019}
is that Proposition \ref{prop:mul} allows for any type of endogeneity
including dynamic multinomial models with lagged dependent variable,
as well as the inclusion of contemporaneously endogenous variables
such as product prices. For example, consider the following dynamic
model: 
\[
u_{ijt}=Z_{ijt}'\beta_{0}+\ind\left\{ Y_{i,t-1}=j\right\} \g_{0,j}-P_{ijt}\l_{0}+\alpha_{ij}+\epsilon_{ijt}.
\]
where individual $i$'s utility at time $t$ can potentially depend
on their choices in the previous period $t-1$ and we allow the dynamic
effect $\g_{0,j}$ to vary across choices, and $P_{ijt}$ is the price
of product $j$ faced by consumer $i$ at time $t$. To the best of our knowledge,
no previous work has considered such generalization of \citet{pakes2019}
that can incorporate price endogeneity and past-choice dependence.
Even though Proposition \ref{prop:mul} is presented as a byproduct
of our general identification strategy, it nevertheless presents a
substantive progress in the related literature on panel multinomial
choice models.

\subsubsection*{Reconciliation with \citet{pakes2019}}

\label{subsec:mul_app} Next, we show that Proposition \ref{prop:mul}
specializes to those in \citet{pakes2019}, who focus on the static
panel multinomial choice model without any endogeneity. Since
\citet{pakes2019} establishes the sharpness of their identification
set in a two-period setting and our identification set reproduces theirs, the sharpness of our identification set follows immediately in this setting.

Formally, \citet{pakes2019} characterizes the sharp identified set
for $\t_{0}$ under the full stationarity assumption given all covariates:
\[
\e_{is}\mid W_{i},\a_{i}\stackrel{d}{\sim}\e_{it}\mid W_{i},\a_{i}.
\]
Under this condition, for two periods $\left(t,s\right)$ our identifying
condition in \eqref{eq:ind_mul} is simplified to
\begin{align}
 & \P\left(Y_{is}^{K}=1,(w_{js}-w_{ks})'\t_{0}\leq c_{jk}\ \forall j\in K,k\in\mathcal{J}\setminus K\mid w\right)\nonumber \\
\leq\  & 1-\P\left(Y_{it}^{K}=0,(w_{jt}-w_{kt})'\t_{0}\geq c_{jk}\ \forall j\in K,k\in\mathcal{J}\setminus K\mid w\right)\label{eq:mul_sta}
\end{align}
The above equation is only informative when $(w_{js}-w_{ks})'\t_{0}\leq c_{jk}\leq(w_{jt}-w_{kt})'\t_{0}$
for any $j\in K,k\in k\in\mathcal{J}\setminus K$; otherwise either
the upper bound becomes one or the lower bound becomes zero so that
condition \eqref{eq:mul_sta} holds for any $\t$. There exists one
value $c_{jk}$ satisfying the condition $(w_{js}-w_{ks})'\t_{0}\leq c_{jk}\leq(w_{jt}-w_{kt})'\t_{0}$
is equivalent to $(w_{js}-w_{ks})'\t_{0}\leq(w_{jt}-w_{kt})'\t_{0}$,
generating the following inequality: for any $K\subset\mathcal{J}$,
\begin{align*}
\text{If \ensuremath{\quad}} & (w_{js}-w_{ks})'\t_{0}\leq(w_{jt}-w_{kt})'\t_{0}\ \ \forall j\in K,k\in\mathcal{J}\setminus K\\
\text{then\ensuremath{\quad}} & \P\left(Y_{it}^{K}=0\mid w\right)\leq1-\P\left(Y_{is}^{K}=1\mid w\right)
\end{align*}
which becomes the same result in \citet{pakes2019} (Proposition 1,
P. 12): 
\begin{align*}
\text{If \ensuremath{\quad}} & (w_{js}-w_{ks})'\t_{0}\leq(w_{jt}-w_{kt})'\t_{0}\ \ \forall j\in K,k\in\mathcal{J}\setminus K\\
\text{then\ensuremath{\quad}} & \P\left(Y_{is}\in K\mid w\right)\leq\P\left(Y_{it}\in K\mid w\right)
\end{align*}
since $Y_{it}^{K}=1$ is equivalent to $Y_{it}\in K$ by the definition.

\section{Simulation}

\label{sec:simu}

In this section, we focus on the static ordered response model Section
\ref{subsec:order} and implement the kernel-based CLR inference approach
proposed in the papers by \citet*{chernozhukov2013} and \citet{chen2019breaking},
which was developed to construct confidence interval based on general
conditional moment inequalities. 

In Appendix \ref{subsec:Q_gamma}, we also conduct a simulation exercise of a different nature. We numerically compute and visualize the identified set under two DGP configurations in dynamic binary choice setting, but do not implement the finite-sample estimation and inference procedure. 

\subsection{Static Ordered Response Model}

\label{subsec:static}

This section explores a static ordered choice model with three choices
$Y_{it}\in\{1,2,3\}$. We consider the following two-period model
with $T=2$, and the latent dependent variable $Y_{it}^{*}$ is generated
as: 
\[
Y_{it}^{*}=Z_{it}^{1}\b_{01}+Z_{it}^{2}\b_{02}+\a_{i}+\e_{it},
\]
where the covariate $Z_{it}^{k}$ satisfies $Z_{it}^{k}\sim\mathcal{N}(0,\s_{z})$
for $k\in\{1,2\}$; the fixed effects $\a_{i}$ are given as $\a_{i}=\sum_{t=1}^{T}(Z_{it}^{1}+Z_{it}^{2})/(4*\s_{z}*T)$,
so they are correlated with the covariates; the error term $(\e_{i1},\e_{i2})$
follows the normal distribution $\mathcal{N}(\mu,\Sigma)$ with $\mu=(0,0)$
and $\Sigma=(1\ \rho;\rho\ 1)$. The true parameter is $\b_{0}:=(\b_{0,1},\b_{02})'=(1,1)'$,
the repetition number is $B=200$, and the sample size is $n=\{2000,8000\}$.
We consider three specifications for $\s_{z}\in\{1,1.5,2\}$ and $\rho\in\{0,0.25,0.5\}$.

The observed dependent variable $Y_{it}$ is given as 
\[
Y_{it}=1*(Y_{it}^{*}\leq b_{2})+2*(b_{2}<Y_{it}^{*}\leq b_{3})+3*(Y_{it}^{*}>b_{3}),
\]
where $b_{2}=-1$ and $b_{3}=1$. We write $Y_{i}:=(Y_{i1},Y_{i2})$ and $Z_{i}:=(Z_{i1},Z_{i2})$, 

Proposition \ref{prop:static_ordered} in Appendix \ref{subsec:order_stat}, as a corollary of Proposition \ref{prop:order} with pivotal choices of $c$'s, provides a characterization of the identified set for $\b_{0}$ under the static ordered choice model via
the following conditional moment inequalities: for $s\neq t\leq2$,
\[
E[g(Z_{i},Y_{i};\b_{0})\mid z]\geq0,
\]
where 
\[
g(Z_{i},Y_{i};\b)=\left\{ \begin{aligned} & \ind\{b_{2}-Z_{is}'\b\geq b_{2}-Z_{it}'\b\}(\ind\{Y_{is}=1\}-\ind\{Y_{it}=1\});\\
 & \ind\{b_{2}-Z_{is}'\b\geq b_{3}-Z_{it}'\b\}(\ind\{Y_{is}=1\}-\ind\{Y_{it}\in\{1,2\}\});\\
 & \ind\{b_{3}-Z_{is}'\b\geq b_{2}-Z_{it}'\b\}(\ind\{Y_{is}\in\{1,2\}\}-\ind\{Y_{it}=1\});\\
 & \ind\{b_{3}-Z_{is}'\b\geq b_{3}-Z_{it}'\b\}(\ind\{Y_{is}\in\{1,2\}\}-\ind\{Y_{it}\in\{1,2\}\}),
\end{aligned}
\right.
\]
upon which our estimation and inference exercise in this subsection will be based.

The first element $\b_{01}$ of the parameter $\b_{0}$ is normalized
to one, and we are interested in conducting inference for the parameter
$\b_{02}$ using the CLR approach. Tables \ref{table:static_sigma}
and \ref{table:static_rho} report the average confidence interval
(CI) for $\b_{02}$, the coverage probability (CP), the average length
of the CI (length), the power of the test at zero (power), and the
mean absolute deviation of the lower bound ($l_{MAD}$) and upper
bound ($u_{MAD}$) of the CI.

\begin{table}[!htbp]
\centering %\setlength{\tabcolsep}{2mm}{
\caption{Performance of $\protect\b_{02}$ under different values of $\protect\s_{z}$
($\rho=0.25$)}
\label{table:static_sigma} %
\begin{tabular}{c|cccccc}
\hline 
$\s_{z}$  & CI  & CP  & length  & power  & $l_{MAD}$  & $u_{MAD}$ \tabularnewline
\hline 
 & \multicolumn{6}{c}{ $N=2000$}\tabularnewline
\hline 
$\s_{z}=1$  & {[}0.537, 1.760{]}  & 0.876  & 1.222  & 1.000 & 0.476  & 0.784 \tabularnewline
$\s_{z}=1.5$  & {[}0.556, 1.768{]}  & 0.934 & 1.212 & 1.000  & 0.454  & 0.773 \tabularnewline
$\s_{z}=2$  & {[}0.567, 1.791{]}  & 0.950  & 1.224  & 1.000  & 0.440  & 0.796 \tabularnewline
\hline 
 & \multicolumn{6}{c}{$N=8000$}\tabularnewline
\hline 
$\s_{z}=1$  & {[}0.570, 1.532{]}  & 0.939  & 0.962  & 1.000  & 0.439  & 0.548 \tabularnewline
$\s_{z}=1.5$  & {[}0.607, 1.561{]}  & 0.975  & 0.954  & 1.000  & 0.398  & 0.563 \tabularnewline
$\s_{z}=2$  & {[}0.618, 1.571{]}  & 0.985  & 0.953  & 1.000  & 0.383  & 0.573 \tabularnewline
\hline 
\end{tabular}
\end{table}

\begin{table}[!htbp]
\centering %\setlength{\tabcolsep}{2mm}{
\caption{Performance of $\protect\b_{02}$ under different values of $\rho$
($\protect\s_{z}=1$)}
\label{table:static_rho} %
\begin{tabular}{c|cccccc}
\hline 
$\rho$  & CI  & CP  & length  & power  & $l_{MAD}$  & $u_{MAD}$ \tabularnewline
\hline 
 & \multicolumn{6}{c}{ $N=2000$}\tabularnewline
\hline 
$\rho=0$  & {[}0.537, 1.755{]}  & 0.895  & 1.218  & 1.000 & 0.476  & 0.773 \tabularnewline
$\rho=0.25$  & {[}0.537, 1.760{]}  & 0.876  & 1.222  & 1.000 & 0.476  & 0.784 \tabularnewline
$\rho=0.5$  & {[}0.511, 1.765{]}  & 0.909  & 1.254  & 1.000  & 0.497 & 0.785 \tabularnewline
\hline 
 & \multicolumn{6}{c}{ $N=8000$}\tabularnewline
\hline 
$\rho=0$  & {[}0.584, 1.553{]}  & 0.933  & 0.969  & 1.000  & 0.436  & 0.568 \tabularnewline
$\rho=0.25$  & {[}0.570, 1.532{]}  & 0.939  & 0.962  & 1.000  & 0.439  & 0.548 \tabularnewline
$\rho=0.5$  & {[}0.573, 1.526{]}  & 0.934  & 0.954  & 1.000  & 0.442  & 0.541 \tabularnewline
\hline 
\end{tabular}
\end{table}

As shown in Tables \ref{table:static_sigma} and \ref{table:static_rho},
our approach exhibits robust performance across various specifications
of standard deviation $\s$ and correlation coefficients $\rho$.
The coverage probabilities of the 95\% confidence interval (CI) for
$\b_{02}$ are close to the nominal level, the length of the CI is
reasonably small, and the CI consistently excludes zero. When the
sample size increases, there is a significant decrease in CI length,
an improvement in coverage probability, and a reduction of the mean
absolute deviation (MAD) for the lower and upper bounds of the CI.
Overall, these results demonstrate the good performance of our approach
in different DGP designs.

\subsection{Dynamic Ordered Response Model}

In this section, we investigate a dynamic ordered choice model with
one lagged dependent variable $Y_{i,t-1}$. The latent dependent variable
$Y_{it}^{*}$ is generated as follows: 
\[
Y_{it}^{*}=Z_{it}\b_{0}+Y_{i,t-1}\g_{0}+\a_{i}+\e_{it}.
\]
where the endogenous variable is the lagged dependent variable $Y_{i,t-1}$.
We study three periods $T=3$ to illustrate our approach with multiple
periods. The DGP is similar: the exogenous covariate $Z_{it}$ satisfies
$Z_{it}\sim\mathcal{N}(0,\s_{z})$; the fixed effects $\a_{i}$ are
given as $\a_{i}=\sum_{t=1}^{T}Z_{it}/(4*\s_{z}*T)$; the error term
$(\e_{i1},\e_{i2},\e_{i3})$ follows the normal distribution $\mathcal{N}(\mu,\Sigma)$
with $\mu=(0,0,0)$ and $\Sigma=(0.5\ c\ c;c\ 0.5\ c;c\ c\ 0.5)$,
where $c=0.5*\rho$. The true parameter is $\t_{0}:=(\b_{0},\g_{0})'=(1,1)'$,
the repetition number is $B=200$, and the sample size is $n\in\{2000,8000\}$.
We consider three specifications for $\s_{z}\in\{1,1.5,2\}$ and $\rho\in\{0,0.25,0.5\}$.

The observed dependent variable $Y_{it}$ is given as 
\[
Y_{it}=1*(Y_{it}^{*}\leq b_{2})+2*(b_{2}<Y_{it}^{*}\leq b_{3})+3*(Y_{it}^{*}>b_{3}),
\]
for $1\leq t\leq T$. The initial value $Y_{i0}\in\{1,2,3\}$ is generated
independently of all variables and follows the distribution $\P(Y_{i0}=1)=0.6,\P(Y_{i0}=2)=\P(Y_{i0}=3)=0.2$.

In this dynamic model, the covariates $Z_{i}:=(Z_{it})_{t=1}^{T}$
and the initial value $Y_{i0}$ are exogenous, while the lagged variable
$Y_{i,t-1}$ is endogenous. Proposition \ref{prop:order} characterizes
the identified set for $\t_{0}$ with the following conditional moment
inequalities:
\begin{itemize}
\item[(1)] When $s\in\{2,3\}$.
\[
\begin{aligned} & \sum_{j=1}^{2}\P\left(Y_{is}=y_{j},b_{j+1}-z_{s}'\b-Y_{is-1}\g\leq c\mid z,y_{0}\right),\\
\leq\  & 1-\sum_{j=2}^{3}\P\left(Y_{i1}=y_{j}\mid z,y_{0}\right)*\ind\left\{ b_{j}-z_{1}'\beta-y_{0}\g\geq c\right\} 
\end{aligned}
\]
\[
\begin{aligned} & \sum_{j=1}^{2}\P\left(Y_{i1}=y_{j}\mid z,y_{0}\right)*\ind\{b_{j+1}-z_{1}'\b-y_{0}\g\leq c\},\\
\leq\  & 1-\sum_{j=2}^{3}\P\left(Y_{is}=y_{j},b_{j}-z_{s}'\beta-Y_{is-1}\g\geq c\mid z,y_{0}\right)
\end{aligned}
\]
for any $c\in\{b_{j}-z_{1}'\b-y_{0}\g,b_{j}-z_{s}'\b-\g,b_{j}-z_{s}'\b-2\g,b_{j}-z_{s}'\b-3\g\}_{j=2}^{T}$;
\item[(2)] When $s,t\in\{2,3\}$, 
\[
\begin{aligned} & \sum_{j=1}^{2}\P\left(Y_{it}=y_{j},b_{j+1}-z_{t}'\b-Y_{it-1}\g\leq c\mid z,y_{0}\right),\\
\leq\  & 1-\sum_{j=2}^{3}\P\left(Y_{is}=y_{j},b_{j}-z_{s}'\beta-Y_{is-1}\g\geq c\mid z,y_{0}\right)
\end{aligned}
\]
for any $c\in\{b_{j}-z_{s}'\b-\g,b_{j}-z_{s}'\b-2\g,b_{j}-z_{s}'\b-3\g,b_{j}-z_{t}'\b-\g,b_{j}-z_{t}'\b-2\g,b_{j}-z_{t}'\b-3\g\}_{j=2}^{3}$.
\end{itemize}
We normalize the first parameter $\b_{0}$ to one, and report the
performance of the coefficient $\g_{0}$ for the lagged dependent
variable. Tables \ref{table:dyn_sigma} and \ref{table:dyn_rho} illustrate
that our approach yields robust and informative results for the dynamic
ordered choice model across various DGP specifications. The coverage
probability of the CI nearly reaches 95\%, and the CI consistently
excludes zero, producing significant coefficients. These results remain
similar across different values of correlation coefficients. When
the standard deviation $\s_{z}$ increases, the length of the CI also
experiences a slight increase. This phenomenon occurs because, in
the dynamic model, only partial identification is achieved, and the
bound for $\g_{0}$ depends on the variation in $\D z'\b_{0}$. A
larger variation in $\D z'\b_{0}$ may result in a wider identified
set in this specification, but it still provides informative results.
As the sample size increases, the confidence interval shrinks, and
concurrently, the coverage probability improves in all specifications.

%\begin{table}[!htbp] 
%\centering
%%\setlength{\tabcolsep}{2mm}{
%\caption{Performance of $\g_0$ under different specifications}
% \label{table:dyn_sigma}
%\begin{tabular}{c|cccccc}
%\hline
%\hline
% Specifications & CI & CP & length & Power & $l_{MAD}$ & $u_{MAD}$   \\
%\hline 
%&\multicolumn{6}{c}{ $\rho=0.25$} \\
%\hline
%$\s_z =1$ &[0.446, 1.606] & 0.935 & 1.160&  1.000&  0.565 & 0.625
%  \\[1.5ex]
% $\s_z =1.5$ & [0.375,  1.673]   &0.959  & 1.298 &  1.000  & 0.629 &  0.693
%   \\[1.5ex]
% $\s_z =2 $ & [0.311, 1.730] & 0.960 & 1.418 & 1.000&  0.700&  0.739
%     \\[1.5ex]
%  \hline
%  &\multicolumn{6}{c}{ $\s=1$}  \\
%  \hline
%$\rho=0$ & [0.472,  1.593] &  0.932 &  1.121 &  1.000  & 0.550 & 0.607
%  \\[1.5ex]
% $\rho=0.25$ &[0.446, 1.606] & 0.935 & 1.160&  1.000&  0.565 & 0.625
%   \\[1.5ex]
% $\rho=0.5 $ &[0.457,  1.631] &  0.943 &  1.173  & 1.000&   0.548 &  0.648
%     \\[1.5ex]
%  \hline
% \hline  
%\end{tabular}
%\end{table}
\begin{table}[!htbp]
\centering %\setlength{\tabcolsep}{2mm}{
 \caption{Performance of $\protect\g_{0}$ under different values of $\protect\s_{z}$
($\rho=0.25$)}
\label{table:dyn_sigma} %
\begin{tabular}{c|cccccc}
\hline 
$\s_{z}$  & CI  & CP  & length  & power  & $l_{MAD}$  & $u_{MAD}$ \tabularnewline
\hline 
 & \multicolumn{6}{c}{ $N=2000$}\tabularnewline
\hline 
$\s_{z}=1$  & {[}0.446, 1.606{]}  & 0.935  & 1.160 & 1.000 & 0.565  & 0.625 \tabularnewline
$\s_{z}=1.5$  & {[}0.375, 1.673{]}  & 0.959  & 1.298  & 1.000  & 0.629  & 0.693 \tabularnewline
$\s_{z}=2$  & {[}0.311, 1.730{]}  & 0.960  & 1.418  & 1.000 & 0.700 & 0.739 \tabularnewline
\hline 
 & \multicolumn{6}{c}{ $N=8000$}\tabularnewline
\hline 
$\s_{z}=1$  & {[}0.529, 1.495{]}  & 0.969  & 0.966  & 1.000  & 0.473  & 0.504 \tabularnewline
$\s_{z}=1.5$  & {[}0.460, 1.559{]}  & 0.965  & 1.100  & 1.000  & 0.548  & 0.564 \tabularnewline
$\s_{z}=2$  & {[}0.427, 1.585{]}  & 0.985  & 1.158  & 1.000 & 0.573 & 0.589 \tabularnewline
\hline 
\end{tabular}
\end{table}

\begin{table}[!htbp]
\centering %\setlength{\tabcolsep}{2mm}{
 \caption{Performance of $\protect\g_{0}$ under different values of $\rho$
($\sigma_{z}=1$)}
\label{table:dyn_rho} %
\begin{tabular}{c|cccccc}
\hline 
$\rho$  & CI  & CP  & length  & power  & $l_{MAD}$  & $u_{MAD}$ \tabularnewline
\hline 
 & \multicolumn{6}{c}{ $N=2000$}\tabularnewline
\hline 
$\rho=0$  & {[}0.472, 1.593{]}  & 0.932  & 1.121  & 1.000  & 0.550  & 0.607 \tabularnewline
$\rho=0.25$  & {[}0.446, 1.606{]}  & 0.935  & 1.160 & 1.000 & 0.565  & 0.625 \tabularnewline
$\rho=0.5$  & {[}0.457, 1.631{]}  & 0.943  & 1.173  & 1.000 & 0.548  & 0.648 \tabularnewline
\hline 
 & \multicolumn{6}{c}{ $N=8000$}\tabularnewline
\hline 
$\rho=0$  & {[}0.528, 1.472{]}  & 0.958  & 0.945 & 1.000 & 0.475  & 0.487 \tabularnewline
$\rho=0.25$  & {[}0.529, 1.495{]}  & 0.969  & 0.966  & 1.000  & 0.473  & 0.504 \tabularnewline
$\rho=0.5$  & {[}0.535, 1.515{]}  & 0.975 & 0.980  & 1.000 & 0.467 & 0.519 \tabularnewline
\hline 
\end{tabular}
\end{table}

\section{Empirical Application}

\label{sec:appl}

In this section, we apply our proposed approach to explore the empirical
analysis of income categories using the NLSY79 dataset. The dependent
variable is three categories of (log) income, denoted by the three
values $\{1,2,3\}$, indicating whether an individual falls within
the top 33.3\% highest income bracket, the 33.3\%-66.6\% highest income
range, and the lowest 33.3\% income tier, respectively. We include
two covariates in this analysis: one is tenure, defined as the total
duration (in weeks) with the current employer, and the other is a
residence indicator for whether one lives in an urban or rural area.\footnote{This dataset also contains other crucial factors for income such as
gender and race. However, these variables are time-invariant and cannot
be included for panel models with fixed effects.} We use two periods of panel data from the years 1982 and 1983 as
well as the income data from 1981 as initial values, and there are
$n=5259$ individuals in each period. The following table presents
the summary statistics of these variables.

\begin{table}[!htbp]
\centering %\setlength{\tabcolsep}{2mm}{
 \caption{Application: Summary Statistics}
\label{table: sum} %
\begin{tabular}{c|ccc}
\hline 
 & income category  & residence  & tenure /100 \tabularnewline
mean  & 1.990  & 0.799  & 0.825 \tabularnewline
s.d. & 0.810  & 0.401  & 0.738 \tabularnewline
25\% quantile  & 1.000  & 1.000  & 0.220 \tabularnewline
median  & 2.000  & 1.000  & 0.605 \tabularnewline
75\% quantile  & 3.000  & 1.000  & 1.280 \tabularnewline
minimum  & 1.000  & 0.000  & 0.010 \tabularnewline
maximum  & 3.000  & 1.000  & 4.850 \tabularnewline
\hline 
\end{tabular}
\end{table}

We adopt various ordered response models introduced in Section \ref{subsec:order}
to analyze the income category. The first model is the standard static
model without any endogeneity. The second is the static model, while
treating residence as an endogenous covariate. Residence is potentially
endogenous since the choice of living area is typically endogenously
determined and may be correlated with individuals' unobserved ability
or preference. The last model considers the dynamic model with one
lagged dependent variable, allowing people's income in current periods
to depend on their income in the last period. All three models allow
for individual fixed effects and do not impose any parametric distributions
on time-changing shocks. Proposition \ref{prop:order} characterizes
the identified set of the model coefficients for these three models
using conditional moment inequalities. Similar to Section \ref{sec:simu},
we exploit the kernel-based CLR inference method to construct confidence
intervals. The coefficient of the variable ``residence'' is normalized
to one. Table \ref{table: app} reports the confidence intervals for
the coefficients of the covariate ``tenure'' and the lagged dependent
variable (when applicable).

\begin{table}[!htbp]
\centering %\setlength{\tabcolsep}{2mm}{
 \caption{Application: Income Categories}
\label{table: app} %
\begin{tabular}{c|ccc}
\hline 
 & $\b_{0,1}$ (residence)  & $\b_{0,2}$ (tenure)  & $\g_{0}$ (lag) \tabularnewline
exogenous static model  & 1  & {[}0.612, 0.939{]}  & - \tabularnewline
endogenous static model  & 1  & {[}0.041, 0.939{]}  & - \tabularnewline
dynamic model  & 1  & {[}0.531, 0.694{]}  & {[}0.286, 0.612{]} \tabularnewline
\hline 
\end{tabular}
\end{table}

As shown in Table \ref{table: app}, tenure exhibits a significantly
positive effect on the income category across all specifications.
When allowing for the endogeneity of residence, the confidence interval
for tenure becomes wider, as we need to account for all possible correlations
between residence and unobserved heterogeneity. The results from the
dynamic model show that the income category in the current period
is also positively affected by last period's income, and this effect
is significant. Furthermore, this analysis demonstrates the flexibility
of our approach, which can not only allow for endogeneity introduced
by dynamics but also account for contemporaneous endogeneity.

\section{Conclusion}

\label{sec:conc}

We introduce a general method to (partially) identify index parameters in nonlinear panel data models
based on a partial stationarity condition. This approach accommodates
dynamic models with an arbitrary finite number of lagged outcome variables
and other types of endogenous covariates. We demonstrate how our key
identification strategy can be applied to obtain informative identifying
restrictions in various limited dependent variable models, including
binary choice, ordered response, multinomial choice, as well as censored
outcome models. Finally, we further extend this approach to study
general nonseparable models.

There are some natural directions for follow-up research. In this
paper we focus on the identification of model parameters, but it would
also be interesting to investigate how our identification strategy
can be exploited to obtain informative bounds on average marginal
effects and other counterfactual parameters, say, following the approach
proposed in \citet{botosaru2022identification}.\footnote{\citet{botosaru2022identification} proposes an approach to obtain
bounds on counterfactual CCPs in semiparametric dynamic panel data
models, assuming that the index parameters are (partially) identified.} Also, our identification strategy should be adaptable to exploit
additional restrictions imposed by time-exchangeability assumptions
such as in \citet*{mbakop2023identification}, which not only impose
homogeneity on per-period marginals of errors but also on their intertemporal
dependence structures. Additionally, the idea of bounding an endogenous
object (parametric index in our case) by an arbitrary constant so
as to obtain an object free of endogeneity issues may have broader
applicability beyond the models studied in this work, and it remains
to see whether our key identification strategy can be further adapted
to other structures.

\bibliographystyle{ecta}
\bibliography{stationarity}

\newpage{}

\appendix
\noindent \textbf{\LARGE{}Appendix}{\LARGE\par}

\section{Main Proofs}

\label{sec:appen}

\subsection{Proof of Proposition \ref{prop:disc}}
\begin{proof}
Clearly, $\T_{I}\subseteq\T^{disc}$. Below we show $\T^{disc}\subseteq\T_{I}$
when $X_{it}$ is discrete. Suppose that $\t$ satisfies condition
\eqref{eq:ID_Set} at all 
\[
c\in{\cal C}\left(\t\right):=\left\{ z_{t}^{'}\b+\ol x_{k}^{'}\g:k=1,...,K,t=1,...,T\right\} 
\]
for any realization $z=\left(z_{1},...,z_{T}\right).$ We seek to
show that $\t$ must also satisfy condition \eqref{eq:ID_Set} for
any $c\in\R\backslash{\cal C}\left(\t\right)$. Without loss of generality,
we order elements in ${\cal C}\left(\t\right)$ from the smallest
to the largest as 
\[
\ol c_{1}\leq\ol c_{2}\leq...\leq\ol c_{KT}.
\]
For $c<\ol c_{1}$, we must have 
\[
\P\left(\rest{Y_{it}=1,\ z_{t}^{'}\b+X_{it}^{'}\g\leq c}z\right)\equiv0,
\]
so \eqref{eq:ID_Set} holds trivially. Similarly, for $c>\ol c_{KT}$,
we must have 
\[
\P\left(\rest{Y_{is}=0,\ z_{s}^{'}\b+X_{is}^{'}\g\geq c}z\right)\equiv0,
\]
so \eqref{eq:ID_Set} again holds trivially. For any $c$ s.t. $\ol c_{j}<c<\ol c_{j+1}$
for some $j$, we have 
\[
z_{t}^{'}\b+X_{it}^{'}\g\leq c\quad\iff\quad z_{t}^{'}\b+X_{it}^{'}\g\leq\ol c_{j}
\]
and 
\[
z_{s}^{'}\b+X_{is}^{'}\g\geq c\quad\iff\quad z_{s}^{'}\b+X_{is}^{'}\g\geq\ol c_{j+1}.
\]
which implies 
\begin{align}
\P\left(\rest{Y_{it}=1,\ z_{t}^{'}\b+X_{it}^{'}\g\leq c}Z_{i}=z\right)= & \ \P\left(\rest{Y_{it}=1,\ z_{t}^{'}\b+X_{it}^{'}\g\leq\ol c_{j}}z\right)\label{eq:c_cj_LB}
\end{align}
and 
\begin{align*}
\P\left(\rest{Y_{is}=0,\ z_{s}^{'}\b+X_{is}^{'}\g\geq c}z\right) & =\P\left(\rest{Y_{is}=0,\ z_{s}^{'}\b+X_{is}^{'}\g\geq\ol c_{j+1}}z\right)\\
 & \leq\P\left(\rest{Y_{is}=0,\ z_{s}^{'}\b+X_{is}^{'}\g\geq\ol c_{j}}z\right),
\end{align*}
or equivalently, 
\begin{equation}
1-\P\left(\rest{Y_{is}=0,\ z_{s}^{'}\b+X_{is}^{'}\g\geq\ol c_{j}}z\right)\leq1-\P\left(\rest{Y_{is}=0,\ z_{s}^{'}\b+X_{is}^{'}\g\geq\ol c_{j+1}}z\right).\label{eq:c_cj_UB}
\end{equation}
Since \eqref{eq:ID_Set} holds at $\ol c_{j}$, we have 
\[
\max_{t}\P\left(\rest{Y_{it}=1,\ z_{t}^{'}\b+X_{it}^{'}\g\leq\ol c_{j}}z\right)\leq1-\max_{s}\P\left(\rest{Y_{is}=0,\ z_{s}^{'}\b+X_{is}^{'}\g\geq\ol c_{j}}z\right).
\]
Combining the above with \eqref{eq:c_cj_LB} and \eqref{eq:c_cj_UB},
we have 
\[
\max_{t}\P\left(\rest{Y_{it}=1,\ z_{t}^{'}\b+X_{it}^{'}\g\leq c}z\right)\leq1-\max_{s}\P\left(\rest{Y_{is}=0,\ z_{s}^{'}\b+X_{is}^{'}\g\geq c}z\right).
\]
\end{proof}

\subsection{\label{subsec:pf_sharp_disc}Proof of Theorem \ref{thm:sharp_disc}}

We first clarify the rigorous meaning of ``sharpness'' in Theorem
\ref{thm:sharp_disc} through the following definition.
\begin{defn}
\label{def:sharp}We say that $\T_{I}$ is \emph{sharp} under model
\eqref{eq:bin} and Assumption \ref{assu:PartStat} if, for any
$\t\equiv\left(\b^{'},\g^{'}\right)^{'}\in\T_{I}^{disc}\backslash\left\{ \t_{0}\right\} $,
there exist well-defined latent random variables $\left(\e_{i}^{*},\a_{i}^{*}\right)$
such that:
\begin{itemize}
\item Assumption \ref{assu:PartStat} (partial stationarity) is satisfied,
i.e.,
\[
\rest{\e_{it}^{*}\sim\e_{is}^{*}}Z_{i},\a_{i}^{*},\ \forall t,s=1,...,T.
\]
\item (CCP-J) $\left(\t,\e_{i}^{*},\a_{i}^{*}\right)$ are observationally
equivalent to $\left(\t_{0},\e_{i},\a_{i}\right)$, i.e., formally,
$\left(\t,\e_{i}^{*},\a_{i}^{*}\right)$ produces the following conditional
choice probabilities under model \eqref{eq:bin}:
\begin{equation}
\begin{aligned} & \P\left(v_{it}^{*}\leq w_{t}^{'}\t \ \forall t\text{ s.t. }y_{t}=1,\ v_{is}^{*}>w_{s}^{'}\t \ \forall s\text{ s.t. }y_{s}=0\mid w\right)=p\left(y\mid w\right),\end{aligned}
\label{eq:CCP_Joint}
\end{equation}
where $v_{it}^{*}:=-\left(\e_{it}^{*}+\a_{i}^{*}\right)$ and $p\left(\rest{\cd}w\right)$
denotes the \emph{true} conditional probability
\begin{align*}
p\left(\rest yw\right) & :=\P\left(Y_{it}=y_{t}\,\forall t=1,...,T\mid W_{i}=w\right)\\
 & \equiv\P\left(v_{it}\leq w_{t}^{'}\t_{0} \ \forall t\text{ s.t. }y_{t}=1,\ v_{is}>w_{s}^{'}\t \ \forall s\text{ s.t. }y_{s}=0\mid W_{i}=w\right),
\end{align*}
for any outcome realization $y\equiv\left(y_{1},...,y_{T}\right)\in\left\{ 0,1\right\} ^{T}$,
for almost every realization $w$ of $W_{i}$ (except in a set of
probability measure zero). 
\end{itemize}
\end{defn}
\begin{proof}
We prove Theorem \ref{thm:sharp_disc} by providing a construction
of $\left(\e_{i}^{*},\a_{i}^{*}\right)$ in Definition \ref{def:sharp}
for any candidate parameter $\t\in\T_{I}^{disc}\backslash\left\{ \t_{0}\right\} $.
Under discreteness of $X_{i}$, note that the CCP matching condition
(CCP-J) needs to be satisfied for each realization $x$ of $X_{i}$
and a.s.-$Z_{i}$. Set $\a_{i}^{*}\equiv0$ and $\e_{i}^{*}:=-v_{i}^{*}$.
Then the conclusion follows from Lemma \ref{lem:SharpMarginal} and
\ref{lem:SharpJoint} below.
\end{proof}
\begin{lem}[Relaxed Discrete Problem]
\label{lem:SharpMarginal} Suppose that $\bigcup_{t=1}^{T}\text{Supp}\left(X_{it}\right)$
is finite. For any $\t\equiv\left(\b^{'},\g^{'}\right)^{'}\in\T_{I}^{disc}\backslash\left\{ \t_{0}\right\} $,
there exist well-defined latent random variables $v_{i1}^{*},...,v_{iT}^{*}$
with marginal CDFs $F_{1}^{*},...,F_{T}^{*}$ such that
\begin{equation}
\rest{F_{t}^{*}\left(\rest{\cd}Z_{i}=z\right)=F_{s}^{*}\left(\rest{\cd}Z_{i}=z\right)}\label{eq:PS_v}
\end{equation}
and
\begin{equation}
F_{t}^{*}\left(\rest{w_{t}^{'}\t}W_{i}=w\right)=p_{t}\left(w\right),\quad\forall t,\forall w,\label{eq:CCP_M}
\end{equation}
where
\[
p_{t}\left(w\right):=\P\left(\rest{Y_{it}=1}W_{i}=w\right).
\]
\end{lem}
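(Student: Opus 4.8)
The plan is to construct the latent variables period by period, setting $\a_i^*\equiv0$ so that partial stationarity reduces to requiring a common conditional law $v_{it}^*\mid Z_i=z\stackrel{d}{\sim}v_{is}^*\mid Z_i=z$. Fix a realization $z$ and write $a_{it}:=W_{it}'\t=z_t'\b+X_{it}'\g$ for the (observed, given $z$) index, which takes finitely many values since $X_{it}$ is discrete. The object to build is, for each $t$ and each $w=(z,x)$, a conditional CDF $F_t^*(\cdot\mid W_i=w)$ of $v_{it}^*$; I will arrange that (i) it matches the per-period CCP at the single point $a_{it}$, i.e.\ $F_t^*(a_{it}\mid w)=p_t(w)$ as in \eqref{eq:CCP_M}, and (ii) its $X_i$-mixture given $z$ equals a target CDF $F^*(\cdot\mid z)$ that does not depend on $t$. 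Requirement (ii) delivers \eqref{eq:PS_v} for free, so the whole task is to choose an admissible $F^*$ and then realize the pointwise constraints (i) while integrating back to $F^*$.

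First I would pin down which targets $F^*$ are admissible. A direct consequence of (i) is a sandwich: under the constructed DGP, $\{Y_{it}=1,\,a_{it}\le c\}\subseteq\{v_{it}^*\le c\}$ and $\{Y_{it}=0,\,a_{it}\ge c\}\subseteq\{v_{it}^*>c\}$, so taking conditional probabilities given $z$ forces
\[
L_t(c\mid z,\t)\ \le\ F^*(c\mid z)\ \le\ U_t(c\mid z,\t),\qquad\forall t,\ \forall c,
\]
hence $\ol L(c\mid z,\t)\le F^*(c\mid z)\le\ul U(c\mid z,\t)$. The key point is that this room is exactly guaranteed by hypothesis: $\t\in\T_I^{disc}=\T_I$ (Proposition \ref{prop:disc}) gives $\ol L(c\mid z,\t)\le\ul U(c\mid z,\t)$ for every $c$ and $z$. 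Since $\ol L\to0$ as $c\to-\infty$ and $\ul U\to1$ as $c\to+\infty$, a genuine nondecreasing right-continuous CDF squeezed between the two bounds exists; a clean explicit choice is $F^*(c\mid z):=\ol L(c\mid z,\t)$ for $c$ below a fixed threshold and $F^*(c\mid z):=\ul U(c\mid z,\t)$ above it, which has limits $0$ and $1$.

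The crux is to show that this sandwich is not merely necessary but sufficient to realize the pointwise CCP constraints while recovering $F^*$ — the ``explicit construction''. I would use a quantile coupling: letting $(F^*)^{-1}(\cdot\mid z)$ be the quantile function of $F^*(\cdot\mid z)$ and $S_{it}$ an auxiliary variable, set $v_{it}^*:=(F^*)^{-1}(S_{it}\mid z)$, so that \eqref{eq:PS_v} holds automatically once $S_{it}\mid Z_i=z\sim\text{Unif}(0,1)$. In these uniformized coordinates the threshold $a_{it}$ becomes $\tau_{it}:=F^*(a_{it}\mid z)$, and (i) becomes: conditional on $W_i=w$, place mass $p_t(w)$ below $\tau_{it}$ and mass $1-p_t(w)$ above it. Aggregating the ``forced-below'' mass up to level $u=F^*(c\mid z)$ returns precisely $L_t(c\mid z,\t)$, while the ``forced-above'' mass returns $\P(Y_{it}=0,\,a_{it}\ge c\mid z)=1-U_t(c\mid z,\t)$; the sandwich says these never exceed the available room $u$ and $1-u$ at every level, so a greedy left-to-right (water-filling) placement of masses never gets stuck and yields a conditional law whose $z$-mixture is uniform. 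Because $X_{it}$ is discrete, the index takes finitely many values and this reduces to a finite transportation problem, making the placement completely explicit.

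I expect the main obstacle to be exactly this simultaneous matching: enforcing the pointwise CCP equalities (i) for every $w$ while the aggregate still integrates back to a single common $F^*$. The delicate bookkeeping is the treatment of atoms at the index values — the boundary convention $v_{it}^*\le a_{it}$ (counted as $Y_{it}=1$) versus $v_{it}^*>a_{it}$ — which I would handle by splitting the support of $F^*$ into the index points and auxiliary points placed immediately above each of them, so that ``below'' and ``above'' masses never collide. Verifying that the cumulative inequalities guarantee feasibility at each step is precisely where the content of $\t\in\T_I^{disc}$ is consumed; once the coupling is built, \eqref{eq:PS_v} and \eqref{eq:CCP_M} follow by construction, and Lemma \ref{lem:SharpJoint} then stitches the $T$ per-period marginals into an all-period joint law.
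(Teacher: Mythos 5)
Your proposal is correct, and while it shares the paper's overall skeleton---fix $z$, target a single conditional CDF sandwiched between $\ol L\left(\rest{\cd}z,\t\right)$ and $\ul U\left(\rest{\cd}z,\t\right)$ (the room being exactly what $\t\in\T_{I}^{disc}$ guarantees), then realize the per-$w$ CCP constraints so that the $X_{i}$-mixture returns the common target---it executes the crucial realization step by a genuinely different mechanism. The paper's proof is a closed-form construction: two auxiliary two-point CDFs $F_{t}^{L}\left(\rest{\cd}w\right)$ and $F_{t}^{U}\left(\rest{\cd}w\right)$ whose $z$-mixtures are $L_{t}$ and $U_{t}$, and, at each grid point $c\in{\cal C}$, an explicit convex combination of these evaluated at adjacent grid points (with a case split according to whether $\ol L\left(c\right)$ exceeds $q^{*}=L_{t}\left(\ol c_{t}\right)=U_{t}\left(\ul c_{t}\right)$), so stationarity and CCP matching are verified by direct computation; the paper's target is $\ol L$ itself, extended with a small $\d>0$ shift to preserve right-continuity. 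You instead uniformize via a quantile coupling and solve a finite transportation problem whose sources have prefix/suffix feasible sets, arguing feasibility by a Hall-type water-filling condition. That route is sound, but two points need explicit verification in a full write-up. First, the binding cuts in your flow problem are the mixed sets $[0,u]\cup(u',1]$, not only the one-sided prefixes and suffixes you check; fortunately the mixed-cut demand is the sum of the two one-sided demands, each separately dominated via $L_{t}\leq F^{*}$ and $F^{*}\leq U_{t}$, so your stated conditions do suffice---but this decomposition is the real content of ``the greedy never gets stuck'' and should be spelled out (also, on flats of $F^{*}$ the forced-below mass aggregates to $L_{t}$ at the right end of the flat, which the sandwich still covers by monotone limits). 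Second, Lemma \ref{lem:SharpJoint} as proved in the paper requires each $F_{t}^{*}\left(\rest{\cd}w\right)$ to be finitely supported, so your target $F^{*}$ must be a finite step function: your explicit splice of $\ol L$ below a threshold with $\ul U$ above qualifies, but an arbitrary sandwiched CDF would not. Your device of ``auxiliary points placed immediately above each index point'' is exactly the paper's $\d$-shift. On balance, the paper's construction buys complete explicitness and a direct verification, while your formulation isolates the inequality $\ol L\leq\ul U$ as precisely a feasibility (max-flow) condition, which is conceptually illuminating and would generalize more readily.
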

\begin{proof}
For any $\t\equiv\left(\b^{'},\g^{'}\right)^{'}\in\T_{I}^{disc}\backslash\left\{ \t_{0}\right\} $,
below we show how to construct $v_{i1}^{*},...,v_{iT}^{*}$, or equivalently,
the conditional CDFs $F_{1}^{*}\left(\rest cW_{i}=w\right),...,F_{T}^{*}\left(\rest cW_{i}=w\right)$
for each realization $w$ and each $c\in\R$ so that (i) condition
\eqref{eq:PS_v} is satisfied so that partial stationarity holds;
and (ii) condition \eqref{eq:CCP_M} is satisfied so that per-period
marginal CCPs are matched.

Fix a specific realization of the exogenous covariates at $z\equiv\left(z_{1},...,z_{T}\right)$.
We construct the (conditional) CDF $F_{t}^{*}$ of $v_{it}^{*}$ for
each $t=1,...,T$ and each given $z$ in the following manner. 

From now on, we suppress ``$|Z_{i}=z$'' from all functions that
are defined conditional on $z$. However, we will write out $F_{t}^{*}\left(\rest{\cd}z\right)$
and $F_{t}^{*}\left(\rest{\cd}w\right)$ explicitly to emphasize the
difference in the conditioning variables.

Define
\begin{align*}
L_{t}\left(c\right) & :=\P\left(\rest{Y_{it}=1,\ z_{t}^{'}\b+X_{it}^{'}\g\leq c}Z_{i}=z\right),\\
U_{t}\left(c\right) & :=1-\P\left(\rest{Y_{it}=0,\ z_{t}^{'}\b+X_{it}^{'}\g\geq c}Z_{i}=z\right),
\end{align*}
and 
\begin{align*}
\ol L\left(c\right):=\max_{s}L_{s}\left(c\right),\quad & \ul U\left(c\right):=\min_{s}U_{s}\left(c\right).
\end{align*}
Since $\t\equiv\left(\b^{'},\g^{'}\right)^{'}\in\T_{I}^{disc}\backslash\left\{ \t_{0}\right\} $,
by \ref{eq:ID_Set} we have, 
\[
\ol L\left(c\right)\leq\ul U\left(c\right),\quad\forall c\in\R.
\]
Observe that both $\ol L\left(c\right)$ and $\ul U\left(c\right)$
are weakly increasing in $c$.

Since$X_{it}$ can only take $K$ values $\ol x_{1},...,\ol x_{K}$,
the parametric index $w_{t}^{'}\t\equiv z_{t}^{'}\b+x_{t}^{'}\g$
can only take values in the set
\begin{align*}
{\cal C} & :=\left\{ z_{t}^{'}\b+\ol x_{k}^{'}\g:t=1,...,T,k=1,...,K\right\} .\\
 & =\left\{ c_{\left(1\right)},...,c_{\left(\kappa\right)}:c_{\left(1\right)}<...<c_{\left(\kappa\right)}\right\} 
\end{align*}
and write
\[
\ul c:=\min{\cal C},\quad\ol c:=\max{\cal C}
\]
so that $\ul c\leq W_{it}^{'}\t\leq\ol c$ for all $t$.

Let $\d>0$ be a sufficiently small positive constant.\footnote{The small positive constant $\d>0$ is used to ensure the right continuity
of CDFs defined afterwards. Let $\ul{\d}$ to be smallest distance
between any two \emph{distinct} points in ${\cal C}$. If $\ul{\d}>0$,
then we may set $\d:=\ul{\d}/2$. If $\ul{\d}=0$, $\d$ can be set
as any positive number, say, $\d:=1$. However, it is worth pointing
out that, if $\ul{\d}=0$, then $W_{it}^{'}\t$ is degenerate once
conditional given $z$, and is thus a deterministic function of $z$,
which would correspond to a degenerate case where there is effectively
no endogenous covariate $X_{it}$. Sharpness in such fully exogenous
case is easier to establish and does not require our new proof. That
said, for technical comprehensiveness, in the case of $\ul{\d}=0$, } For each $t=1,...,T,$ we show how to construct $v_{t}^{*}$ with
CDF $F_{t}^{*}$ 
\begin{equation}
F_{t}^{*}\left(\rest cz\right)\equiv\begin{cases}
0, & \text{if }c<\ul c,\\
\ol L\left(c\right), & \text{if }\ul c\leq c<\ol c+\d,\\
1, & \text{if }c\geq\ol c+\d,
\end{cases}\label{eq:PS_vt}
\end{equation}
and
\begin{equation}
F_{t}^{*}\left(\rest cw\right)=p_{t}\left(w\right)\quad\forall c\in{\cal C}.\label{eq:CCP_Mt}
\end{equation}

Clearly, partial stationarity \eqref{eq:PS_v} will be satisfied under
\eqref{eq:PS_vt}, the right-hand side of which does not depend on
the time index $t$. Furthermore, since $w_{t}^{'}\t\in{\cal C}$
by the definition of ${\cal C},$ \eqref{eq:CCP_Mt} would imply \eqref{eq:CCP_M},
i.e., the marginal CCPs will be matched for each $t$.

\textbf{Step 1:}

We construct the conditional CDF of $v^{*}|W_{i}=w$ using two auxiliary
CDFs $F_{t}^{L}$ and $F_{t}^{U}$, defined by
\begin{align*}
F_{t}^{L}\left(\rest cw\right) & =\begin{cases}
0, & c<w_{t}^{'}\t,\\
p_{t}\left(w\right), & w_{t}^{'}\t\leq c<\ol c_{t}+\d,\\
1, & c\geq\ol c_{t}+\d,
\end{cases}
\end{align*}
and
\[
F_{t}^{U}\left(\rest cw\right)=\begin{cases}
0, & c<\ul c_{t},\\
p_{t}\left(w\right), & \ul c_{t}\leq c<w_{t}^{'}\t+\d,\\
1, & c\geq w_{t}^{'}\t+\d.
\end{cases}
\]
where
\[
\ol c_{t}:=\max{\cal C}_{t},\quad\ul c_{t}:=\min{\cal C}_{t},\quad{\cal C}_{t}:=\left\{ z_{t}^{'}\b+\ol x_{k}^{'}\g:k=1,...,K\right\} .
\]
Clearly, by construction we have
\begin{equation}
F_{t}^{L}\left(\rest{w_{t}^{'}\t}w\right)=F_{t}^{U}\left(\rest{w_{t}^{'}\t}w\right)=p_{t}\left(w\right).\label{eq:CCP_Mt_UL}
\end{equation}
Furthermore, for any $c\in\left[\ul c_{t},\ol c_{t}\right]$, we have
\begin{align*}
F_{t}^{L}\left(\rest cz\right) & =\E\left[\rest{F_{t}^{L}\left(\rest cW_{i}\right)}Z_{i}=z\right]\\
 & =\E\left[\ind\left\{ W_{it}^{'}\t\leq c\right\} p_{t}\left(W_{i}\right)\right]\\
 & =\E\left[\P\left(\rest{Y_{i}=1\text{ and }W_{it}^{'}\t\leq c}W_{i}=w\right)\right]\\
 & =\P\left(\rest{Y_{i}=1\text{ and }W_{it}^{'}\t\leq c}Z_{i}=z\right)\\
 & =L_{t}\left(\rest cz\right),
\end{align*}
and similarly
\begin{align*}
F_{t}^{U}\left(\rest cz\right) & =\E\left[F_{t}^{U}\left(\rest cW_{i}\right)\right]\\
 & =\E\left[\rest{1-\left(1-p_{t}\left(W_{i}\right)\right)\ind\left\{ W_{it}^{'}\t\geq c-\d\right\} }Z_{i}=z\right]\\
 & =\E\left[\rest{1-\P\left(\rest{Y_{it}=0\text{ and }W_{it}^{'}\t\geq c-\d}W_{i}\right)}Z_{i}=z\right]\\
 & =1-\P\left(\rest{Y_{i}=0\text{ and }W_{it}^{'}\t\geq c-\d}Z_{i}=z\right)\\
 & =1-\P\left(\rest{Y_{i}=0\text{ and }W_{it}^{'}\t\geq c}Z_{i}=z\right)\\
 & =U_{t}\left(\rest cz\right),
\end{align*}
where the second last equality holds for sufficiently small $\d>0$
due to the discreteness of ${\cal C}$. 

In summary, we have
\begin{align}
F_{t}^{L}\left(\rest cz\right) & =\begin{cases}
L_{t}\left(c\right), & \forall c<\ol c_{t}+\d,\\
1, & \forall c\geq\ol c_{t}+\d,
\end{cases}\nonumber \\
F_{t}^{U}\left(\rest cz\right) & =\begin{cases}
0, & \forall c<\ul c_{t},\\
U_{t}\left(c\right), & \forall c\geq\ul c_{t},
\end{cases}\label{eq:Ft_z_UL}
\end{align}
Furthermore, observe that 
\[
L_{t}\left(\cd\right)\leq F_{t}^{L}\left(\rest{\cd}z\right)\leq F_{t}^{U}\left(\rest{\cd}z\right)\leq U_{t}\left(\cd\right).
\]

\textbf{Step 2:}

We now construct $F_{t}^{*}\left(\rest cw\right)$ for $c\in{\cal C}$.
Define
\begin{align*}
{\cal U}_{t} & :=\left\{ U_{t}\left(c\right):c\in{\cal C}\right\} \equiv\left\{ U_{t}\left(c\right):c\in{\cal C}_{t}\right\} \\
{\cal L}_{t} & :=\left\{ L_{t}\left(c\right):c\in{\cal C}\right\} \equiv\left\{ L_{t}\left(c\right):c\in{\cal C}_{t}\right\} 
\end{align*}
Notice that ${\cal L}_{t}\leq{\cal U}_{t}$ and 
\begin{align*}
{\cal L}_{t}\cap{\cal U}_{t} & =\left\{ q^{*}:=U_{t}\left(\ul c_{t}\right)=L_{t}\left(\ol c_{t}\right)\right\} .
\end{align*}
In addition, since
\[
U_{t}\left(\ul c_{t}\right)\leq L_{t}\left(\ul c_{t}\right)\leq L_{t}\left(c\right)\leq\ol L\left(c\right)\leq\ul U\left(c\right)\leq U_{t}\left(c\right)\leq U_{t}\left(\ol c_{t}\right)
\]
we have
\[
\ol L\left({\cal C}\right):=\left\{ \ol L\left(c\right):c\in{\cal C}\right\} \subseteq{\cal L}_{t}\cup{\cal U}_{t}.
\]

Hence, for each $c\in{\cal C}$, there are two exhaustive cases:
\begin{itemize}
\item (i) $c$ is such that $\ol L\left(c\right)>q^{*}$. 

For such $c$, there exists some $1\leq j\leq\kappa$ such that
\[
\ul c_{t}\leq c_{\left(j-1\right)}<c_{\left(j\right)}\leq\ol c_{t}
\]
 and
\[
U_{t}\left(c_{\left(j-1\right)}\right)\leq\ol L\left(c\right)\leq U_{t}\left(c_{\left(j\right)}\right).
\]
Since the inequalities above are weak, in principle there could be
multiple such $j$'s, in which case we take $j$ to be the smallest
one. 

Now, we set 
\[
F_{t}^{*}\left(\rest cw\right)=\a F_{t}^{U}\left(\rest{c_{\left(j-1\right)}}w\right)+\left(1-\a\right)F_{t}^{U}\left(\rest{c_{\left(j\right)}}w\right)
\]
with 
\[
\a:=\begin{cases}
1, & \text{if }U_{t}\left(c_{\left(j-1\right)}\right)=U_{t}\left(c_{\left(j\right)}\right)\\
\frac{U_{t}\left(c_{\left(j\right)}\right)-\ol L\left(c\right)}{U_{t}\left(c_{\left(j\right)}\right)-U_{t}\left(c_{\left(j-1\right)}\right)}, & \text{if }U_{t}\left(c_{\left(j-1\right)}\right)<U_{t}\left(c_{\left(j\right)}\right)
\end{cases}
\]
Then we have the partial stationarity condition satisfied at $c$
\[
F_{t}^{*}\left(\rest cz\right)=\a U_{t}\left(c_{\left(j-1\right)}\right)+\left(1-\a\right)U_{t}\left(c_{\left(j\right)}\right)=\ol L\left(c\right).
\]

Furthermore, since $\ol L\left(c\right)\leq U_{t}\left(c\right)$,
we must have $c_{\left(j\right)}\leq$$c$. Thus if $w$ is such that
$w_{t}^{'}\t=c$, then we must have 
\[
\ul c_{t}\leq c_{\left(j-1\right)}<c_{\left(j\right)}\leq\min\left\{ \ol c_{t},c=w_{t}^{'}\t\right\} <w_{t}^{'}\t+\d
\]
Hence, by the definition of $F_{t}^{U}$, we have
\begin{align*}
F_{t}^{*}\left(\rest{w_{t}^{'}\t}w\right) & =\a F_{t}^{U}\left(\rest{c_{\left(j-1\right)}}w\right)+\left(1-\a\right)F_{t}^{U}\left(\rest{c_{\left(j\right)}}w\right)\\
 & =\a p_{t}\left(w\right)+\left(1-\a\right)p_{t}\left(w\right)\\
 & =p_{t}\left(w\right),
\end{align*}
which satisfies the period-$t$ CCP matching condition at $w$.
\item (ii) $c$ is such that $\ol{L}\left(c\right)\leq q^{*}.$

For such $c$, there exists some $1\leq j\leq\kappa$ such that
\[
\ul c_{t}\leq c_{\left(j-1\right)}<c_{\left(j\right)}\leq\ol c_{t}
\]
 and
\[
L_{t}\left(c_{\left(j-1\right)}\right)\leq\ol L\left(c\right)\leq L_{t}\left(c_{\left(j\right)}\right).
\]
Since the inequalities above are weak, in principle there could multiple
such $j$'s, in which case we take $j$ to be the largest one. 

Now, we set 
\[
F_{t}^{*}\left(\rest cw\right)=\a F_{t}^{L}\left(\rest{c_{\left(j-1\right)}}w\right)+\left(1-\a\right)F_{t}^{L}\left(\rest{c_{\left(j\right)}}w\right)
\]
with 
\[
\a:=\begin{cases}
1, & \text{if }L_{t}\left(c_{\left(j-1\right)}\right)=L_{t}\left(c_{\left(j\right)}\right)\\
\frac{L_{t}\left(c_{\left(j\right)}\right)-\ol L\left(c\right)}{L_{t}\left(c_{\left(j\right)}\right)-L_{t}\left(c_{\left(j-1\right)}\right)}, & \text{if }L_{t}\left(c_{\left(j-1\right)}\right)<L_{t}\left(c_{\left(j\right)}\right)
\end{cases}
\]
Then we have the partial stationarity condition satisfied at $c$
\[
F_{t}^{*}\left(\rest cz\right)=\a L_{t}\left(c_{\left(j-1\right)}\right)+\left(1-\a\right)L_{t}\left(c_{\left(j\right)}\right)=\ol L\left(c\right).
\]

Furthermore, since $\ol{L}\left(c\right)\geq L_{t}\left(c\right)$, we
must have $c_{\left(j-1\right)}\geq$$c$. Thus if $w$ is such that
$w_{t}^{'}\t=c$, then we must have 
\[
\max\left\{ \ul c_{t},c=w_{t}^{'}\t\right\} \leq c_{\left(j-1\right)}<c_{\left(j\right)}\leq\ol c_{t}
\]
Hence, by the definition of $F_{t}^{L}$, we have
\begin{align*}
F_{t}^{*}\left(\rest{w_{t}^{'}\t}w\right) & =\a F_{t}^{L}\left(\rest{c_{\left(j-1\right)}}w\right)+\left(1-\a\right)F_{t}^{L}\left(\rest{c_{\left(j\right)}}w\right)\\
 & =\a p_{t}\left(w\right)+\left(1-\a\right)p_{t}\left(w\right)\\
 & =p_{t}\left(w\right),
\end{align*}
which satisfies the period-$t$ CCP matching condition at $w$
\end{itemize}
~

\textbf{Step 3:}

We now construct $F_{t}^{*}\left(\rest cw\right)$ for $c\in\R\backslash{\cal C}$.
We set $F_{t}^{*}\left(\rest cw\right)=0$ for $c<\ul c$ and $F_{t}^{*}\left(\rest cw\right)=1$
for $c\geq\ol c+\d$. This guarantees \eqref{eq:PS_vt} at any $c\in\R\backslash{\cal C}$.

~

This completes the construction $F_{t}^{*}\left(\rest cw\right)$
for all $c\in\R$ at each $t=1,...,T$. Together, we have ensured
that: 

(a) $F_{t}^{*}\left(\rest{\cd}w\right)$ is a proper conditional CDF;

(b) partial stationarity holds since \eqref{eq:PS_vt} is satisfied
for all $c\in\R$;

(c) period-$t$ marginal CCPs are matched since \eqref{eq:CCP_M}
holds for all $c\in{\cal C}_{t}$ (in Step 2).

\noindent Observe also that each $F_{t}^{*}\left(\rest{\cd}w\right)$
defines a discrete distribution with finite support points. 

\end{proof}
\begin{lem}[Marginal to Joint]
\label{lem:SharpJoint} There exists a well-defined joint distribution
of $\left(v_{i1}^{*},...,v_{iT}^{*}\right)$ with period$-t$ marginal
CDF (conditional on $w$) given by
\[
F_{t}^{*}\left(\rest{\cd}w\right)
\]
as constructed in Lemma \ref{lem:SharpMarginal} such that \eqref{eq:CCP_Joint}
holds.
\end{lem}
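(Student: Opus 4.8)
The plan is to build the conditional law of $\left(v_{i1}^{*},\dots,v_{iT}^{*}\right)$ given each realization $W_{i}=w$ while holding fixed the per-period marginals $F_{t}^{*}\left(\rest{\cd}w\right)$ delivered by Lemma \ref{lem:SharpMarginal}, and to do so through two reductions. First, partial stationarity constrains only the marginals: it is a statement about the time-$t$ law of $v_{it}^{*}$ given $Z_{i}$ (recall $\a_{i}^{*}\equiv0$), not about the intertemporal dependence. Hence \emph{any} joint law whose time-$t$ conditional-on-$w$ marginal equals $F_{t}^{*}\left(\rest{\cd}w\right)$ automatically inherits partial stationarity, because its conditional-on-$z$ marginal is $\E\left[F_{t}^{*}\left(c\mid W_{i}\right)\mid Z_{i}=z\right]$, which Lemma \ref{lem:SharpMarginal} already shows to be time-invariant via \eqref{eq:PS_v}. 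Second, writing $Y_{it}^{*}:=\ind\left\{ v_{it}^{*}\leq w_{t}^{'}\t\right\} $, the event inside \eqref{eq:CCP_Joint} is exactly $\left\{ \left(Y_{i1}^{*},\dots,Y_{iT}^{*}\right)=y\right\} $, so it suffices to make the binarized vector $\left(Y_{i1}^{*},\dots,Y_{iT}^{*}\right)$ have law exactly $p\left(\rest{\cd}w\right)$.

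This suggests the following explicit two-stage construction, performed separately for each $w$. In the first stage, draw a binary vector $\left(B_{1},\dots,B_{T}\right)$ from the true joint outcome law $p\left(\rest{\cd}w\right)$. In the second stage, conditionally on this vector and independently across $t$, draw $v_{it}^{*}$ as follows: if $B_{t}=1$, draw it from the below-threshold law $H_{t}^{-}\left(\rest cw\right):=F_{t}^{*}\left(\rest cw\right)/p_{t}\left(w\right)$ on $\left(-\infty,w_{t}^{'}\t\right]$, which is a proper CDF since $F_{t}^{*}\left(\rest{w_{t}^{'}\t}w\right)=p_{t}\left(w\right)$ by \eqref{eq:CCP_M}; if $B_{t}=0$, draw it from the above-threshold law $H_{t}^{+}\left(\rest cw\right):=\left(F_{t}^{*}\left(\rest cw\right)-p_{t}\left(w\right)\right)/\left(1-p_{t}\left(w\right)\right)$ on $\left(w_{t}^{'}\t,\infty\right)$. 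By construction $Y_{it}^{*}=B_{t}$ almost surely, hence $\left(Y_{i1}^{*},\dots,Y_{iT}^{*}\right)\sim p\left(\rest{\cd}w\right)$ and \eqref{eq:CCP_Joint} holds at once.

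It then remains to confirm that the $t$-th marginal of the constructed joint law is $F_{t}^{*}\left(\rest{\cd}w\right)$, so that the marginal-matching and partial-stationarity properties of Lemma \ref{lem:SharpMarginal} are preserved. Conditioning on $B_{t}$ and using the law of total probability, for $c\leq w_{t}^{'}\t$ one gets $\P\left(\rest{v_{it}^{*}\leq c}w\right)=p_{t}\left(w\right)H_{t}^{-}\left(\rest cw\right)=F_{t}^{*}\left(\rest cw\right)$, and for $c>w_{t}^{'}\t$ one gets $p_{t}\left(w\right)+\left(1-p_{t}\left(w\right)\right)H_{t}^{+}\left(\rest cw\right)=F_{t}^{*}\left(\rest cw\right)$. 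The single consistency fact that makes the two stages cohere is $\sum_{y:y_{t}=1}p\left(\rest yw\right)=p_{t}\left(w\right)=F_{t}^{*}\left(\rest{w_{t}^{'}\t}w\right)$, i.e.\ the first-stage binary marginal agrees with the threshold mass of $F_{t}^{*}$.

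I do not anticipate a genuine obstacle at this step: once the two reductions are in place, the construction is just a coupling of fixed marginals to a prescribed joint law of their binarizations, and the real work sits upstream in Lemma \ref{lem:SharpMarginal}. The only points needing care are the degenerate cases $p_{t}\left(w\right)\in\left\{ 0,1\right\} $, where $B_{t}$ is deterministic and only one of $H_{t}^{\pm}$ is ever invoked (so the normalization is vacuous), and the routine measurability of $w\mapsto\left(p\left(\rest{\cd}w\right),F_{t}^{*}\left(\rest{\cd}w\right)\right)$, which ensures the per-$w$ laws splice into a well-defined joint distribution of $\left(v_{i}^{*},W_{i}\right)$ and hence into the observationally equivalent DGP required by Definition \ref{def:sharp}.
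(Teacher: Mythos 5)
Your proposal is correct and is essentially the paper's own argument: the paper's explicit joint pmf $f^{*}\left(\rest{{\bf c}}w\right)=p\left(\rest{y\left({\bf c}\right)}w\right)\prod_{t=1}^{T}f_{t}^{*}\left(\rest{c_{t}}w\right)\big/\left[p_{t}\left(w\right)^{y_{t}\left(c_{t}\right)}\left(1-p_{t}\left(w\right)\right)^{1-y_{t}\left(c_{t}\right)}\right]$ is exactly your two-stage coupling (draw the outcome path from the true joint CCP $p\left(\rest{\cd}w\right)$, then conditionally independently across $t$ draw $v_{it}^{*}$ from the truncated marginals $H_{t}^{\pm}$), with your law-of-total-probability checks corresponding to the paper's Steps 2 and 4. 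The only cosmetic difference is that the paper binarizes at the support point $c_{t}^{*}$ rather than at $w_{t}^{'}\t$, which changes nothing since $F_{t}^{*}\left(\rest{\cd}w\right)$ is flat between the two, so the two binarizations agree $f_{t}^{*}$-almost surely.
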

\begin{rem}
\label{rem:mar_joint_cts}For each $w$, the constructed per-period
marginals $F_{t}^{*}\left(\rest{\cd}w\right)$ from Lemma \ref{lem:SharpMarginal}
defines a discrete distribution with finite support points. This remains
true for the $F_{t}^{*}\left(\rest{\cd}w\right)$ constructed in the
proof of sharpness in the continuous case (Theorem \ref{thm:sharp_cts}):
even though $w$ is continuously distributed, the CDF $F_{t}^{*}\left(\rest{\cd}w\right)$
remains a discrete one. Since the subsequent proof for Lemma \ref{lem:SharpJoint}
is conditional on $w$ and only utilizes the discreteness of $F_{t}^{*}\left(\rest{\cd}w\right)$,
Lemma \ref{lem:SharpJoint} also holds for $F_{t}^{*}\left(\rest{\cd}w\right)$
constructed in the proof of Theorem \ref{thm:sharp_cts} as well.
\end{rem}
\begin{proof}
For each $w$, the constructed per-period marginals $F_{t}^{*}\left(\rest{\cd}w\right)$
from Lemma \ref{lem:SharpMarginal} defines a discrete distribution
with finite support points. Let ${\cal C}_{w}$ denote the union of
support points of $F_{t}^{*}\left(\rest{\cd}w\right)$ across all
$t=1,...,T$, and let $f_{t}^{*}\left(\rest{\cd}w\right)$ denote
the corresponding probability mass function for $F_{t}^{*}\left(\rest{\cd}w\right)$.
Then, by definition,
\[
F_{t}^{*}\left(\rest cw\right)=\sum_{\tilde{c}\in{\cal C}_{w}:\tilde{c}\leq c}f_{t}^{*}\left(\rest{\tilde{c}}w\right),\quad\forall c.
\]
We now show how to construct a joint pmf $f^{*}\left(\rest{\cd}w\right)$
whose period-$t$ marginals are given by $f_{t}^{*}\left(\rest{\cd}w\right)$. 

For each $t$, define
\begin{equation}
c_{t}^{*}:=\max\left\{ c\in{\cal C}_{w}:\ F_{t}^{*}\left(\rest cw\right)=F_{t}^{*}\left(\rest{w_{t}^{'}\t}w\right)\right\} ,\label{eq:ct*}
\end{equation}
which exists and is unique by the construction in Lemma \ref{lem:SharpMarginal}.

For each ${\bf c}\equiv\left(c_{1},...,c_{T}\right)\in{\cal C}_{w}^{T}$,
write
\begin{align*}
y_{t}\left(c_{t}\right) & :=\ind\left\{ c_{t}\leq c_{t}^{*}\right\} ,\\
y\left({\bf c}\right) & :=\left(y_{1}\left(c_{1}\right),...,y_{T}\left(c_{T}\right)\right)^{'}.
\end{align*}
and define
\begin{equation}
f^{*}\left(\rest{{\bf c}}w\right):=p\left(\rest{y\left({\bf c}\right)}w\right)\prod_{t=1}^{T}\frac{f_{t}^{*}\left(\rest{c_{t}}w\right)}{p_{t}\left(w\right)^{y_{t}\left(c_{t}\right)}\left(1-p_{t}\left(w\right)\right)^{1-y_{t}\left(c_{t}\right)}},\label{eq:p*cw}
\end{equation}
under the convention $0^{0}=1.$

We show that $f^{*}\left(\rest{\cd}w\right)$ is a probability mass
function that characterizes a well-defined joint distribution of $\left(v_{i1}^{*},...,v_{iT}^{*}\right)$
and satisfies the requirements in Lemma \ref{lem:SharpJoint}.

\textbf{Step 1:}

First, note that the right-hand \eqref{eq:p*cw} only involves known
(observed or constructed) quantities. In particular:
\begin{itemize}
\item $p\left(\rest yw\right):=\P\left(\rest{Y_{it}=y_{t}\forall t=1,...,T}W_{i}=w\right)$
is the (observed) joint CCP of observing a particular path of outcomes
$y$ across all periods, given $W_{i}=w$.
\item $f_{t}^{*}\left(\rest cw\right)$ is the period-$t$ marginal pmf
corresponding to $F_{t}^{*}\left(\rest cw\right)$ defined in Lemma
\ref{lem:SharpMarginal}.
\item $f_{t}\left(w\right)=\P\left(\rest{Y_{it}=1}W_{i}=w\right)$ is the
observed period-$t$ marginal CCP, with 
\begin{equation}
p_{t}\left(w\right)=F_{t}\left(\rest{c_{t}^{*}}w\right)=\sum_{\tilde{c}\in{\cal C}_{w}^{T}:\tilde{c}\leq c_{t}^{*}}f_{t}^{*}\left(\rest{\tilde{c}}w\right).\label{eq:pt_f*t}
\end{equation}
\end{itemize}
\textbf{$\quad\ $Step 2:}

We show that the period-$t$ marginal pmf implied by $f^{*}\left(\rest{\cd}w\right)$
coincides with $f_{t}^{*}\left(\rest{\cd}w\right)$. To see this,
observe that, for any $t$ and $y_{t}\in\left\{ 0,1\right\} $, we
have
\begin{align}
 & \sum_{c_{t}\in{\cal C}_{w}^{T}:y_{t}\left(c_{t}\right)=y_{t}}\frac{f_{t}^{*}\left(\rest{c_{t}}w\right)}{p_{t}\left(w\right)^{y_{t}\left(c_{t}\right)}\left(1-p_{t}\left(w\right)\right)^{1-y_{t}\left(c_{t}\right)}}\nonumber \\
= & y_{t}\sum_{c_{t}\leq c_{t}^{*}}\frac{f_{t}^{*}\left(\rest{c_{t}}w\right)}{p_{t}\left(w\right)}+\left(1-y_{t}\right)\sum_{c_{t}>c_{t}^{*}}\frac{f_{t}^{*}\left(\rest{c_{t}}w\right)}{1-p_{t}\left(w\right)}\nonumber \\
= & y_{t}\frac{\sum_{c_{t}\leq c_{t}^{*}}f_{t}^{*}\left(\rest{c_{t}}w\right)}{\sum_{c_{t}\leq c_{t}^{*}}f_{t}^{*}\left(\rest{c_{t}}w\right)}+\left(1-y_{t}\right)\frac{\sum_{c_{t}>c_{t}^{*}}f_{t}^{*}\left(\rest{c_{t}}w\right)}{\sum_{c_{t}>c_{t}^{*}}f_{t}^{*}\left(\rest{c_{t}}w\right)}\text{ by }\eqref{eq:pt_f*t}\nonumber \\
= & y_{t}\cd1+\left(1-y_{t}\right)\cd1\nonumber \\
= & 1,\label{eq:sum_to_1}
\end{align}
Hence, for any $c_{t}\in\ol{{\cal C}},$the period-$t$ marginal implied
by $f^{*}\left(\rest{\cd}w\right)$ is
\begin{align*}
 & \sum_{c_{-t}\in\ol{{\cal C}}^{T-1}}f^{*}\left(\rest{c_{t},c_{-t}}w\right)\\
= & \frac{f_{t}^{*}\left(\rest{c_{t}}w\right)}{p_{t}\left(w\right)^{y_{t}\left(c_{t}\right)}\left(1-p_{t}\left(w\right)\right)^{1-y_{t}\left(c_{t}\right)}}\sum_{c_{-t}}p\left(\rest{y\left(c_{t},c_{-t}\right)}w\right)\prod_{s\neq t}\frac{f_{s}^{*}\left(\rest{c_{s}}w\right)}{p_{s}\left(w\right)^{y_{s}\left(c_{s}\right)}\left(1-p_{s}\left(w\right)\right)^{1-y_{s}\left(c_{s}\right)}}\\
= & \frac{f_{t}^{*}\left(\rest{c_{t}}w\right)}{p_{t}\left(w\right)^{y_{t}\left(c_{t}\right)}\left(1-p_{t}\left(w\right)\right)^{1-y_{t}\left(c_{t}\right)}}\sum_{y_{-t}}p\left(\rest{y_{t}\left(c_{t}\right),y_{-t}}w\right)\sum_{c_{-t}:y_{-t}\left(c_{-t}\right)=y_{-t}}\prod_{s\neq t}\frac{f_{s}^{*}\left(\rest{c_{s}}w\right)}{p_{s}\left(w\right)^{y_{s}\left(c_{s}\right)}\left(1-p_{s}\left(w\right)\right)^{1-y_{s}\left(c_{s}\right)}}\\
= & \frac{f_{t}^{*}\left(\rest{c_{t}}w\right)}{p_{t}\left(w\right)^{y_{t}\left(c_{t}\right)}\left(1-p_{t}\left(w\right)\right)^{1-y_{t}\left(c_{t}\right)}}\sum_{y_{-t}}p\left(\rest{y_{t}\left(c_{t}\right),y_{-t}}w\right)\prod_{s\neq t}\sum_{c_{s}:y_{s}\left(c_{s}\right)=y_{s}}\frac{f_{s}^{*}\left(\rest{c_{s}}w\right)}{p_{s}\left(w\right)^{y_{s}\left(c_{s}\right)}\left(1-p_{s}\left(w\right)\right)^{1-y_{s}\left(c_{s}\right)}}\\
= & \frac{f_{t}^{*}\left(\rest{c_{t}}w\right)}{p_{t}\left(w\right)^{y_{t}\left(c_{t}\right)}\left(1-p_{t}\left(w\right)\right)^{1-y_{t}\left(c_{t}\right)}}\sum_{y_{-t}}p\left(\rest{y_{t}\left(c_{t}\right),y_{-t}}w\right)\prod_{s\neq t}1\text{ by \eqref{eq:sum_to_1}}\\
= & \frac{f_{t}^{*}\left(\rest{c_{t}}w\right)}{p_{t}\left(w\right)^{y_{t}\left(c_{t}\right)}\left(1-p_{t}\left(w\right)\right)^{1-y_{t}\left(c_{t}\right)}}p_{t}\left(w\right)^{y_{t}\left(c_{t}\right)}\left(1-p_{t}\left(w\right)\right)^{1-y_{t}\left(c_{t}\right)}\\
= & f_{t}^{*}\left(\rest{c_{t}}w\right).
\end{align*}

\textbf{Step 3:}

We show that $f^{*}\left(\rest{\cd}w\right)$ is a valid joint pmf.
Clearly, $f^{*}\left(\rest{{\bf c}}w\right)\geq0$, since all quantities
on the right-hand side of \eqref{eq:p*cw} are nonnegative. In addition,
since the period-$t$ marginal of $f^{*}\left(\rest{\cd}w\right)$
coincides with $f_{t}^{*}\left(\rest{\cd}w\right)$ as established
in (2), we must have
\[
\sum_{{\bf c}}f^{*}\left(\rest{{\bf c}}w\right)=\sum_{c_{t}}f_{t}^{*}\left(\rest{c_{t}}w\right)=1.
\]
Hence, $f^{*}\left(\rest cw\right)$ is a valid pmf and thus characterizes
a well-defined joint distribution of $\left(v_{i1}^{*},...,v_{iT}^{*}\right)$.

\textbf{Step 4:}

Lastly, we show that \eqref{eq:CCP_Joint} holds under $f^{*}\left(\rest{\cd}w\right)$.
For any $y\in\left\{ 0,1\right\} ^{T},$
\begin{align*}
\begin{aligned} & \P\left(v_{it}^{*}\leq w_{t}^{'}\t\forall t\text{ s.t. }y_{t}=1,\ v_{is}^{*}>w_{s}^{'}\t\forall s\text{ s.t. }y_{s}=0\mid w\right),\\
= & \sum_{{\bf c}}f^{*}\left(\rest{{\bf c}}w\right)\ind\left\{ c_{t}\leq c_{t}^{*}\,\forall t\text{ s.t. }y_{t}=1,\ c_{s}>c_{s}^{*}\ \forall s\text{ s.t. }y_{s}=0\right\} \\
= & \sum_{{\bf c}:y\left({\bf c}\right)=y}f^{*}\left(\rest{{\bf c}}w\right)\\
= & \sum_{{\bf c}:y\left({\bf c}\right)=y}p\left(\rest{y\left({\bf c}\right)}w\right)\prod_{t=1}^{T}\frac{f_{t}^{*}\left(\rest{c_{t}}w\right)}{p_{t}\left(w\right)^{y_{t}\left(c_{t}\right)}\left(1-p_{t}\left(w\right)\right)^{1-y_{t}\left(c_{t}\right)}},\\
= & p\left(\rest yw\right)\sum_{{\bf c}:y\left({\bf c}\right)=y}\prod_{t=1}^{T}\frac{f_{t}^{*}\left(\rest{c_{t}}w\right)}{p_{t}\left(w\right)^{y_{t}\left(c_{t}\right)}\left(1-p_{t}\left(w\right)\right)^{1-y_{t}\left(c_{t}\right)}}\\
= & p\left(\rest yw\right)\prod_{t=1}^{T}\left(\sum_{c_{t}:y_{t}\left(c_{t}\right)=y_{t}}\frac{f_{t}^{*}\left(\rest{c_{t}}w\right)}{p_{t}\left(w\right)^{y_{t}}\left(1-p_{t}\left(w\right)\right)^{1-y_{t}}}\right)\\
= & p\left(\rest yw\right)\prod_{t=1}^{T}1\text{ by }\eqref{eq:sum_to_1}\\
= & p\left(\rest yw\right).
\end{aligned}
\end{align*}
\end{proof}

\subsection{\label{subsec:pf_sharp_cts}Proof of Theorem \ref{thm:sharp_cts}}
\begin{proof}
Since conditional distributions are only defined up to (probability)
measure-zero sets, for sharpness in the continuous case, we only need
to construct the latent distribution so that CCP-J in \ref{def:sharp}
holds almost surely under $\P_{\rest{W_{i}}Z_{i}=z}$.

We now show how the construction and the proof in the discrete case
(Theorem \ref{thm:sharp_disc}) can be adapted to the continuous case.

Let $\t\equiv\left(\b^{'},\g^{'}\right)^{'}\in\T_{I}\backslash\left\{ \t_{0}\right\} $.
Define 
\begin{align*}
L_{t}\left(c\right) & :=\P\left(\rest{Y_{it}=1,\ z_{t}^{'}\b+X_{it}^{'}\g\leq c}z\right),\\
U_{t}\left(c\right) & :=1-\P\left(\rest{Y_{it}=0,\ z_{t}^{'}\b+X_{it}^{'}\g\geq c}z\right),
\end{align*}
and 
\begin{align*}
\ol L\left(c\right):=\max_{s}L_{s}\left(c\right),\quad & \ul U\left(c\right):=\min_{s}U_{s}\left(c\right).
\end{align*}

By Assumption \ref{assu:Cts}(a), $\rest{z_{t}^{'}\b+X_{it}^{'}\g}Z_{i}=z$ is continuously
distributed with a density function on a bounded interval support.
Write $\pi_{t}\left(c\right)$ for this density (conditional on $z$)
and write ${\cal C}_{t}=\left[\ul c_{t},\ol c_{t}\right]$ as its
support. Then, $L_{t}\left(c\right)$ has an integral representation
\begin{align*}
L_{t}\left(c\right) & =\E\left[\rest{p_{t}\left(W_{i}\right)\ind\left\{ z_{t}^{'}\b+X_{it}^{'}\g\leq c\right\} }z\right]\\
 & =\int_{\ul c_{t}}^{c}p_{t}\left(\tilde{c}\right)\pi_{t}\left(\rest{\tilde{c}}z\right)d\tilde{c}
\end{align*}
so that its derivative, by Assumption \ref{assu:Cts}(b), is given by 
\[
L_{t}^{'}\left(c\right)=p_{t}\left(c\right)\pi_{t}\left(\rest cz\right)>0
\]
Hence $L_{t}\left(c\right)$ is continuous and strictly increasing
on $\left[\ul c_{t},\ol c_{t}\right]$. Similarly, 
\begin{align*}
U_{t}\left(c\right) & =1-\P\left(\rest{Y_{it}=0,\ z_{t}^{'}\b+X_{it}^{'}\g\geq c}z\right),\\
 & =1-\E\left[\rest{\left(1-p_{t}\left(W_{i}\right)\right)\ind\left\{ z_{t}^{'}\b+X_{it}^{'}\g\geq c\right\} }z\right]\\
 & =1-\int_{c}^{\ol c_{t}}\left(1-p_{t}\left(\tilde{c}\right)\right)\pi_{t}\left(\rest{\tilde{c}}z\right)d\tilde{c}
\end{align*}
with derivative 
\[
U_{t}^{'}\left(c\right):=\left(1-p_{t}\left(c\right)\right)\pi_{t}\left(\rest cz\right)>0.
\]
Hence $U_{t}\left(c\right)$ is also continuous and strictly increasing
on $\left[\ul c_{t},\ol c_{t}\right]$.

\textbf{Step 1: }

Let $F_{t}^{L}\left(\rest cw\right)$ and $F_{t}^{U}\left(\rest cw\right)$
be defined as before. Again, we have 
\begin{align*}
F_{t}^{L}\left(\rest cz\right) & =L_{t}\left(c\right)
\end{align*}
but now

\begin{align*}
F_{t}^{U}\left(\rest cz\right) & =1-\P\left(\rest{Y_{i}=0\text{ and }z_{t}^{'}\b+X_{it}^{'}\g\geq c-\d}Z_{i}=z\right)\\
 & =U_{t}\left(\rest{c-\d}z\right)\\
 & <U_{t}\left(\rest cz\right)\text{ on }c\in\left[\ul c_{t},\ol c_{t}+\d\right]
\end{align*}
Hence, the key step in adapting the discrete-case construction to
the continuous case is to ensure the mismatch between $F_{t}^{U}\left(\rest cz\right)$
and $U_{t}\left(c\right)$ can be properly handled.

\textbf{Step 2: }

Let ${\cal {\cal L}}_{t}$, ${\cal U}_{t}$ and $q^{*}$ be defined
as before. For any $c\in\left[\ul c_{t},\ol c_{t}\right]$, we again
consider the following two cases:

Case 1: $\ol L\left(c\right)<q^{*}=L_{t}\left(\ol c_{t}\right)$.

Since $L_{t}$ and $\ol L$ are both continuous and strictly increasing,
we can define 
\[
\psi\left(c\right):=L_{t}^{-1}\left(\ol L\left(c\right)\right)
\]
and set 
\[
F_{t}^{*}\left(\rest cw\right):=F_{t}^{L}\left(\rest{\psi\left(c\right)}w\right)
\]
which ensures partial stationarity at $c$, since 
\begin{align*}
F_{t}^{*}\left(\rest cz\right) & =F_{t}^{L}\left(\rest{\psi\left(c\right)}z\right)=L_{t}\left(\psi\left(c\right)\right)=L_{t}\left(L_{t}^{-1}\left(\ol L\left(c\right)\right)\right)=\ol L\left(c\right).
\end{align*}
In addition, notice that since $L_{t}\left(c\right)\leq\ol L\left(c\right)<L_{t}\left(\ol c_{t}\right)$,
we must have 
\[
c\leq\psi\left(c\right)<\ol c_{t}.
\]
Hence, if $w$ is such that $w_{t}^{'}\t=c$, we have $w_{t}^{'}\t\leq\psi\left(w_{t}^{'}\t\right)<\ol c_{t}$
and thus by the definition of $F_{t}^{L}$ 
\[
F_{t}^{*}\left(\rest{w_{t}^{'}\t}w\right):=F_{t}^{L}\left(\rest{\psi\left(w_{t}^{'}\t\right)}w\right)=p_{t}\left(w\right).
\]
Lastly, observe that $\psi$ is increasing, and hence $F_{t}^{*}\left(\rest cw\right)$
is weakly increasing with 
\[
\text{\ensuremath{F_{t}^{*}\left(\rest cw\right)}}=F_{t}^{L}\left(\rest{\psi\left(c\right)}w\right)\leq p_{t}\left(w\right)
\]
in this case.

Case 2: $\ol L\left(c\right)\geq q^{*}$.

As before, since $U_{t}$ and $\ol L$ are both continuous and strictly
increasing, we can define 
\[
\psi\left(c\right):=U_{t}^{-1}\left(\ol L\left(c\right)\right)
\]
and set 
\[
F_{t}^{*}\left(\rest cw\right):=F_{t}^{U}\left(\rest{\psi\left(c\right)+\d}w\right).
\]
This construction again ensures the partial stationarity condition
at $c$: 
\begin{align*}
F_{t}^{*}\left(\rest cz\right) & =F_{t}^{U}\left(\rest{\psi\left(c\right)+\d}z\right)=U_{t}\left((\psi\left(c\right)+\d)-\d\right)\\
 & =U_{t}\left(\psi\left(c\right)\right)=U_{t}\left(U_{t}^{-1}\left(\ol L\left(c\right)\right)\right)\\
 & =\ol L\left(c\right)
\end{align*}
Furthermore, notice that $q^{*}=U_{t}\left(\ul c_{t}\right)\leq\ol L\left(c\right)\leq U_{t}\left(c\right)$,
we must have 
\[
\ul c_{t}\leq\psi\left(c\right)\leq c.
\]
Hence, $F_{t}^{*}\left(\rest cw\right)$ is weakly increasing given
that $\psi$ is increasing, with 
\[
F_{t}^{*}\left(\rest cw\right)\geq F_{t}^{U}\left(\rest{\ul c_{t}+\d}w\right)\geq F_{t}^{U}\left(\rest{\ul c_{t}}w\right)=p_{t}\left(w\right)
\]
in this case.

We now investigate the period-$t$ CCP matching condition. We consider
two subcases.

Subcase 1a: $\ol L\left(c\right)<U_{t}\left(c\right)$. In this subcase,
we must have $\ul c_{t}\leq\psi\left(c\right)<c$ and thus 
\[
\ul c_{t}<\psi\left(c\right)+\d<c+\d.
\]
Then, if $w$ is such that $w_{t}^{'}\t=c$, we have $\ul c_{t}<\psi\left(w_{t}^{'}\t\right)+\d<w_{t}^{'}\t+\d$
and thus by the definition of $F_{t}^{U}$ 
\[
F_{t}^{*}\left(\rest{w_{t}^{'}\t}w\right):=F_{t}^{U}\left(\rest{\psi\left(w_{t}^{'}\t\right)+\d}w\right)=p_{t}\left(w\right),
\]
which verifies the period-$t$ CCP matching condition in this subcase.

Subcase 1b: $\ol L\left(c\right)=U_{t}\left(c\right)$. In this subcase,
CCP matching will not be satisfied, since $\ol L\left(c\right)=U_{t}\left(c\right)$
implies $\psi\left(c\right)=c$. Hence, if $w$ is such that $w_{t}^{'}\t=c$,
we will have 
\[
F_{t}^{*}\left(\rest{w_{t}^{'}\t}w\right):=F_{t}^{U}\left(\rest{w_{t}^{'}\t+\d}w\right)=1\neq p_{t}\left(w\right).
\]
However, we will argue that this mismatch can be ignored under Condition
(C3), which essentially implies that such mismatch happens with probability
zero and is thus ignorable.

~

We now argue that $F_{t}^*\left(\rest cw\right)$ must be weakly increasing
on ${\cal C}_{t}=\left[\ul c_{t},\ol c_{t}\right]$. For $c\in\left[\ul c_{t},\ol L^{-1}\left(q^{*}\right)\right)$,
we have $\ol L\left(c\right)<q^{*}$ as in Case 2, where we have established
$F_t^*\left(\rest cw\right)$ is weakly increasing with $F_t^*\left(\rest cw\right)<p_{t}\left(w\right)$
in this region. For $c\in\left[\ol L^{-1}\left(q^{*}\right),\ol c_{t}\right)$,
we have $\ol L\left(c\right)>q^{*}$ as in Case 2 and again we have
established $F_t^*\left(\rest cw\right)$ is weakly increasing with $F\left(\rest cw\right)\geq p_{t}\left(w\right)$
in this region. Hence, $F_{t}^*\left(\rest cw\right)$ must be weakly
increasing on ${\cal C}_{t}=\left[\ul c_{t},\ol c_{t}\right]$.

The rest of the construction of $F^{*}\left(\rest{\cd}w\right)$,
as well as the corresponding proof, proceed exactly the same as in
the discrete case. In particular, notice that Lemma \ref{lem:SharpJoint}
continues to apply to the $F^{*}\left(\rest{\cd}w\right)$ constructed
here, as discussed in Remark \ref{rem:mar_joint_cts}.

~ 

In summary, in the continuous case, conditional on $z$, we have constructed
$F^{*}\left(\rest{\cd}w\right)$ that: (1) the partial stationarity
condition exactly (2) the CCP matching condition at any $w$ except
for those such that 
\[
{\cal W}^{\circ}=\left\{ w:\ul L\left(w_{t}^{'}\t\right)=U_{t}\left(w_{t}^{'}\t\right)\quad\text{for some }t\right\} .
\]
However, by Assumption \eqref{assu:Cts}(a), $W_{it}^{'}\t|z$ is
continuously distributed with a density function on its support, and
thus it follows from Assumption \eqref{assu:Cts}(c) that ${\cal W}^{\circ}$
is a probability-zero set under $\P_{\rest{W_{i}}z}$. Hence, the
CCP matching condition is satisfied almost surely under $\P_{\rest{W_{i}}z}$,
which suffices for sharpness. 
\end{proof}

\subsection{Reconciliation with KPT}

\label{appe:RelateKPT} We show that under Assumption \ref{assu:PartStat}
and $X_{it}=Y_{i,t-1}$, our identifying condition \eqref{eq:ID_Set}
implies the following result in KPT:

KPT(i): $\P\left(\rest{Y_{it}=1}z\right)>\P\left(\rest{Y_{is}=1}z\right)$
$\imp$ $\left(z_{t}-z_{s}\right)^{'}\b_{0}+\left|\g_{0}\right|>0.$

KPT(ii): $\P\left(\rest{Y_{it}=1}z\right)>1-\P\left(\rest{Y_{i,s}=0,Y_{i,s-1}=1}z\right)$
$\imp$ $\left(z_{t}-z_{s}\right)^{'}\b_{0}-\min\left\{ 0,\g_{0}\right\} >0.$

KPT(iii): $\P\left(\rest{Y_{it}=1}z\right)>1-\P\left(\rest{Y_{i,s}=0,Y_{i,s-1}=0}z\right)$
$\imp$ $\left(z_{t}-z_{s}\right)^{'}\b_{0}+\max\left\{ 0,\g_{0}\right\} >0.$

KPT(iv): $\P\left(\rest{Y_{it}=1,Y_{it-1}=1}z\right)>\P\left(\rest{Y_{is}=1}z\right)$
$\imp$ $\left(z_{t}-z_{s}\right)^{'}\b_{0}+\max\left\{ 0,\g_{0}\right\} >0.$

KPT(v): $\P\left(\rest{Y_{it}=1,Y_{it-1}=1}z\right)>1-\P\left(\rest{Y_{is}=0,Y_{i,s-1}=1}z\right)$
$\imp$ $\left(z_{t}-z_{s}\right)^{'}\b_{0}>0.$

KPT(vi): $\P\left(\rest{Y_{it}=1,Y_{it-1}=1}z\right)>1-\P\left(\rest{Y_{is}=0,Y_{i,s-1}=0}z\right)$
$\imp$ $\left(z_{t}-z_{s}\right)^{'}\b_{0}+\g_{0}>0.$

KPT(vii): $\P\left(\rest{Y_{it}=1,Y_{it-1}=0}z\right)>1-\P\left(\rest{Y_{is}=0}z\right)$
$\imp$ $\left(z_{t}-z_{s}\right)^{'}\b_{0}-\min\left\{ 0,\g_{0}\right\} >0.$

KPT(viii): $\P\left(\rest{Y_{it}=1,Y_{it-1}=0}z\right)>1-\P\left(\rest{Y_{is}=0\text{\ensuremath{,}}Y_{i,s-1}=1}z\right)$
$\imp$ $\left(z_{t}-z_{s}\right)^{'}\b_{0}-\g_{0}>0.$

KPT(ix): $\P\left(\rest{Y_{it}=1,Y_{it-1}=0}z\right)>1-\P\left(\rest{Y_{is}=0,Y_{i,s-1}=0}z\right)$
$\imp$ $\left(z_{t}-z_{s}\right)^{'}\b_{0}>0.$

~
\begin{proof}
With $X_{it}=Y_{i,t-1}$, our inequality restriction \eqref{eq:ID_ts_KPT}
can be equivalently rewritten as follows: 
\begin{align}
 & \P\left(\rest{Y_{it}=1,\ Y_{i,t-1}=1}z\right)\ind\left\{ z_{t}^{'}\b_{0}+\g_{0}\leq c\right\} +\P\left(\rest{Y_{it}=1,\ Y_{i,t-1}=0}z\right)\ind\left\{ z_{t}^{'}\b_{0}\leq c\right\} \nonumber \\
\leq\  & 1-\P\left(\rest{Y_{is}=0,\ Y_{i,s-1}=1}z\right)\ind\left\{ z_{s}^{'}\b_{0}+\g_{0}\geq c\right\} -\P\left(\rest{Y_{is}=0,\ Y_{i,s-1}=0}z\right)\ind\left\{ z_{s}^{'}\b_{0}\geq c\right\} ,\label{eq:dyn_ineq}
\end{align}
by enumerating the realization of $Y_{i,t-1}$.

Note that the lower and upper expressions in the inequality \eqref{eq:dyn_ineq}
both have three possible (informative) outcomes depending on the value
of $c$, leading to the 9 inequalities in KPT. We derive the first
two inequalities KPT(i) and KPT(ii), and the rest of inequalities
can be derived in the same way.

KPT(i): consider the event that all indicators in condition \eqref{eq:dyn_ineq}
are equal to one, saying that 
\[
\max\{z_{t}^{'}\b_{0}+\g_{0},z_{t}^{'}\b_{0}\}\leq c\leq\min\{z_{s}^{'}\b_{0}+\g_{0},z_{s}^{'}\b_{0}\},
\]
which is equivalent to 
\[
z_{t}^{'}\b_{0}+\max\{0,\g_{0}\}-(z_{s}^{'}\b_{0}+\min\{0,\g_{0}\})=(z_{t}-z_{s})^{'}\b_{0}+|\g_{0}|\leq0.
\]

Then, when $(z_{t}-z_{s})^{'}\b_{0}+|\g_{0}|\leq0$, condition \eqref{eq:dyn_ineq}
becomes 
\begin{align*}
\P\left(\rest{Y_{it}=1}z\right) & =\P\left(\rest{Y_{it}=1,\ Y_{i,t-1}=1}z\right)+\P\left(\rest{Y_{it}=1,\ Y_{i,t-1}=0}z\right)\\
 & \leq\ 1-\P\left(\rest{Y_{is}=0,\ Y_{i,s-1}=1}z\right)-\P\left(\rest{Y_{is}=0,\ Y_{i,s-1}=0}z\right)\\
 & =1-\P(Y_{is}=0\mid z)=\P\left(\rest{Y_{is}=1}z\right).
\end{align*}

By contraposition, it implies the same restriction in KPT(i): 
\[
\P\left(\rest{Y_{it}=1}z\right)>\P\left(\rest{Y_{is}=1}z\right)\Longrightarrow(z_{t}-z_{s})^{'}\b_{0}+|\g_{0}|>0.
\]

~

KPT(ii): we first relax condition \eqref{eq:dyn_ineq} by dropping
the last term in the upper expression $\P\left(\rest{Y_{is}=0,\ Y_{i,s-1}=0}z\right)\ind\left\{ z_{s}^{'}\b_{0}\geq c\right\} $
and have the following relaxed inequality: 
\begin{align}
 & \P\left(\rest{Y_{it}=1,\ Y_{i,t-1}=1}z\right)\ind\left\{ z_{t}^{'}\b_{0}+\g_{0}\leq c\right\} +\P\left(\rest{Y_{it}=1,\ Y_{i,t-1}=0}z\right)\ind\left\{ z_{t}^{'}\b_{0}\leq c\right\} \nonumber \\
\leq\  & 1-\P\left(\rest{Y_{is}=0,\ Y_{i,s-1}=1}z\right)\ind\left\{ z_{s}^{'}\b_{0}+\g_{0}\geq c\right\} .\label{eq:dyn_ineq_relax}
\end{align}

Now, consider the event that the indicators in the above restriction
are all equal to one, which implies that 
\[
\max\{z_{t}^{'}\b_{0}+\g_{0},z_{t}^{'}\b_{0}\}\leq c\leq z_{s}^{'}\b_{0}+\g_{0},
\]
and it is equivalent to the following condition: 
\[
(z_{t}-z_{s})^{'}\b_{0}+\max\{0,\g_{0}\}-\g_{0}=(z_{t}-z_{s})^{'}\b_{0}-\min\{0,\g_{0}\}\leq0.
\]
Given the above event, condition \eqref{eq:dyn_ineq_relax} becomes
\begin{align*}
\P\left(\rest{Y_{it}=1}z\right) & =\P\left(\rest{Y_{it}=1,\ Y_{i,t-1}=1}z\right)+\P\left(\rest{Y_{it}=1,\ Y_{i,t-1}=0}z\right)\\
 & \leq\ 1-\P\left(\rest{Y_{is}=0,\ Y_{i,s-1}=1}z\right).
\end{align*}

Similarly, we can derive the same restriction in KPT(ii) by contraposition:
\[
\P\left(\rest{Y_{it}=1}z\right)>1-\P\left(\rest{Y_{is}=0,\ Y_{i,s-1}=1}z\right)\ \imp\ (z_{t}-z_{s})^{'}\b_{0}-\min\{0,\g_{0}\}>0.
\]
\end{proof}

\subsection{Proof of Proposition \ref{prop:mul}}
\begin{proof}
Let $v_{ijt}:=\a_{ij}+\e_{ijt}$, for any set $K\subset\mathcal{J}$,
the probability of selecting a choice $j\in K$ conditional on $W_{i}=w$
is given as: 
\[
\P(Y_{it}^{K} = 1\mid w)=\P(Y_{it}\in K\mid w)=\P\left(\exists j\in K\text{ s.t. }w_{ijt}'\t_{0}+v_{ijt}\geq w_{ikt}'\t_{0}+v_{ikt}\ \forall k\in K^{c}\mid w\right).
\]

The above observed probability restricts the conditional distribution
of $v_{ikt}-v_{ijt}\mid w$ and can be exploited to bound this distribution.

We define $Q_{t}(c_{jk}\mid w)$ as follows: for $c_{jk}\in\mathcal{R}$,
\[
\begin{aligned}Q_{t}(c_{jk}\mid w):=\P\left(\exists j\in K\text{ s.t. }v_{ikt}-v_{ijt}\leq c_{jk}\ \forall k\in\mathcal{J}\setminus K\mid w\right).\end{aligned}
\]

Then, we can derive lower and upper bounds for the above probability
using variations in observed choice probabilities. When $c_{jk}$
satisfies $c_{jk}\geq(w_{ijt}-w_{ikt})'\t_{0}$ for any $j\in K$
and $k\in\mathcal{J}\setminus K$, then $Q_{t}(c_{jk}\mid w)$ can
be bounded below as 
\[
\begin{aligned}Q_{t}(c_{jk}\mid w) & \geq\P\left(\exists j\in K\text{ s.t. }v_{ikt}-v_{ijt}\leq(w_{ijt}-w_{ikt})'\t_{0}\ \forall k\in\mathcal{J}\setminus K\mid w\right)\\
 & =\P(Y_{it}\in K\mid w).
\end{aligned}
\]

Therefore, the lower bound for $Q_{t}(c_{jk}\mid w)$ is established
as 
\[
Q_{t}(c_{jk}\mid w)\geq\P(Y_{it}\in K,\ c_{jk}\geq(w_{ijt}-w_{ikt})'\t_{0}\ \forall j\in K,k\in k\in\mathcal{J}\setminus K\mid w).
\]
The above inequality holds since either $c_{jk}\geq(w_{ijt}-w_{ikt})'\t_{0}$
or the lower bound is zero.

By taking expectation of $X_{i}$ given $z$, we can bound the conditional
distribution $Q_{t}(c_{jk}\mid z)$ as 
\[
\begin{aligned}Q_{t}(c_{jk}\mid z) & \geq\P\left(Y_{it}\in K,c_{jk}\geq(z_{ijt}-z_{ikt})'\b_{0}+(X_{ijt}-X_{ikt})'\g_{0}\ \forall j\in K,k\in\mathcal{J}\setminus K\mid z\right)\\
 & =\P\left(Y_{it}^{K}=1,c_{jk}\geq(z_{ijt}-z_{ikt})'\b_{0}+(X_{ijt}-X_{ikt})'\g_{0}\ \forall j\in K,k\in\mathcal{J}\setminus K\mid z\right).
\end{aligned}
\]

Similarly, the conditional probability $Q_{t}(c_{jk}\mid w)$ can
be bounded above as 
\begin{multline*}
Q_{t}(c_{jk}\mid w)\leq\P(Y_{it}^{K}=1\mid w)\ind\{c_{jk}\leq(w_{ijt}-w_{ikt})'\t_{0}\ \forall j\in K,\mathcal{J}\setminus K\}+\\
1-\ind\{c_{jk}\leq(w_{ijt}-w_{ikt})'\t_{0}\ \forall j\in K,k\in\mathcal{J}\setminus K\}.
\end{multline*}
The above inequality holds since either $c_{jk}\leq(w_{ijt}-w_{ikt})'\t_{0}$
or the upper bound is one with $c_{jk}>(w_{ijt}-w_{ikt})'\t_{0}$.
After taking expectation of $X_{i}$ given $z$, the upper bound for
$Q_{t}(c_{jk}\mid z)$ is obtained as 
\begin{multline*}
Q_{t}(c_{jk}\mid z)\leq\P\left(Y_{it}^{K}=1,c_{jk}\leq(z_{ijt}-z_{ikt})'\b_{0}+(X_{ijt}-X_{ikt})'\g_{0}\ \forall j\in K,k\in K^{c}\mid z\right)\\
+1-\P\left(c_{jk}\leq(z_{ijt}-z_{ikt})'\b_{0}+(X_{ijt}-X_{ikt})'\g_{0}\ \forall j\in K,k\in\mathcal{J}\setminus K\mid z\right).
\end{multline*}

Rearranging the above formula yields 
\[
Q_{t}(c_{jk}\mid z)\leq1-\P\left(Y_{it}^{K}=0,c_{jk}\leq(z_{ijt}-z_{ikt})'\b_{0}+(X_{ijt}-X_{ikt})'\g_{0}\ \forall j\in K,k\in\mathcal{J}\setminus K\mid z\right).
\]

Under Assumption \ref{assu:PartStat}, the conditional probability
$Q_{t}(c_{jk}\mid z)$ is the same for any $t$. Therefore, the smallest
upper bound of $Q_{t}(c_{jk}\mid z)$ should be larger than the largest
lower bound over all periods, yielding the identifying condition \eqref{eq:ind_mul}
as follows: 
\begin{multline*}
1-\max_{s=1,...,T}\P(Y_{is}^{K}=0,(z_{js}-z_{ks})'\b_{0}+(X_{ijs}-X_{iks})'\g_{0}\geq c_{jk}\ \forall j\in K,k\in\mathcal{J}\setminus K\mid z)\\
\geq\max_{t=1,...,T}\P(Y_{it}^{K}=1,(z_{jt}-z_{kt})'\b_{0}+(X_{ijt}-X_{ikt})'\g_{0}\leq c_{jk}\ \forall j\in K,k\in\mathcal{J}\setminus K\mid z).
\end{multline*}
\end{proof}

\section{\label{sec:Supplement}Supplemental Results and Discussions}

\subsection{Binary Choice: Point Identification}

\label{subsec:point}

Proposition \ref{prop:bin} characterizes the sharp identified set
for $\t_{0}$ by only imposing Assumption \ref{assu:PartStat}. This
section provides sufficient conditions to achieve point identification
for $\b_{0}$ (up to scale) and the sign of $\g_{0}$ under support
conditions on the exogenous covariate $Z_{it}$. We focus on the scenario
where the endogenous covariate $X_{it}$ is discrete $X_{it}\in\mathcal{X}\equiv\left\{ \ol x_{1},...,\ol x_{K}\right\} $
and there are only two periods $T=2$.

We start by noting that Section 3 of KPT contains a detailed discussion
about point identification for the AR(1) setting $X_{it}=Y_{i,t-1}$.
Since our identification result becomes equivalent to theirs in the
AR(1) setting, the sufficient conditions they provide there still
apply. Hence, in this section, we seek to provide some sufficient
condition with a general $X_{it}$ that may not be the one-period
lag $Y_{i,t-1}$.

To point identify $\b_{0}$, the first step is to determine the sign
of the covariate index $(Z_{i2}-Z_{i1})'\b_{0}$ under certain variation
of observed choice probability. To identify the sign of $(Z_{i2}-Z_{i1})'\b_{0}$,
we define the following two sets: 
\[
\begin{aligned} & \mathcal{Z}_{1}:=\Big\{(z_{1},z_{2})\mid\exists x\in\mathcal{X}\text{ s.t. }1-\P(Y_{i1}=0,X_{i1}=x\mid z)<\P(Y_{i2}=1,X_{i2}=x\mid z)\Big\},\\
 & \mathcal{Z}_{2}:=\Big\{(z_{1},z_{2})\mid\exists x\in\mathcal{X}\text{ s.t. }1-\P(Y_{i1}=1,X_{i1}=x\mid z)<\P(Y_{i2}=0,X_{i2}=x\mid z)\Big\}.
\end{aligned}
\]

Let $\mathcal{Z}:=\mathcal{Z}_{1}\cup\mathcal{Z}_{2}$. Let $\D Z_{i}=Z_{i2}-Z_{i1}$
and $\D\mathcal{Z}$ be defined as 
\[
\D\mathcal{Z}:=\Big\{\D z:=z_{2}-z_{1}\mid(z_{1},z_{2})\in\mathcal{Z}\Big\}.
\]

As shown in the proof below, when $\D z$ satisfies $\D z\in\D\mathcal{Z}$,
the sign of $\D z'\b_{0}$ is identified. In the definition of the
two sets $\mathcal{Z}_{1},\mathcal{Z}_{2}$, we only need the existence
of one value in the support of $\mathcal{X}$ such that the choice
probability in the two sets are observed. When observing such choice
probability, the sign of $\D z'\b_{0}$ is identified. Then $\b_{0}$
can be identified up to scale under the standard large support condition
on $\D z$.

Let $\D z^{j}$ denote the $j$-th element of $\D z$. The following
is the support condition on the covariate. 
\begin{assumption}[Support Condition]
 \label{assu:supp} (1) $\D\mathcal{Z}$ is not contained in any
proper linear subspace of $\R^{d_{z}}$; (2) for any $\D z\in\D\mathcal{Z}$,
there exists one element $\D z^{j^{*}}$ such that $\b_{0}^{j*}\neq0$,
and the conditional support of $\D z^{j^{*}}$ is $\R$ given $\D z\setminus\D z^{j^{*}}$,
where $\D z\setminus\D z^{j^{*}}$ denote the remaining components
of $\D z$ besides $\D z^{j^{*}}$. 
\end{assumption}
\begin{prop}
\label{prop:point} Under Assumptions \ref{assu:PartStat}-\ref{assu:supp},
the parameter $\b_{0}$ is point identified up to scale. 
\end{prop}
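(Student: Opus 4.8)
The plan is to prove point identification up to scale in two stages: first a \emph{sign-identification} step, showing that for every $\Delta z\in\Delta\mathcal{Z}$ the sign of $\Delta z'\b_0$ is pinned down by observable choice probabilities, and then a standard \emph{single-index} argument that upgrades knowledge of these signs to identification of $\b_0$ up to scale using the support conditions in Assumption \ref{assu:supp}.

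For the first stage I would specialize the per-pair inequality \eqref{eq:Bound_ts} to $T=2$ and evaluate it at a carefully chosen cutoff. Taking periods $(t,s)=(2,1)$ and setting $c=z_2'\b_0+x'\g_0$ for a candidate $x\in\mathcal{X}$, the event $\{X_{i2}=x\}$ forces $W_{i2}'\t_0=c$, so the lower bound $L_2$ dominates $\P(Y_{i2}=1,X_{i2}=x\mid z)$; and \emph{if} $\Delta z'\b_0\le 0$, then the event $\{X_{i1}=x\}$ is nested inside $\{W_{i1}'\t_0\ge c\}$, so the upper bound $U_1$ is dominated by $1-\P(Y_{i1}=0,X_{i1}=x\mid z)$. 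Chaining these through \eqref{eq:Bound_ts} yields $\Delta z'\b_0\le 0\Rightarrow \P(Y_{i2}=1,X_{i2}=x\mid z)\le 1-\P(Y_{i1}=0,X_{i1}=x\mid z)$, whose contrapositive is exactly the defining inequality of $\mathcal{Z}_1$; hence $(z_1,z_2)\in\mathcal{Z}_1\Rightarrow \Delta z'\b_0>0$. A symmetric computation with $(t,s)=(1,2)$ and $c=z_1'\b_0+x'\g_0$ gives $(z_1,z_2)\in\mathcal{Z}_2\Rightarrow \Delta z'\b_0<0$. Since membership in $\mathcal{Z}_1$ or $\mathcal{Z}_2$ is determined purely by observed conditional choice probabilities, the sign of $\Delta z'\b_0$ is identified for every $\Delta z\in\Delta\mathcal{Z}$, and the two cases are mutually consistent because $\Delta z'\b_0$ depends on $(z_1,z_2)$ only through $\Delta z$.

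For the second stage I would note that any $\tilde\b$ in the identified set reproduces the same sign pattern on $\Delta\mathcal{Z}$: the sets $\mathcal{Z}_1,\mathcal{Z}_2$ are observable, so the first-stage argument applied with $\tilde\b$ in place of $\b_0$ forces $\mathrm{sgn}(\Delta z'\tilde\b)=\mathrm{sgn}(\Delta z'\b_0)$ there. Fixing the special coordinate $j^*$ of Assumption \ref{assu:supp}(2) and letting $\Delta z^{j^*}$ sweep its full conditional support $\R$, the sign of the affine map $t\mapsto t\,\b_0^{j^*}+\Delta z_{-j^*}'\b_{0,-j^*}$ is known for all $t$; matching it against the analogous map for $\tilde\b$ forces first $\mathrm{sgn}(\tilde\b^{j^*})=\mathrm{sgn}(\b_0^{j^*})$ (from the behavior as $t\to\pm\infty$) and then, after the scale normalization $\tilde\b^{j^*}=\b_0^{j^*}$, equality of the switching points, i.e. $\Delta z_{-j^*}'(\b_{0,-j^*}-\tilde\b_{-j^*})=0$. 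Assumption \ref{assu:supp}(1) supplies enough such $\Delta z_{-j^*}$ to span the complementary subspace, whence $\tilde\b_{-j^*}=\b_{0,-j^*}$ and therefore $\tilde\b=\b_0$ up to the positive scale factor.

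I expect the main obstacle to be the rigorous execution of the second stage: justifying that sweeping $\Delta z^{j^*}$ over $\R$ keeps the resulting points inside $\Delta\mathcal{Z}$, so that their signs are genuinely \emph{identified} rather than merely being signs of the index, and verifying that the vectors $\Delta z_{-j^*}$ arising across $\Delta\mathcal{Z}$ truly span the complement of the $j^*$ direction under Assumption \ref{assu:supp}(1), especially if $j^*$ is allowed to vary with $\Delta z$. The first stage is essentially bookkeeping with the nesting of index events, though I would be careful that the cutoff $c$, which depends on the unknown $\g_0$, enters only as an admissible argument of the always-valid inequality \eqref{eq:Bound_ts} and drops out of the final statement about observable probabilities.
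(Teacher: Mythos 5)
Your proposal is correct and takes essentially the same route as the paper's proof: your stage 1 is exactly the paper's argument (evaluating the identified-set inequality at the cutoff $c=z_{t}^{'}\b_{0}+x^{'}\g_{0}$ — the paper uses $c=z_{1}^{'}\b_{0}+x_{k}^{'}\g_{0}$ and notes in a footnote that $c=z_{2}^{'}\b_{0}+x_{k}^{'}\g_{0}$ is equivalent — to get $(z_{1},z_{2})\in\mathcal{Z}_{1}\Rightarrow\D z^{'}\b_{0}>0$ and $(z_{1},z_{2})\in\mathcal{Z}_{2}\Rightarrow\D z^{'}\b_{0}<0$), and your stage-2 limit/switching-point argument is an equivalent repackaging of the paper's explicit three-case construction ($b^{j^{*}}<0$, $b^{j^{*}}=0$, $b^{j^{*}}>0$) of a $\D z$ at which $\D z^{'}b$ and $\D z^{'}\b_{0}$ disagree in sign, with Assumption \ref{assu:supp}(2) supplying the $\D z^{j^{*}}\to\pm\infty$ sweep and Assumption \ref{assu:supp}(1) supplying the spanning step. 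The obstacle you flag — whether swept points remain in $\D\mathcal{Z}$ so that their signs are genuinely identified — is present in the paper's proof as well and is resolved there by reading the support conditions in Assumption \ref{assu:supp} as conditions on $\D\mathcal{Z}$ itself (with $j^{*}$ treated as fixed), so this is a shared reading rather than a gap in your argument.
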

We provide point identification for $\b_{0}$ with two periods $T=2$.
When there are more than two periods, then we only require the existence
of two periods, satisfying Assumption \ref{assu:supp}. As shown in
\citet{manski1987}, the large support assumption is necessary to
point identify $\b_{0}$, as without it, there exists some $b\neq k\b_{0}$
such that $\D z'b$ has the same sign with $\D z'\b_{0}$ when $\D z$
has bounded support.

The parameter $\g_{0}$ in general can be only partially identified
given potential endogeneity of $X_{it}$ and flexible structures on
$(\a_{i},\e_{it})$. Nevertheless, we can still bound the value $(x_{1}-x_{2})'\g_{0}$
and identify the sign of $\g_{0}$ under certain choice probabilities.
We derive the sufficient conditions to identify the sign of $\g_{0}$.

Let $x^{j}$ denote the $j$-th element of $x$ and $\g_{0}^{j}$
denote the $j$-th coefficient of $\g_{0}$. We define the following
two sets: 
\[
\begin{aligned}\mathcal{Z}_{3}^{j}:=\Big\{(z_{1},z_{2})\mid\exists x_{1},x_{2} & \in\mathcal{X}\text{ with }x_{1}^{j}\neq x_{2}^{j},x_{1}^{m}=x_{2}^{m}\ \forall m\neq j\text{ s.t. }\\
 & 1-\P(Y_{i1}=0,X_{i1}=x_{1}\mid z)<\P(Y_{i2}=1,X_{i2}=x_{2}\mid z)\Big\};\\
\mathcal{Z}_{4}^{j}:=\Big\{(z_{1},z_{2})\mid\exists x_{1},x_{2} & \in\mathcal{X}\text{ with }x_{1}^{j}\neq x_{2}^{j},x_{1}^{m}=x_{2}^{m},\ \forall m\neq j\text{ s.t. }\\
 & 1-\P(Y_{i1}=1,X_{i1}=x_{1}\mid z)<\P(Y_{i2}=0,X_{i2}=x_{2}\mid z)\Big\}.
\end{aligned}
\]

From the identifying results in Proposition \ref{prop:bin}, the value
of $(x_{1}^{j}-x_{2}^{j})\g_{0}^{j}$ can be bounded when $(z_{1},z_{2})$
belong to the two sets: 
\[
\begin{aligned} & (z_{1},z_{2})\in\mathcal{Z}_{3}^{j}\ \Longrightarrow\ (x_{1}^{j}-x_{2}^{j})\g_{0}^{j}<\D z'\b_{0},\\
 & (z_{1},z_{2})\in\mathcal{Z}_{4}^{j}\ \Longrightarrow\ (x_{1}^{j}-x_{2}^{j})\g_{0}^{j}>\D z'\b_{0}.
\end{aligned}
\]

Then the sign of $\g_{0}^{j}$ is identified if either the sign of
$\D z'\b_{0}$ is identified as negative when $(z_{1},z_{2})\in\mathcal{Z}_{2}$
or as positive when $(z_{1},z_{2})\in\mathcal{Z}_{1}$. 
\begin{prop}
\label{prop:point_gamma} Under Assumptions \ref{assu:PartStat},
and for any $1\leq j\leq d_{x}$, either $\mathcal{Z}_{3}^{j}\cap\mathcal{Z}_{2}\neq\emptyset$
or $\mathcal{Z}_{4}^{j}\cap\mathcal{Z}_{1}\neq\emptyset$, then the
sign of $\g_{0}$ is identified.
\end{prop}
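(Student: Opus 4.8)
The plan is to read off the sign of each coordinate $\g_{0}^{j}$ by combining two one-sided facts already in hand: a bound relating the product $(x_{1}^{j}-x_{2}^{j})\g_{0}^{j}$ to the index difference $\D z'\b_{0}$, coming from membership in $\mathcal{Z}_{3}^{j}$ or $\mathcal{Z}_{4}^{j}$, and a sign for $\D z'\b_{0}$ itself, coming from membership in $\mathcal{Z}_{2}$ or $\mathcal{Z}_{1}$. Concretely, for any $(z_{1},z_{2})\in\mathcal{Z}_{3}^{j}$ the defining witnesses $x_{1},x_{2}$ (which differ only in coordinate $j$) satisfy $(x_{1}^{j}-x_{2}^{j})\g_{0}^{j}<\D z'\b_{0}$, while for $(z_{1},z_{2})\in\mathcal{Z}_{4}^{j}$ they satisfy $(x_{1}^{j}-x_{2}^{j})\g_{0}^{j}>\D z'\b_{0}$; and from the sign-identification step behind Proposition \ref{prop:point}, $(z_{1},z_{2})\in\mathcal{Z}_{1}$ forces $\D z'\b_{0}>0$ whereas $(z_{1},z_{2})\in\mathcal{Z}_{2}$ forces $\D z'\b_{0}<0$. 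All four facts are consequences of the identified-set inequality \eqref{eq:ID_Set}, hence hold for every $\t=(\b',\g')'$ consistent with the observed data, not merely for the true value.

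I would then fix $j$ and split into the two hypothesized cases. If $\mathcal{Z}_{3}^{j}\cap\mathcal{Z}_{2}\neq\emptyset$, pick any $(z_{1},z_{2})$ in the intersection together with its witnesses; chaining the two inequalities gives $(x_{1}^{j}-x_{2}^{j})\g_{0}^{j}<\D z'\b_{0}<0$, so the product is strictly negative. Because $x_{1}^{j}\neq x_{2}^{j}$ by construction and the witnesses are determined by the observable choice probabilities defining $\mathcal{Z}_{3}^{j}$, the sign of $x_{1}^{j}-x_{2}^{j}$ is known, and therefore $\text{sign}(\g_{0}^{j})=-\text{sign}(x_{1}^{j}-x_{2}^{j})$ is pinned down. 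Symmetrically, if $\mathcal{Z}_{4}^{j}\cap\mathcal{Z}_{1}\neq\emptyset$, the chain $(x_{1}^{j}-x_{2}^{j})\g_{0}^{j}>\D z'\b_{0}>0$ yields a strictly positive product and $\text{sign}(\g_{0}^{j})=\text{sign}(x_{1}^{j}-x_{2}^{j})$. Under the maintained hypothesis that for every $j$ one of the two intersections is nonempty, this determines $\text{sign}(\g_{0}^{j})$ for all $j$, i.e.\ the full sign pattern of $\g_{0}$; since the determination is common to every parameter in the identified set, the sign is identified.

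The combination above is immediate, so the substantive content lives in the two families of directional bounds, which I would either cite as already recorded before the statement or re-derive from \eqref{eq:ID_Set}. If re-deriving, the main obstacle is the choice of cutoff in \eqref{eq:ID_Set}: for $\mathcal{Z}_{3}^{j}$ I would evaluate the inequality for the ordered pair $(t,s)=(2,1)$ at $c=z_{2}'\b_{0}+x_{2}'\g_{0}$, so that $\{X_{i2}=x_{2}\}$ sits inside $\{W_{i2}'\t_{0}\leq c\}$, and then argue by contraposition: if instead $(x_{1}^{j}-x_{2}^{j})\g_{0}^{j}\geq\D z'\b_{0}$, then $z_{1}'\b_{0}+x_{1}'\g_{0}\geq c$ places $\{X_{i1}=x_{1}\}$ inside $\{W_{i1}'\t_{0}\geq c\}$, and \eqref{eq:ID_Set} yields $\P(Y_{i2}=1,X_{i2}=x_{2}\mid z)\leq 1-\P(Y_{i1}=0,X_{i1}=x_{1}\mid z)$, contradicting the defining strict inequality of $\mathcal{Z}_{3}^{j}$. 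Here it is essential that $x_{1}$ and $x_{2}$ differ only in coordinate $j$, so that the index gap collapses to $x_{1}'\g_{0}-x_{2}'\g_{0}=(x_{1}^{j}-x_{2}^{j})\g_{0}^{j}$ and the ordering of $c$ against the two indices is governed exactly by the comparison of $(x_{1}^{j}-x_{2}^{j})\g_{0}^{j}$ with $\D z'\b_{0}$; the $\mathcal{Z}_{4}^{j}$ bound follows by the mirror-image choice $c=z_{1}'\b_{0}+x_{1}'\g_{0}$ for the pair $(1,2)$. A final bookkeeping point is that ``sign identified'' here means a strictly nonzero sign, which the strict inequalities guarantee, and that distinct admissible witnesses must return the same sign, which is automatic since $\g_{0}^{j}$ is a single fixed parameter.
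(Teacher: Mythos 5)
Your proposal is correct and follows essentially the same route as the paper's proof: you derive $(x_{1}^{j}-x_{2}^{j})\g_{0}^{j}<\D z'\b_{0}$ from $\mathcal{Z}_{3}^{j}$ and $(x_{1}^{j}-x_{2}^{j})\g_{0}^{j}>\D z'\b_{0}$ from $\mathcal{Z}_{4}^{j}$ by evaluating \eqref{eq:ID_Set} at the cutoff $c$ equal to the relevant index value and arguing by contraposition, then chain these with the signs $\D z'\b_{0}<0$ on $\mathcal{Z}_{2}$ and $\D z'\b_{0}>0$ on $\mathcal{Z}_{1}$ established in the proof of Proposition \ref{prop:point}, exactly as the paper does. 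Your added remarks — that the restrictions bind for every $\t$ in the identified set (not just $\t_{0}$), and that distinct witness pairs necessarily yield a consistent sign — are correct refinements of points the paper leaves implicit.
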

When the endogenous variable $X_{it}$ is a scalar, e.g., the lagged
dependent variable $X_{it}=Y_{i,t-1}$, then the definition of the
two sets $\mathcal{Z}_{3}^{j},\mathcal{Z}_{4}^{j}$ can be simplified
as there existing $x_{1}\neq x_{2}$ such that the corresponding choice
probability is observed. Besides the sign of $\g_{0}$, the identification
results can also bound the value of $\g_{0}$ from variation in the
exogenous covariates.

When $X_{it}$ is multi-dimensional such as including two lagged dependent
variable $X_{it}=(Y_{i,t-1},Y_{i,t-2})$ with $\g_{0}=(\g_{0}^{1},\g_{0}^{2})$,
then $\g_{0}^{1}$ is identified when the required choice probability
in the two sets $\mathcal{Z}_{3}^{1},\mathcal{Z}_{4}^{1}$ are observed
for $(Y_{i,1},Y_{i,0})=(1,1),(Y_{i,2},Y_{i,1})=(0,1)$ or $(Y_{i,1},Y_{i,0})=(0,0),(Y_{i,2},Y_{i,1})=(1,0)$.
We provide general sufficient conditions to identify the sign of $\g_{0}$,
which may be stronger than necessary and can be relaxed in certain
scenarios. For example, when we know that $\g_{0}^{1}+\g_{0}^{2}>0$
while $\g_{0}^{1}<0$, we can infer that $\g_{0}^{2}>0$ without requiring
additional assumptions on the two sets $\mathcal{Z}_{3}^{2},\mathcal{Z}_{4}^{2}$.

~
\begin{proof}[Proof of Propositions \ref{prop:point} and \ref{prop:point_gamma}]
The proof for the point identification of $\b_{0}$ consists of two
steps: we first show that when $\D z\in\D\mathcal{Z}$, the sign of
$\D z'\b_{0}$ is identified from the identifying condition \eqref{eq:ID_Set}
in Proposition \ref{prop:bin}. Then, the large support condition
in Assumption \ref{assu:supp} ensures that $\b_{0}$ is point identified
up to scale.

When $X_{it}$ is discrete and there are two periods $T=2$, the identifying
condition \eqref{eq:ID_Set} is given as 
\[
1-\P(Y_{i1}=0,z_{1}'\b_{0}+X_{i1}'\g_{0}\geq c\mid z)\geq\P(Y_{i2}=1,z_{2}'\b_{0}+X_{i2}'\g_{0}\leq c\mid z),
\]
for $c\in\{z_{t}'\b_{0}+x_{k}'\g_{0},\ t=1,2,k=1,...,K\}$, and another
identifying condition switches the order of period 1 and 2.

Let $c=z_{1}'\b_{0}+x_{k}'\g_{0}$,\footnote{The value of $c=z_{2}'\b_{0}+x_{k}'\g_{0}$ leads to the same identifying
condition.}, then the above upper bound can be further bounded as 
\[
1-\P(Y_{i1}=0,z_{1}'\b_{0}+X_{i1}'\g_{0}\geq z_{1}'\b_{0}+x_{k}'\g_{0}\mid z)\leq1-\P(Y_{i1}=0,X_{i1}=x_{k}\mid z).
\]

When $z_{1}'\b_{0}-z_{2}'\b_{0}\geq0$ which implies $z_{1}'\b_{0}+x_{k}'\g_{0}\geq z_{2}'\b_{0}+x_{k}'\g_{0}$,
then the lower bound can be bounded below as 
\[
\P(Y_{i2}=1,z_{2}'\b_{0}+X_{i2}'\g_{0}\leq z_{1}'\b_{0}+x_{k}'\g_{0}\mid z)\geq\P(Y_{i2}=1,X_{i2}=x_{k}\mid z).
\]

Combining the above results leads to 
\[
\text{If }z_{1}'\b_{0}-z_{2}'\b_{0}\geq0\Longrightarrow1-\P(Y_{i1}=0,X_{i1}=x_{k}\mid z)\geq\P(Y_{i2}=1,X_{i2}=x_{k}\mid z).
\]

The contraposition of the above inequality yields 
\[
1-\P(Y_{i1}=0,X_{i1}=x_{k}\mid z)<\P(Y_{i2}=1,X_{i2}=x_{k}\mid z)\Longrightarrow\D z'\b_{0}>0.
\]

Switching the order of the time period leads to another identifying
restriction as follows: 
\[
1-\P(Y_{i1}=1,X_{i1}=x_{k}\mid z)<\P(Y_{i2}=0,X_{i2}=x_{k}\mid z)\Longrightarrow\D z'\b_{0}<0.
\]

Therefore, when $\D z\in\D\mathcal{Z}$, the sign of $\D z'\b_{0}$
is identified.

~

Next, we show that $\b_{0}$ is point identified under the large support
assumption. To prove it, we will show that for any $\beta\neq k\b_{0}$
for some $k$, there exists some value $\D z$ such that $\D z'b$
has different signs from $\D z'\b_{0}$.

From Assumption \ref{assu:supp}, the conditional support of $\Delta z^{j^{*}}$
is $\mathcal{R}$ and $\b_{0}^{j^{*}}\neq0$. We focus on the case
where $\b_{0}^{j^{*}}>0$, and the analysis also applies to the other
case. Let $\Delta\tilde{z}:=\Delta z\setminus\Delta z^{j^{*}}$ denote
the remaining covariates in $\Delta z$ and $\tilde{\beta}_{0}$ denote
its coefficient. For any candidate $b$, we discuss three cases: $b^{j^{*}}<0$,
$b^{j^{*}}=0$, and $b^{j^{*}}>0$.

Case 1: $b^{j^{*}}<0$. When the covariate $\Delta z^{j^{*}}$ takes
a large positive value $\Delta z^{j^{*}}\rightarrow+\infty$ and the
remaining covariates take bounded values in their support, it implies
that $\Delta z'\b_{0}>0$ and $\Delta z'b<0$.

Case 2: $b^{j^{*}}=0$. For any value $\Delta z$, the value of $\Delta z'b$
is either positive or nonpositive. When $\Delta z'b>0$ is positive,
then let $\Delta z^{j*}$ take a large negative value $\Delta z^{j*}\rightarrow-\infty$
such that $\Delta z'\b_{0}<0$, which has a different sign from $\Delta z'b$.
Similarly, if $\Delta z'b\leq0$, there exists $\Delta z^{j^{*}}\rightarrow+\infty$
such that $\Delta z'\beta_{0}>0$.

Case 3: $b^{j^{*}}>0$. Assumption \ref{assu:supp} requires that
$\D\mathcal{Z}$ is not contained in any proper linear subspace, so
there exists $\Delta\tilde{z}$ such that $\Delta\tilde{z}'\tilde{\b}_{0}/\b_{0}^{j^{*}}\neq\Delta\tilde{z}'\tilde{b}/b^{j^{*}}$.
Suppose that $\Delta\tilde{z}'\tilde{\b}_{0}/\b_{0}^{j^{*}}-\Delta\tilde{z}'\tilde{b}/b^{j^{*}}=k>0$,
then when the covariate takes the value $\Delta Z_{i}=-\Delta\tilde{z}'\tilde{b}/b^{j^{*}}-\epsilon$
with $0<\epsilon<k$. The sign of the covariate index satisfies: $\Delta z'\b_{0}=\b_{0}^{j^{*}}(k-\e)>0$
and $\Delta z'b=-b^{j^{*}}\e<0$. The construction is similar when
$k<0$.

For the identification of $\g_{0}$, under the similar analysis for
$\b_{0}$, we have 
\[
\begin{aligned} & (z_{1},z_{2})\in\mathcal{Z}_{3}^{j}\ \Longrightarrow\ (x_{1}^{j}-x_{2}^{j})\g_{0}^{j}<\D z'\b_{0},\\
 & (z_{1},z_{2})\in\mathcal{Z}_{4}^{j}\ \Longrightarrow\ (x_{1}^{j}-x_{2}^{j})\g_{0}^{j}>\D z'\b_{0}.
\end{aligned}
\]

As previously shown, when $(z_{1},z_{2})\in\mathcal{Z}_{2}$, it implies
that $\D z'\b_{0}<0$. Therefore, when $(z_{1},z_{2})\in\mathcal{Z}_{2}\cap\mathcal{Z}_{3}^{j}$,
we have $(x_{1}^{j}-x_{2}^{j})\g_{0}^{j}<\D z'\b_{0}<0$ and the sign
of $\g_{0}^{j}$ is identified given $x_{1}^{j}\neq x_{2}^{j}$. Similarly,
when $(z_{1},z_{2})\in\mathcal{Z}_{1}\cap\mathcal{Z}_{4}^{j}$, the
sign of $\g_{0}^{j}$ is also identified given $(x_{1}^{j}-x_{2}^{j})\g_{0}^{j}>\D z'\b_{0}>0$.
Proposition \ref{prop:point_gamma} requires that for any $j\leq d_{x}$,
either $\mathcal{Z}_{2}\cap\mathcal{Z}_{3}^{j}\neq\emptyset$ or $\mathcal{Z}_{1}\cap\mathcal{Z}_{4}^{j}\neq\emptyset$
so that the sign of $\g_{0}^{j}$ is identified for any $j$. 
\end{proof}

\subsection{\label{subsec:ID_CF}Binary Choice: Counterfactual Parameters}

In previous subsections, we have focused on the (partial) identification
of the index parameters $\t_{0}$. Here we show how our identification
results can also be leveraged to (partially) identify counterfactual
parameters.

Write $W_{i}:=\left(Z_{i},X_{i}\right)$ in short, and correspondingly
$w:=\left(z,x\right)$, and $w_{t}^{'}\t=z_{t}^{'}\b+x_{t}^{'}\g$.
Consider a general counterfactual change in the observable covariates
$W_{i}$ from $w$ to $\tilde{w}$, and the consequent counterfactual
period-$t$ conditional choice probability of the form 
\begin{equation}
\tilde{p}_{t}\left(\tilde{w}\right):=\P\left(\rest{v_{it}\leq\tilde{w}^{'}\t_{0}}W_{i}=w\right).\label{eq:pt_CF}
\end{equation}
Importantly, in the definition above, the utility index is changed
from $w_{t}^{'}\t_{0}$ to the counterfactual $\tilde{w}_{t}^{'}\t_{0}$,
while the conditional distribution of the latent $v_{it}$ is held
unchanged at $\rest{v_{it}}W_{i}=w$. Hence, $\tilde{p}_{t}\left(\tilde{w}\right)$
can be interpreted as a counterfactual CCP induced by an exogenous
policy intervention that only changes the characteristics from $w$
to $\tilde{w}$, but leaves all other unobserved individual heterogeneity
reflected in the distribution of $v_{it}$ unchanged. In particular,
note that the (partial) derivative of $\tilde{p}_{t}\left(w\right)$
can be interpreted as average marginal effects.\footnote{Here, ``average'' refers to the averaging over unobserved individual
heterogeneity in $\rest{\left(\a_{i},\e_{it}\right)}w_{i}$. The counterfactual
CCP $\tilde{p}_{t}\left(w\right)$, or its derivative, can be further
averaged over (subvectors of) $w$ to produce additional average effects
that are averaged over observed individual heterogeneity. }

Note that the (partial) identification of counterfactual CCP $\tilde{p}_{t}\left(\tilde{w}\right)$
relies on the identification of the index parameter $\t_{0}$ as well
as the identification of the latent conditional distribution $\rest{v_{it}}W_{i}=w$,
which also involves the endogenous covariates $X_{i}$. It turns out
that, our key identification strategy in Section \ref{subsec:ind}
also provides a straightforward way to derive bounds on $F_{t}\left(\rest cw\right)$,
the CDF of $\rest{v_{it}}W_{i}=w$ at any point $c$, by taking conditional
expectations of \eqref{eq:vleqc_LB} and \eqref{eq:vleqc_UB} given
$W_{i}=w$ (instead of $Z_{i}=z$ as in Section \ref{subsec:ind}):
\begin{equation}
\P\left(\rest{Y_{it}=1,\ w_{t}^{'}\t_{0}\leq c}W_{i}=w\right)\leq F_{t}\left(\rest cw\right)\leq1-\P\left(\rest{Y_{it}=0,\ w_{t}^{'}\t_{0}\geq c}W_{i}=w\right),\label{eq:Bounds_Ft(c|w)}
\end{equation}
which can then be combined with Proposition \ref{prop:bin} to derive
the bounds in Proposition \ref{prop:CFbound}.
\begin{prop}[Bounds on Counterfactual CCP]
 \label{prop:CFbound} Under model \ref{eq:bin} and Assumption
\ref{assu:PartStat}, 
\begin{equation}
\inf_{\t\in\T_{I}}\P\left(\rest{Y_{it}=1,\ w_{t}^{'}\t\leq\tilde{w}_{t}^{'}\t}W_{i}=w\right)\leq\tilde{p}_{t}\left(\tilde{w}\right)\leq1-\inf_{\t\in\T_{I}}\P\left(\rest{Y_{it}=0,\ w_{t}^{'}\t\geq\tilde{w}_{t}^{'}\t}W_{i}=w\right).\label{eq:CFbound}
\end{equation}
\end{prop}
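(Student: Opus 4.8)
The plan is to derive \eqref{eq:CFbound} as an immediate corollary of the per-period bounds \eqref{eq:Bounds_Ft(c|w)} combined with the containment $\t_0\in\T_I$ established in Proposition \ref{prop:bin}. The one conceptual observation needed is that the counterfactual CCP defined in \eqref{eq:pt_CF} is simply the conditional CDF $F_t\left(\rest{\cd}w\right)$ of $v_{it}$ given $W_i=w$, evaluated at the counterfactual index $\tilde w_t'\t_0$; that is, $\tilde p_t(\tilde w)=F_t\left(\rest{\tilde w_t'\t_0}w\right)$. Once this is recognized, the bounds follow by reading off \eqref{eq:Bounds_Ft(c|w)} at a single cutoff and then accounting for the fact that $\t_0$ is only set-identified.

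First I would record the relevant per-period sandwich. Exactly as in Section \ref{subsec:ind}, taking conditional expectations of \eqref{eq:vleqc_LB} and \eqref{eq:vleqc_UB} given $W_i=w$ (rather than given $Z_i=z$) yields \eqref{eq:Bounds_Ft(c|w)} for every $c\in\R$; I emphasize that this step uses only the model \eqref{eq:bin} and no stationarity, being a purely pointwise consequence of the defining indicator inequalities. Evaluating \eqref{eq:Bounds_Ft(c|w)} at the cutoff $c=\tilde w_t'\t_0$ and substituting $\tilde p_t(\tilde w)=F_t\left(\rest{\tilde w_t'\t_0}w\right)$ then gives, \emph{at the true parameter},
\begin{equation*}
\P\left(\rest{Y_{it}=1,\ w_t'\t_0\leq\tilde w_t'\t_0}W_i=w\right)\leq\tilde p_t(\tilde w)\leq1-\P\left(\rest{Y_{it}=0,\ w_t'\t_0\geq\tilde w_t'\t_0}W_i=w\right).
\end{equation*}

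The remaining step is to make these bounds operational, since $\t_0$ is only known to satisfy $\t_0\in\T_I$ by Proposition \ref{prop:bin}. For the lower bound, the infimum over $\T_I$ of $\P\left(\rest{Y_{it}=1,\ w_t'\t\leq\tilde w_t'\t}w\right)$ is no larger than its value at $\t_0$, so that this infimum is $\leq\P\left(\rest{Y_{it}=1,\ w_t'\t_0\leq\tilde w_t'\t_0}w\right)\leq\tilde p_t(\tilde w)$. Symmetrically, the subtracted probability $\P\left(\rest{Y_{it}=0,\ w_t'\t_0\geq\tilde w_t'\t_0}w\right)$ is bounded below by $\inf_{\t\in\T_I}\P\left(\rest{Y_{it}=0,\ w_t'\t\geq\tilde w_t'\t}w\right)$, whence $1-\inf_{\t\in\T_I}\P\left(\rest{Y_{it}=0,\ w_t'\t\geq\tilde w_t'\t}w\right)\geq1-\P\left(\rest{Y_{it}=0,\ w_t'\t_0\geq\tilde w_t'\t_0}w\right)\geq\tilde p_t(\tilde w)$. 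Combining the two chains delivers \eqref{eq:CFbound}.

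There is essentially no hard step here: the result is a one-line consequence of \eqref{eq:Bounds_Ft(c|w)} once the CDF reinterpretation is in place. The only points deserving care are bookkeeping of the inequality directions when passing to $\T_I$ — note in particular that the subtracted probability in the upper bound reverses monotonicity, so one takes an \emph{infimum} (not a supremum) there as well — and the fact that both $v_{it}$ and the counterfactual index must be tied to the \emph{same} realized conditioning event $W_i=w=(z,x)$, so the bounds are conditioned on the full covariate vector rather than on $z$ alone. I would also remark that the interval in \eqref{eq:CFbound} is nonempty since it contains the true value sandwiched above, and that it tightens automatically whenever $\T_I$ is sharpened, collapsing onto the point-identified case under the conditions of Appendix \ref{subsec:point}.
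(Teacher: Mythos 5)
Your proposal is correct and follows essentially the same route as the paper's own proof: evaluate the per-period sandwich \eqref{eq:Bounds_Ft(c|w)} at the cutoff $c=\tilde w_{t}^{'}\t_{0}$, identify $\tilde p_{t}\left(\tilde w\right)=F_{t}\left(\rest{\tilde w_{t}^{'}\t_{0}}w\right)$, and then relax both sides by taking infima over $\T_{I}$ using $\t_{0}\in\T_{I}$ from Proposition \ref{prop:bin}. Your handling of the inequality directions (an infimum, not a supremum, for the subtracted probability in the upper bound) matches the paper exactly.
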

\begin{proof}
By \eqref{eq:Bounds_Ft(c|w)}, we have 
\[
\P\left(\rest{Y_{it}=1,\ w_{t}^{'}\t_{0}\leq c}W_{i}=w\right)\leq F_{t}\left(\rest cw\right)\leq1-\P\left(\rest{Y_{it}=0,\ w_{t}^{'}\t_{0}\geq c}W_{i}=w\right).
\]
Since $\tilde{p}_{t}\left(\tilde{w}\right)=F_{t}\left(\rest{\tilde{w}_{t}^{'}\t_{0}}w\right)$,
we have 
\[
\P\left(\rest{Y_{it}=1,\ w_{t}^{'}\t_{0}\leq\tilde{w}_{t}^{'}\t_{0}}W_{i}=w\right)\leq\tilde{p}_{t}\left(\tilde{w}\right)\leq1-\P\left(\rest{Y_{it}=0,\ w_{t}^{'}\t_{0}\geq\tilde{w}_{t}^{'}\t_{0}}W_{i}=w\right),
\]
and hence 
\[
\inf_{\t\in\T_{I}}\P\left(\rest{Y_{it}=1,\ w_{t}^{'}\t\leq\tilde{w}_{t}^{'}\t}W_{i}=w\right)\leq\tilde{p}_{t}\left(\tilde{w}\right)\leq1-\inf_{\t\in\T_{I}}\P\left(\rest{Y_{it}=0,\ w_{t}^{'}\t\geq\tilde{w}_{t}^{'}\t}W_{i}=w\right).
\]
\end{proof}
The lower and upper bounds in Proposition \ref{prop:CFbound} above
are identified since the involved conditional probabilities are all
about observed data $\left(Y_{i},W_{i}\right)$ for each $\t\in\T_{I}$,
while the set $\T_{I}$ is identified by Proposition \ref{prop:bin}.
Hence, Proposition \ref{prop:CFbound} establishes the partial identification
of the counterfactual CCP $\tilde{p}_{t}\left(\tilde{w}\right)$.

\subsection{\label{subsec:Yi0}Binary Choice: Initial Conditions}

In the main text, we treat all covariates in $X_{it}$ as observed
and endogenous. In the specific context of dynamic binary choice model,
say, the AR(1) model with $X_{it}=Y_{i,t-1}$, we are effectively
treating the initial condition $Y_{i0}$ as observed and endogenous
(and thus not conditioned upon). In this appendix, we consider some
alternative setups, and explain how our approach can be adapted accordingly.

~

For illustration, we focus on the AR(1) dynamic binary choice.
\[
Y_{it}=\ind\left\{ Z_{it}^{'}\b_{0}+\g_{0}Y_{i,t-1}+X_{it}^{'}\l_{0}+\a_{i}+\e_{it}\geq0\right\} 
\]
with $Y_{i,t-1}$ explicitly written out.

If $Y_{i,0}$ is \emph{observed }and treated as ``\emph{exogenous}'',
i.e., if we impose the partial stationarity condition conditional
on $Y_{i0}$ in addition to $Z_{i}$, i.e.,
\[
\rest{\e_{it}}Z_{i},Y_{i0},\a_{i}\sim\rest{\e_{is}}Z_{i},Y_{i0},\a_{i},
\]
then we can replicate our identification arguments conditional on
$Z_{i}=z,Y_{i0}=y_{0}$. Then Proposition \ref{prop:bin} holds with
the same forms of CCPs conditioned on $Y_{i0}=y_{0}$ in addition
to $Z_{i}=z$. In particular, the parametric index in the first-period
CCP
\[
W_{i1}^{'}\t=Z_{i1}^{'}\b+\g Y_{i,0}+X_{i1}^{'}\l=z_{1}^{'}\b+\g y_{0}+X_{i1}^{'}\l
\]
would its $\g Y_{i0}$ component degenerate to $\g y_{0}$, but, for
$t=2,...,T$, this index
\[
W_{it}^{'}\t=Z_{i1}^{'}\b+\g Y_{i,t-1}+X_{it}^{'}\l
\]
will still involve randomness in $Y_{i,t-1}$, conditional on $Z_{i}=z,Y_{i0}=y_{0}$.

If $Y_{i,0}$ is \emph{unobserved, }then we still work with the same
partial stationarity assumption conditional on $Z_{i}=z$ only, and
Proposition \ref{prop:bin} holds with the following specially adapted
bounds for $\P(v_{i1}\leq c\mid z)$ using observations from period
$t=1$:

Specifically, for period $t=1$, we exploit
\[
\ind\left\{ Y_{i1}=1\right\} \ind\left\{ z_{1}^{'}\b_{0}+\max\{0,\g_{0}\}+X_{i1}^{'}\l_{0}\leq c\right\} \leq\ind\left\{ v_{i1}\leq c\right\} 
\]
which does not involve the unobserved $Y_{i0}$ but nevertheless produces
a valid lower bound in the form of
\[
\P(Y_{i1}=1,\ z_{1}^{'}\b_{0}+\max\{0,\g_{0}\}+X_{i1}^{'}\l_{0}\leq c\mid z)\leq\P(v_{i1}\leq c\mid z.)
\]
Similarly, we can also provide an upper bound in the form of 
\[
\P(v_{i1}\leq c\mid z)\leq1-\P(Y_{i1}=0,z_{1}'\b_{0}+\min\{0,\g_{0}\}+X_{i1}^{\prime}\lambda_{0}\geq c\mid z).
\]
Again, notice that $\P(v_{i1}\leq c\mid z)=\P\left(\rest{v_{it}\leq c}z\right)$
for all $t=2,...,T$ by partial stationarity, so the above special
bounds for $t=1$ can be aggregated with bounds derived from other
periods to produce bounds
on $\P(v_{it}\leq c\mid z)$ as before.

\subsection{Binary Choice: Sufficient Conditions for Assumption \ref{assu:Cts}(c)}\label{subsec:cond_A2c}

In this subsection, we provide some examples and primitive conditions for Assumption \ref{assu:Cts}(c) to illustrate our point that Assumption \ref{assu:Cts}(c) is a rather mild assumption that is likely to be ``generically" satisfied.

Recall that Assumption \ref{assu:Cts}{(c) rules out cases in which the lower and upper envelopes
\[
L(c \mid z,\theta_0)
\;=\;
\max_{t=1,\dots,T} L_t(c \mid z,\theta_0),
\qquad
U(c \mid z,\theta_0)
\;=\;
\min_{t=1,\dots,T} U_t(c \mid z,\theta_0),
\]
coincide on a non-trivial interval of the index $c$. Under Assumption \ref{assu:Cts}(a)(b),
each $L_t(\cdot \mid z,\theta_0)$ and $U_t(\cdot \mid z,\theta_0)$ is continuous and
strictly increasing in $c$, so $L(c \mid z,\theta_0)=U(c \mid z,\theta_0)$
can hold on an interval only if some pair $L_t$ and $U_s$ actually coincide on that interval.

We first give some simple examples where Assumption \ref{assu:Cts}(c) is automatically
satisfied, and then provide a set of primitive sufficient conditions.

\subsubsection{A Simple Example where Assumption \ref{assu:Cts}(c) Trivially Holds}

Consider the leading dynamic binary panel model
\[
Y_{it}
=
\mathbf{1}\{W_{it}'\theta_0 + \alpha_i + \varepsilon_{it} \ge 0\},
\qquad
W_{it}'\theta_0 = X_{it}'\gamma_0,
\]
with $T=2$ and no fixed effect $\a_i$ and time-varying $Z$ for simplicity. Suppose:
\begin{enumerate}
  \item[(i)] The shocks $\varepsilon_{it}$ are i.i.d.\ logistic and independent of
  $(\alpha_i,x_{i1},x_{i2})$, so
  \[
  P(Y_{it}=1 \mid W_{it}'\theta_0 = c)
  = F(c) := \frac{e^c}{1+e^c},
  \]
  where $F$ is strictly increasing and continuous in $c$.

  \item[(ii)] The index $W_{it}'\theta_0$ has a common conditional density
  $g(c)$ across $t=1,2$, with $g(c)>0$ on some bounded interval
  (this is a special case of Assumption~2(a)).
\end{enumerate}

In this case, the derivative expressions for the envelopes reduce to
\[
L_1'(c)
= F(c) g(c),
\qquad
U_2'(c)
= [1-F(c)] g(c).
\]
If $L(c)=U(c)$ holds on a non-degenerate interval, then necessarily
$L_1(c)=U_2(c)$ on (a subinterval of) that region, which implies
$L_1'(c)=U_2'(c)$ there. Thus
\[
F(c) g(c) = [1-F(c)] g(c)
\quad\text{for all $c$ in that interval},
\]
or equivalently $F(c)=1/2$ on that interval. But $F$ is strictly increasing,
so the equation $F(c)=1/2$ has at most one solution. Hence the set of
$c$ for which $L(c)=U(c)$ has no interval and, in fact, consists of at most
a single point. Assumption \ref{assu:Cts}(c) therefore holds automatically in the familiar
dynamic logit (and similarly dynamic probit) case with a continuous index.

This illustrates the nature of Assumption \ref{assu:Cts}(c): it excludes knife-edge cases
in which the CCPs line up in such a way that the lower and upper envelopes
touch over a non-trivial index interval, which would require a flat segment
of the choice probability function.

\subsubsection{Another Simple Example}
We now provide another simple example to illustrate Assumption \ref{assu:Cts}(c) in the fully-exogenous case where
\( W_{it}'\theta_0 = z_{it}'\beta_0 \) and \( T = 2 \).

In this setting, the lower and upper bounds simplify to
\[
\begin{aligned}
L_{t}\left(\rest c z,\theta\right)
&:= \Pr\left( Y_{it}=1,\; z_t' \beta \le c \mid z \right), \\
U_{t}\left(\rest c z,\theta\right)
&:= 1 - \Pr\left( Y_{it}=0,\; z_t' \beta \ge c \mid z \right).
\end{aligned}
\]
Define \( P_t(z) := \Pr(Y_{it} = 1 \mid z) \).
Without loss of generality, assume \( z_1' \beta < z_2' \beta \) (the analysis is symmetric when \( z_1' \beta > z_2' \beta \)). We consider the following three cases.

\noindent\textbf{Case 1:} \( c > z_2' \beta \).
In this case,
\[
\underline{L}(\rest c z) = \max \{ P_1(z), P_2(z) \},
\qquad
\overline{U}(\rest c z) = 1.
\]
Assumption \ref{assu:Cts}(c) is satisfied.

\medskip

\noindent
\textbf{Case 2:} \( c < z_1' \beta \).
In this case,
\[
\underline{L}(\rest c z) = 0,
\qquad
\overline{U}(\rest c z) = \min \{ P_1(z), P_2(z) \}.
\]
Assumption \ref{assu:Cts}(c) is satisfied.

\medskip

\noindent
\textbf{Case 3:} \( z_1' \beta < c < z_2' \beta \).
In this case,
\[
\underline{L}(\rest c z) = P_1(z),
\qquad
\overline{U}(\rest c z) = P_2(z).
\]
Assumption \ref{assu:Cts}(c) is satisfied provided that \( P_1(z) \neq P_2(z) \). A sufficient condition for this inequality is that the conditional density of error term $\epsilon_{it}\mid z$ is positive everywhere. 

\subsubsection{Primitive Sufficient Conditions for Assumption \ref{assu:Cts}(c)}

We now give conditions under which Assumption \ref{assu:Cts}(c) follows from more primitive
restrictions on the conditional choice probabilities and the index densities.

Fix $z$ and let
\[
m_t(c,z)
\;:=\;
P\big(Y_{it}=1 \mid W_{it}'\theta_0=c,\ Z_i=z\big)
\]
and let $g_t(c,z)$ denotes the p.d.f. of  $W_{it}'\theta_0 \mid Z_i=z$.
Under Assumption \ref{assu:Cts}(a)(b), each $g_t(\cdot,z)$ is continuous and strictly
positive on a bounded interval, and the CCPs $m_t(\cdot,z)$ take values
strictly between $0$ and $1$. Using the derivative expressions,
we can write
\begin{equation}
L_t'(c \mid z,\theta_0)
= m_t(c,z)\,g_t(c,z),
\qquad
U_s'(c \mid z,\theta_0)
= [1 - m_s(c,z)]\,g_s(c,z).
\label{eq:LtUs-derivatives}
\end{equation}

Define the following condition:

\medskip\noindent
\textbf{Condition S (Regularity and No Exact Mirror Equality).} For each $z$ and each pair
$t \neq s$, the mappings
\[
c \longmapsto m_t(c,z)\,g_t(c,z)
\quad\text{and}\quad
c \longmapsto [1-m_s(c,z)]\,g_s(c,z)
\]
are piecewise real analytic\footnote{A real-valued function is real analytic if it can be represented as a convergent power series locally around every point of its domain. A function is piecewise real analytic if the function is real analytic on each cell of a finite partition of its domain. This regularity condition is imposed to rule out pathological level sets that are nowhere dense but have strictly positive Lebesgue measures, such as the Fat Cantor set. Such sets are of limited relevance in economics and econometrics, which is why we are imposing the piecewise real analytic condition to rule them out. That said, we note that piecewise real analytic is sufficient but unnecessary.} and not identical on any non-degenerate interval in $c$.
\medskip

\noindent
We show that Assumption \ref{assu:Cts}(c) follows from Assumption \ref{assu:Cts}2(a)(b) plus Condition S.

\medskip\noindent
\begin{prop} Suppose Assumption \ref{assu:Cts}(a)(b) and Condition S hold. Then Assumption \ref{assu:Cts}(c) holds.
\end{prop}
\medskip\noindent

\begin{proof}
Fix $z$ and suppress $(z,\theta_0)$ in the notation when there is no ambiguity. Write
$\ol{L}(c):=\ol{L}(c\mid z,\theta_0)$, $\ul{U}(c):=\ul{U}(c\mid z,\theta_0)$, and similarly
$L_t(c):=L_t(c\mid z,\theta_0)$ and $U_s(c):=U_s(c\mid z,\theta_0)$.
Define the contact set
\[
E \;:=\; \{c:\, \ol{L}(c)=\ul{U}(c)\}.
\]
For any $c\in E$, choose $t(c)\in\arg\max_{t=1,\dots,T} L_t(c)$ and
$s(c)\in\arg\min_{s=1,\dots,T} U_s(c)$. Then
\[
L_{t(c)}(c)=\ol{L}(c)=\ul{U}(c)=U_{s(c)}(c).
\]
Moreover, $t(c)\neq s(c)$ because Assumption~2(a)(b) implies $L_t(c)<U_t(c)$ for all $t$ and all $c$
in the support. Hence
\[
E \subseteq \bigcup_{t\neq s} E_{ts},
\qquad
E_{ts} \;:=\; \{c:\, L_t(c)=U_s(c)\}.
\]
Since there are finitely many pairs $(t,s)$, it suffices to show that each $E_{ts}$ has Lebesgue
measure zero.

Fix $t\neq s$ and define $H_{ts}(c):=L_t(c)-U_s(c)$. By \eqref{eq:LtUs-derivatives}, for all $c$ in the interior of the
support,
\[
H'_{ts}(c)
= L_t'(c)-U_s'(c)
= m_t(c,z)g_t(c,z)-\bigl[1-m_s(c,z)\bigr]g_s(c,z)
=: h_{ts}(c).
\]
Under Condition~S, $h_{ts}$ is piecewise real analytic: there exists a finite partition of the
support into non-overlapping intervals $\{I_k\}_{k=1}^K$ such that $h_{ts}$ is real analytic on each
$I_k$. Fix $k$ and pick any $c_k\in I_k$. For $c\in I_k$,
\[
H_{ts}(c) \;=\; H_{ts}(c_k) + \int_{c_k}^{c} h_{ts}(u)\,du.
\]
Since the antiderivative of a real analytic function is real analytic, it follows that $H_{ts}$ is
real analytic on each $I_k$.

Suppose that $E_{ts}$ has strictly positive Lebesgue measure. Then for some
$k$ we must have $Leb(E_{ts}\cap I_k)>0$. But $E_{ts}\cap I_k$ is exactly the zero set of the
real analytic function $H_{ts}$ restricted to $I_k$. A real analytic function on an interval that
vanishes on a set of positive Lebesgue measure must be identically zero on that interval (equivalently,
zeros of a nontrivial real analytic function are isolated). Therefore $H_{ts}(c)=0$ for all $c\in I_k$,
and hence $h_{ts}(c)=H'_{ts}(c)=0$ for all $c\in I_k$. This implies
\[
m_t(c,z)g_t(c,z)=\bigl[1-m_s(c,z)\bigr]g_s(c,z)\qquad\text{for all }c\in I_k,
\]
contradicting Condition~S (which rules out such equality on any non-degenerate interval).

Thus $Leb(E_{ts})=0$ for every $t\neq s$. Since $E$ is contained in a finite union of the
$E_{ts}$, we conclude $Leb(E)=0$, which is Assumption 2(c).
\end{proof}

Condition S has a simple interpretation. If the index densities differ across periods
($g_t(\cdot,z) \neq g_s(\cdot,z)$), or if the CCPs $m_t(\cdot,z)$ and $m_s(\cdot,z)$
are not related in such an exact ``mirror'' way, then the equality
\[
m_t(c,z)\,g_t(c,z) = [1-m_s(c,z)]\,g_s(c,z)
\]
cannot hold on a whole interval. Even when $g_t=g_s$, Condition~S reduces to ruling out
\[
m_t(c,z) = 1 - m_s(c,z)
\quad\text{for all $c$ in an interval,}
\]
i.e.\ having one period's success probability be exactly the complement of another's
over a range of the index. In standard parametric logit or probit models, such a relation
would require very special (non-generic) parameter restrictions, and hence Condition S
is automatically satisfied under mild cross-time variation in either the conditional
densities or the conditional choice probabilities.

\subsection{Discussion about 
General-Case Sharpness}\label{subsec:sharp_gen}

We now provide a discussion of the issue of sharpness in the general case as considered in Section \ref{subsec:generic}, by first converting the
general model \eqref{eq:model_gen} into a family of binary outcome
models. Specifically, under weak monotonicity of $G$ in its first argument
(Assumption 3), we can define the pseudo-inverse of $G$ in its first
argument as
\[
G^{-1}\left(y,\a,\e\right):=\inf\left\{ c:G\left(c,\a,\e\right)\geq y\right\} ,\quad\forall y\in{\cal Y}.
\]
Then, given any $y\in{\cal Y}$, we can write
\begin{align*}
v_{it}\left(y\right):=G^{-1}\left(y,\a_{i},\e_{it}\right),\quad & Y_{it}\left(y\right):=\ind\left\{ Y_{it}\geq y\right\} 
\end{align*}
and obtain the binarized model
\begin{equation}
Y_{it}\left(y\right)=\ind\left\{ W_{it}^{'}\t_{0}\geq v_{it}\left(y\right)\right\} ,\label{eq:gen_bin}
\end{equation}
which, at a each given $y$, is the same as binary choice model \eqref{eq:bin}
written in terms of $v_{it}$.\footnote{This also shows that scalar-additivity of $\a_{i}$ and $\e_{it}$
is not a binding restriction in the binary choice model.} Note that since $G$ is weakly increasing in $c$, $G^{-1}$ must
be weakly increasing in $y$ as well. Hence, $v_{it}\left(y\right)$
is a stochastic process weakly increasing in $y\in{\cal Y}$, and
its CDF given $W_{i}=w$, denoted by $F\left(\rest{c,y}w\right):=\P\left(\rest{v_{it}\left(y\right)\leq c}w\right)$
must be decreasing in $y$ at any given $c$.

Given the binary representation \eqref{eq:bin}, any $\t\in\T_{I,gen}\backslash\left\{ \t_{0}\right\} $,
we can follow the proof of Theorem \ref{thm:sharp_disc} or \ref{thm:sharp_cts}
to construct a latent distribution (CDF) $F^{*}\left(\rest{c,y}w\right)$
for each given $y$, which satisfies the partial stationarity assumption
and matches all observable CCPs $\P\left(\rest{Y_{it}\geq y}w\right)$
at each $y$. This essentially asserts ``sharpness at each $y$''
separately. 

What remains is to establish sharpness across all $y\in{\cal Y}$
jointly, and the key issue here is to ensure that the constructed
latent CDF $F^{*}\left(\rest{c,y}w\right)$ is weakly decreasing in
$y$ as the $F\left(\rest{c,y}w\right)$ is. Such monotonicity ensures
the existence of a unified stochastic process $v_{it}^{*}\left(y\right)$
with CDF $F^{*}\left(\rest{c,y}w\right)$ at each $y$. However, even
though it is straightforward to establish that our construction of
$F^{*}\left(\rest{c,y}w\right)$ ensures the associated $F^{*}\left(\rest{c,y}z\right)$
is weakly decreasing in $y$, it is less obvious whether $F^{*}\left(\rest{c,y}w\right)$
is. In particular, a key step to establish the weak motonocity of
$F^{*}\left(\rest{c,y}w\right)$ requires certain conditions on the
curvature of the functions $L_{t}$ and $U_{t}$ \eqref{eq:Lt_Ut}
in $y$. It is not obvious to us whether such conditions are plausible or not under the current general setup.

\subsection{Static Ordered Response Model}\label{subsec:order_stat}

This section studies the static ordered response model without endogeneity and provides simplified identifying conditions under this structure. In the static model, the full stationarity assumption holds conditional on all covariates $W_{i}$:
\[
\e_{is} \mid W_{i}, \a_i \stackrel{d}{\sim} \e_{it}  \mid W_{i}, \a_i.
\]

Then, the identifying restriction in Proposition \ref{prop:order} is given as
\[  \max_{t=1,...,T} \sum_{j=1}^J \P\left(Y_{it}=y_j, \ b_{j+1}-w_t'\t_0 \leq c \mid  w \right)\leq 1-\max_{s=1,...,T} \sum_{j=1}^J \P(Y_{is}=y_j, \ b_{j}-w_s'\t_0 \geq c \mid w). 
\]

The above condition is informative only if there exists $j_1, j_2$ such that $b_{j_2+1}-w_t'\t_0 \leq c \leq b_{j_1}-w_s'\t_0$, leading to
\[\begin{aligned}
 \max_{t=1,...,T} \sum_{j=1}^{j_2} \P\left(Y_{it}=y_j \mid  w\right) &\leq 1-\max_{s=1,...,T} \sum_{j=j_1}^J \P(Y_{is}=y_j\mid w) \\
 &=\min_{s=1,...,T} \sum_{j=1}^{j_1-1} \P(Y_{is}=y_j\mid w).
 \end{aligned}
\]

 The following proposition presents the identification results for the static ordered choice model by changing the notation of $j_1-1$ with $j_1$.

\begin{prop}\label{prop:static_ordered}
Assume that $ \e_{is} \mid (W_{i}, \a_i) \stackrel{d}{\sim} \e_{it} \mid (W_{i}, \a_i)$, then $\T_{I, order}$ consists of all $\t=\left(\b^{'},\g^{'}\right)^{'}\in\R^{d_{z}}\times\R^{d_{x}}$
such that
\[
b_{j_1+1}-w_s'\t \geq  b_{j_2+1}-w_t'\t \Longrightarrow 
\min_{s=1,...,T} \sum_{j=1}^{j_1} \P(Y_{is}=y_j \mid w)\geq  \max_{t=1,...,T} \sum_{j=1}^{j_2}\P(Y_{it}=y_j\mid w), 
\]
for any $1 \leq j_1, j_2 \leq J-1$, and any realization $w=(w_1,..., w_T)$ in the support of $W_i$.
\end{prop}
The results in Proposition \ref{prop:static_ordered}  are analogous to the maximum-score type result in \cite{manski1987}, with the distinction being that we can exploit variations in the sum of multiple choices rather than investigating a single choice to identify $\t_0$. Furthermore, with multiple choices, we can utilize variations in the sum of different choices across various periods for identification.

\subsection{Censored Outcome Model \label{subsec:cont}}

The two previous examples primarily investigate discrete choice models.
However, our approach also applies to models with continuous dependent
variables, including those with censored or interval outcomes. To
illustrate, we focus on the following panel model with censored outcomes
as studied in \citet{honore2004estimation} :
\begin{equation}
\begin{aligned}Y_{it}^{*} & =Z_{it}'\b_{0}+X_{it}'\g_{0}+\a_{i}+\e_{it},\\
Y_{it} & =\max\{Y_{it}^{*},0\},
\end{aligned}
\label{eq:model_cen}
\end{equation}
where $Y_{it}^{*}$ denotes the latent outcome which is not observed
in the data, and $Y_{it}$ represents the observed outcome, censored
at zero. 

The endogenous covariate $X_{it}$ can again incorporate lagged dependent
variable $Y_{it-1}$ and other endogenous covariates. With $X_{it}=Y_{i,t-1}$,
model \eqref{eq:model_cen} specializes to the one in \citet{honore1993orthogonality}.
Both \citet{honore1993orthogonality} and \citet{honore2004estimation}
develop orthogonality conditions for these models under the assumption
of conditionally i.i.d. errors $\e_{it}$. 

Alternatively, \citet{hu2002estimation} considers a slightly different
model setup where the dynamic dependence is fully specified on the
latent outcome variable:
\begin{equation}
\begin{aligned}Y_{it}^{*} & =Z_{it}'\b_{0}+Y_{i,t-1}^{*}\g_{0}+\a_{i}+\e_{it},\\
Y_{it} & =\max\{Y_{it}^{*},0\}
\end{aligned}
\label{eq:model_cen_latent}
\end{equation}
Since $Y_{i,t-1}^{*}$ is not observed when $Y_{i,t-1}^{*}<0$, this
model does not fit into our framework directly. However, our key identification
strategy can still be adapted to handle the potential unobservability
of $Y_{i,t-1}^{*}$. 

Below, we consider the two models above separately.

\subsubsection*{Analysis of Model \eqref{eq:model_cen}}

Here we focus on model \eqref{eq:model_cen}, where the endogenous
covariates $X_{it}$ is observed. The identification strategy is still
to exploit the partial stationarity assumption and bound the conditional
distribution of $v_{it}\mid Z_{i}=z$. This censored outcome model
imposes an additional structure between the outcome and the parametric
index: when $Y_{it}>0$, we have $Y_{it}=Y_{it}^{*}$ and 
\[
v_{it}\geq c\ \iff\ Y_{it} \leq Z_{it}'\b_{0}+X_{it}'\g_{0} - c.
\]
This specific structure can be exploited to further tighten the identified
set for $\t_{0}$, and we provide the details of the identification
strategy in Appendix \ref{proof:censor}. The following proposition
presents the identification results of $\t_{0}$ with censored outcomes.
\begin{prop}
\label{prop:censor} Under model \eqref{eq:model_cen} and Assumption
\ref{assu:PartStat}, $\t_{0}\in\T_{I,cen}$, where the identified
set $\T_{I,cen}$ consists of all $\t=\left(\b^{'},\g^{'}\right)^{'}\in\R^{d_{z}}\times\R^{d_{x}}$
such that 
\begin{equation}
\max_{t=1,...,T}\P(Y_{it}\leq z_{t}'\b+X_{it}'\g-c\mid z)\leq\min_{s=1,...,T}\left\{ \P(0<Y_{is}\leq z_{s}'\b+X_{is}'\g-c\mid z)+\P(Y_{is}=0\mid z)\right\} ,\label{eq:censor}
\end{equation}
for any $c\in\R$ and any realization $z=(z_{1},...,z_{T})$ in the
support of $Z_{i}$.
\end{prop}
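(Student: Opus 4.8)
The plan is to adapt the ``bounding-by-$c$'' strategy of Section \ref{subsec:ind} to the censored model, working with the survival function of the composite error rather than its CDF. Write $v_{it} := -(\a_i + \e_{it})$, so that $Y_{it}^* = W_{it}'\t_0 - v_{it}$ and $Y_{it} = \max\{W_{it}'\t_0 - v_{it},\,0\}$. As in the footnote accompanying \eqref{eq:Bound_ts}, Assumption \ref{assu:PartStat} implies that $v_{it}\mid Z_i$ is stationary in $t$, so $\P(v_{it}\geq c\mid z)$ does not depend on $t$. The goal is therefore to sandwich this common quantity between observable lower and upper bounds for every $t$, and then read off \eqref{eq:censor} by taking maxima.

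First I would translate the observed distribution of $Y_{it}$ into statements about $v_{it}$. Conditioning on $W_i = w$ (so $w_t'\t_0$ is a fixed scalar), censoring gives $\P(Y_{it}=0\mid w) = \P(v_{it}\geq w_t'\t_0\mid w)$, while for $y\geq 0$ the fact that $Y_{it}=Y_{it}^*$ on $\{Y_{it}^*>0\}$ yields $\P(Y_{it}\leq y\mid w)=\P(Y_{it}^*\leq y\mid w)=\P(v_{it}\geq w_t'\t_0 - y\mid w)$. The lower bound is then immediate: taking $y=w_t'\t_0-c$ when this is nonnegative gives equality, and when $w_t'\t_0-c<0$ the right-hand side is zero since $Y_{it}\geq 0$; in either case $\P(v_{it}\geq c\mid w)\geq \P(Y_{it}\leq w_t'\t_0-c\mid w)$.

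The crux, and the step I expect to be the main obstacle, is the upper bound, where the extra structure of censoring must be used carefully. I would split $\{v_{it}\geq c\}$ by the sign of $w_t'\t_0-c$. On $\{w_t'\t_0-c\geq 0\}$ the exact-observation identity converts $\{v_{it}\geq c\}$ into $\{Y_{it}\leq w_t'\t_0-c\}$; on $\{w_t'\t_0-c<0\}$ the event $\{v_{it}\geq c\}$ forces $v_{it}>w_t'\t_0$, hence $Y_{it}^*<0$ and $Y_{it}=0$, so it is contained in $\{Y_{it}=0\}$. This produces
\[
\P(v_{it}\geq c\mid w)\leq \P(Y_{it}\leq w_t'\t_0-c,\ w_t'\t_0-c\geq 0\mid w)+\P(Y_{it}=0,\ w_t'\t_0-c<0\mid w).
\]
A short simplification --- peeling the $Y_{it}=0$ mass off the first term, recombining it with the second, and using that $\{0<Y_{it}\leq w_t'\t_0-c\}$ already forces $w_t'\t_0-c>0$ --- collapses the bound to $\P(0<Y_{it}\leq w_t'\t_0-c\mid w)+\P(Y_{it}=0\mid w)$.

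Finally I would take conditional expectations over $X_i$ given $Z_i=z$, which replaces $w_t'\t_0$ by the random index $z_t'\b_0+X_{it}'\g_0$ and yields exactly $L_t:=\P(Y_{it}\leq z_t'\b_0+X_{it}'\g_0-c\mid z)$ and $U_t:=\P(0<Y_{it}\leq z_t'\b_0+X_{it}'\g_0-c\mid z)+\P(Y_{it}=0\mid z)$, with $L_t\leq \P(v_{it}\geq c\mid z)\leq U_t$ for every $t$. Since $\P(v_{it}\geq c\mid z)$ is constant in $t$ by partial stationarity, it is squeezed as $\max_t L_t\leq \P(v_{it}\geq c\mid z)\leq \min_s U_s$, so in particular $\max_t L_t\leq \max_s U_s$, which is \eqref{eq:censor}; as this holds for all $c$ and $z$, we get $\t_0\in\T_{I,cen}$. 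Beyond the upper-bound case analysis, the only care needed is the routine justification that the law of iterated expectations integrates out $X_i$ while preserving the stationarity of $v_{it}\mid z$.
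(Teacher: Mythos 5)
Your proposal is correct and takes essentially the same route as the paper's proof in Appendix \ref{proof:censor}: the same translation of the censored outcome's conditional distribution into statements about $v_{it}$ (with $\P(Y_{it}\leq y\mid w)=\P(v_{it}\geq w_{t}'\t_{0}-y\mid w)$ for $y\geq0$), the same sign-split of $\{v_{it}\geq c\}$ on $w_{t}'\t_{0}-c\gtrless0$ for the upper bound, the same recombination into $\P(0<Y_{it}\leq w_{t}'\t_{0}-c\mid w)+\P(Y_{it}=0\mid w)$, and the same integration over $X_{i}$ given $z$ followed by the partial-stationarity squeeze. Your closing remark is also accurate: the sandwich in fact delivers the stronger restriction $\max_{t}L_{t}\leq\min_{s}U_{s}$, of which the $\max_{s}$ form stated in \eqref{eq:censor} is an immediate consequence.
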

Similar to discrete choice models studied in previous sections, Proposition
\ref{prop:censor} characterizes an identified set for $\t_{0}$ by
exploiting the variation in the joint distribution of $(Y_{it},X_{it})\mid Z_{i}$
over time and the variation in the exogenous covariates $Z_{i}$.
The bounds on the distribution $v_{it}\mid Z_{i}=z$ can be derived
either from the probability $\P(0<Y_{it}\leq y\mid z)$ or $\P(Y_{it}=0\mid z)$,
depending on the value of the covariate index $z_{t}'\b_{0}+X_{it}'\g_{0}$.
This result still accommodates both static and dynamic models with
censored outcomes.

\label{proof:censor} 
\begin{proof}[Proof of Propositions \ref{prop:censor}]
Since the observed outcome $Y_{it}$ is censored at 0, we either observe
$Y_{it}=y>0$ or $Y_{it}=0$. Let $v_{it}:=-(\a_{i}+\e_{it})$, the
conditional probability of $Y_{it}=0$ is given as, 
\[
\P(Y_{it}=0\mid w)=\P(Y_{it}^{*}\leq0\mid w)=\P(v_{it}\geq w_{t}'\t_{0}\mid w).
\]

When $y>0$, the conditional distribution is given as 
\[
\begin{aligned}\P(Y_{it}\leq y\mid w) & =\P(Y_{it}^{*}\leq0,\ Y_{it}\leq y\mid w)+\P(Y_{it}^{*}>0,\ Y_{it}\leq y\mid w)\\
 & =\P(Y_{it}^{*}\leq0\mid w)+\P(0<Y_{it}^{*}\leq y\mid w)\\
 & =\P(Y_{it}^{*}\leq y\mid w)\\
 & =\P(v_{it}\geq w_{t}'\t_{0}-y\mid w).
\end{aligned}
\]

Combining the two scenarios, the conditional distributional of $Y_{it}\mid w$
is characterized as follows: 
\[
\P(Y_{it}\leq y\mid w)=\left\{ \begin{aligned}\P(v_{it}\geq\  & w_{t}'\t_{0}-y\mid w)\quad & \text{ if }y\geq0,\\
 & 0\quad & \text{ if }y<0.
\end{aligned}
\right.
\]

Given observed distribution of $Y_{it}\mid w$, we can bound the distribution
$\P(v_{it}\geq c\mid w)$ above as 
\begin{align*}
\P(v_{it}\geq c\mid w) & =\P(v_{it}\geq c,\ w_{t}'\t_{0}-c\geq0\mid w)+\P(v_{it}\geq c,\ w_{t}'\t_{0}-c<0\mid w)\\
 & \leq\P(Y_{it}\leq w_{t}'\t_{0}-c,\ w_{t}'\t_{0}-c\geq0\mid w)+\P(v_{it}\geq w_{t}'\t_{0},\ \ w_{t}'\t_{0}-c<0\mid w)\\
 & =\P(Y_{it}\leq w_{t}'\t_{0}-c,\ w_{t}'\t_{0}-c\geq0\mid w)+\P(Y_{it}=0,\ w_{t}'\t_{0}-c<0\mid w)
\end{align*}
where the above condition holds since $v_{it}\geq c,\ w_{t}'\t_{0}-c<0$
implies $v_{it}\geq w_{t}'\t_{0}$. 

Taking expectation over the endogenous covariate $X_{i}$ yields the
upper bound for the distribution $v_{it}\mid z$: 
\begin{multline*}
\P(v_{it}\geq c\mid z)\leq\P(Y_{it}\leq z_{t}'\b_{0}+X_{it}'\g_{0}-c,\ z_{t}'\b_{0}+X_{it}'\g_{0}\geq c\mid z)+\\
\P(Y_{it}=0,\ z_{t}'\b_{0}+X_{it}'\g_{0}<c\mid z).
\end{multline*}

Rearranging the formula, the above upper bound can be equivalently
written as 
\[
\begin{aligned} & \P(Y_{it}\leq z_{t}'\b_{0}+X_{it}'\g_{0}-c,\ z_{t}'\b_{0}+X_{it}'\g_{0}\geq c\mid z)+\P(Y_{it}=0,\ z_{t}'\b_{0}+X_{it}'\g_{0}<c\mid z)\\
= & \P(0<Y_{it}\leq z_{t}'\b_0+X_{it}'\g_0-c,\ z_{t}'\b_0+X_{it}'\g_0\geq c\mid z)+\P(Y_{it}=0\mid z)\\
= & \P(0<Y_{it}\leq z_{t}'\b_0+X_{it}'\g_0-c\mid z)+\P(Y_{it}=0\mid z).
\end{aligned}
\]

Similarly, the conditional distribution $v_{it}\mid w$ can be bounded
below 
\[
\P(v_{it}\geq c\mid w)\geq\P(Y_{it}\leq w_{t}'\t_{0}-c\mid w),
\]
where the above condition holds since either $w_{t}'\t_{0}-c\geq0$
so that there exists $y=w_{t}'\t_{0}-c\geq0$ such that $\P(Y_{it}\leq y\mid w)=\P(v_{it}\geq c\mid w)$,
or the lower bound is zero when $w_{t}'\t_{0}<c$.

Taking expectation over $X_{i}$ leads to the following lower bound:
\[
\P(v_{it}\geq c\mid z)\geq\P(Y_{it}\leq z_{t}'\b_{0}+X_{it}'\g_{0}-c\mid z).
\]

Given the bounds on the distribution $\P(v_{it}\geq c\mid z)$, the
partial stationarity assumption implies the following identifying
restriction for $\t_{0}$: 
\[
\max_{t}\P(Y_{it}\leq z_{t}'\b_{0}+X_{it}'\g_{0}-c\mid z)\leq\min_{s}\{\P(0<Y_{is}\leq z_{s}'\b_0+X_{is}'\g_0-c\mid z)+\P(Y_{is}=0\mid z)\},
\]
for any $c\in\mathcal{R}$ and any $z$. 
\end{proof}

\subsubsection*{Analysis of Model \eqref{eq:model_cen_latent}}

We now turn to Model \eqref{eq:model_cen_latent} and explain how
we can adjust the results in Proposition \ref{prop:censor} to this
case. Given that $Y_{i,t-1}^{*}=Y_{i,t-1}$ when $Y_{i,t-1}>0$, we
can further relax the lower and upper bounds in \eqref{eq:censor}
to identify $\t_{0}$.

The lower bound in condition \eqref{eq:censor} can be bounded below
as follows: 
\begin{align*}
\P\left(\rest{v_{it}\geq c}z\right)\geq\  & \P(Y_{it}\leq z_{t}'\b+Y_{i,t-1}^{*}\g-c\mid z)\\
\geq\  & \P(Y_{it}\leq z_{t}'\b+Y_{i,t-1}\g-c,Y_{i,t-1}>0\mid z)
\end{align*}
which no longer involves the unobserved $Y_{i,t-1}^{*}$. Similarly,
the upper bound in condition \eqref{eq:censor} can be further bounded
above 
\begin{align*}
\P\left(\rest{v_{is}\geq c}z\right)\leq\  & \P(0<Y_{is}\leq z_{s}'\b+Y_{i,s-1}^{*}\g-c\mid z)+\P(Y_{is}=0\mid z),\\
\leq\  & \P(0<Y_{is}\leq z_{s}'\b+Y_{i,s-1}\g-c,\ Y_{i,s-1}>0\mid z)\\
 & +\P(Y_{is}>0,\ Y_{i,s-1}=0\mid z)+\P(Y_{is}=0\mid z)
\end{align*}
which is again free of $Y_{i,s-1}^{*}$.

We can thus aggregate the lower and upper bounds intertemporally to
produce the identified set as before.

% \subsection{Proof of Proposition \ref{prop:censor}}

\subsection{Visualization of Identification Set in Binary Choice Model} \label{subsec:Q_gamma}

In this section, we numerically compute and visualize the identified
set we derived, using the dynamic (endogenous) binary choice models
as an illustration.

Specifically, we focus on the following model
\[
Y_{it}=\ind\left\{ Z_{it}+\g_{0}X_{it}+\a_{i}+\e_{it}\geq0\right\} ,\quad t=1,...,T=3
\]
where $Z_{it}$ and $X_{it}$ are both taken to be scalar valued.
We normalize the coefficient on $Z_{it}$ to $1$ and focus on the
identification of the coefficient $\g_{0}$.\emph{ }

\emph{Conditional on a given value of $Z_{i}=z\in{\cal Z}:=\left[-10,10\right]^{T}$},
we set the error term $\e_{it}\sim_{i.i.d.}Logistic\left(0,2\right)$,
and the fixed effect 
\[
\a_{i}=\rho_{\a}\cd\frac{1}{T}\sum_{t=1}^{T}z_{t,1}+\sqrt{1-\rho_{\a}^{2}}\xi_{i}
\]
with $\rho_{\a}=0.1$ and $\xi_{i}\sim_{i.i.d.}\cN\left(0,1\right)$. 

Based the above, we then consider the following two versions of the
true DGPs: 
\begin{enumerate}
\item Discrete Specification of $X_{it}$: We set
\[
X_{i,t}=Y_{i,t-1}
\]
which corresponds to the benchmark AR(1) dynamic model, and $\g_{0}=10,$
which is set to bring $\g_{0}X_{it}$ roughly to the same order of
magnitude as $z$.
\item Continuous Specification of $X_{it}$: We set
\[
X_{it}:=5\left(2\left(Y_{i,t-1}-0.5\right)+\eta_{it}\right)
\]
where $\eta_{it}\sim_{i.i.d.}U\left(-1,1\right)$, and $\g_{0}=1$,
so that $X_{it}^{'}\g_{0}$ is again of the similar order of magnitude
as $z$. 
\end{enumerate}
In either case, we set the initial condition $Y_{i,0}\sim_{i.i.d}Bernoulli\left(0,5\right)$
and the true parameter 

Write
\[
Q^{*}\left(\g\right):=\max_{c\in{\cal C},z\in{\cal Z}}Q\left(\g,c,z\right),\quad Q\left(\g,c,z\right):=\max_t p_{L,t}\left(\rest cz,\g\right)+p_{U,t}\left(\rest cz,\g\right)-1,
\]
where
\begin{align}
p_{L,t}(\rest{c}z,\g) & := L_t(\rest{c}z,\g) = \P(\rest{Y_{it} = 1 \text{ and } z_t + \g X_{it}\leq c}z)\\ 
p_{U,t}(\rest{c}z,\g) & := 1 - U_t(\rest{c}z,\g) = \P(\rest{Y_{it} = 0 \text{ and } z_t + \g X_{it}> c}z)
\end{align}
The identified set can thus be equivalently be characterized as 
\[
\G_{I}:=\left\{ \g:\ Q\left(\g\right)\leq0\right\} .
\]

We then implement the following exercise:

(i) Compute $\hat{Q}\left(\g,c,z\right)$ as numerical approximation
of $Q\left(\g,c,z\right)$ via simulations. Specifically, we compute
\[
\hat{Q}\left(\g,c,z\right):=\max_{t=1,...,T}\hat{p}_{L,t}\left(\rest cz,\g\right)+\max_{t=1,...,T}\hat{p}_{U,t}\left(\rest cz,\g\right)-1
\]
where 
\begin{align*}
\hat{p}_{L,t}\left(\rest cz,\g\right) & :=\frac{1}{B}\sum_{b=1}^{B}Y_{bt}\ind\left\{ z_{t}+\g X_{bt}\leq c\right\} \\
\hat{p}_{U,t}\left(\rest cz,\g\right) & :=\frac{1}{B}\sum_{b=1}^{B}\left(1-Y_{bt}\right)\ind\left\{ z_{t}+\g X_{bt}\geq c\right\} 
\end{align*}
using $B=2000$ simulations of $\left(Y_{bt},X_{bt}\right)$ based
on the DGP described above, conditional on each value of $z$. We
emphasize that $B$ simulations $\left(Y_{bt},X_{bt}\right)$ are
generated for each $z$, and thus $B$ should not be exactly interpreted as the usual ``sample size $N$'' as in Section \ref{sec:simu}.

(ii) Optimize $\hat{Q}\left(\g,c,z\right)$ over $\left(c,z\right)\in{\cal C}\times{\cal Z}$,
with ${\cal C}:=\left[-30,30\right]$, to obtain $\hat{Q}^{*}\left(\g\right)$
as a numerical approximation of $Q^{*}\left(\g\right)$, using the
R package GenSA, which implements the global optimization algorithm
called Generalized Simulated Annealing \citep{xiang2013generalized}.

(iii) Plot $\hat{Q}^{*}\left(\g\right)$ as a function of $\g$.

~

Figures \ref{fig:Qstar_disc} and \ref{fig:Qstar_cts} contain plots
of $Q^{*}\left(\g\right)$ (more precisely, its numerical approximation
$\hat{Q}^{*}$), and visualize the informativeness of our identified
set. We note that the ``spikes'' and ``wiggliness'' are likely
to be driven by the randomness in the global optimization algorithm,
which is not always guaranteed to find the true global maximum. Consequently,
we should interpret the blue line as 
a lower bound of the numerical approximation of  $Q^{*}\left(\g\right)$.

First, we confirm that $Q^{*}\left(\g_{0}\right)\leq0$, i.e., the
true parameters $\g_{0}=10$ (Design 1) and $1$ (Design 2) indeed lie within the identified set $\G_{I}$.
Second, we observe that the identified set $\G_{I}$ is nontrivial:
(i) $Q\left(\g\right)\leq0$ only in a neighborhood around the true
value $\g_{0}$, (ii) the sign of of $\g_{0}$ is correctly identified
in both figures, and (iii) the identified set under the binary specification
$X_{it}=Y_{i,t-1}$ is much wider than the one under the continuou
specification, which is as expected.

\begin{figure}
\caption{$Q^{*}\left(\protect\g\right)$ under Discrete Specification of $X_{it}$}
\label{fig:Qstar_disc}
\centering{}\includegraphics[scale=0.65]{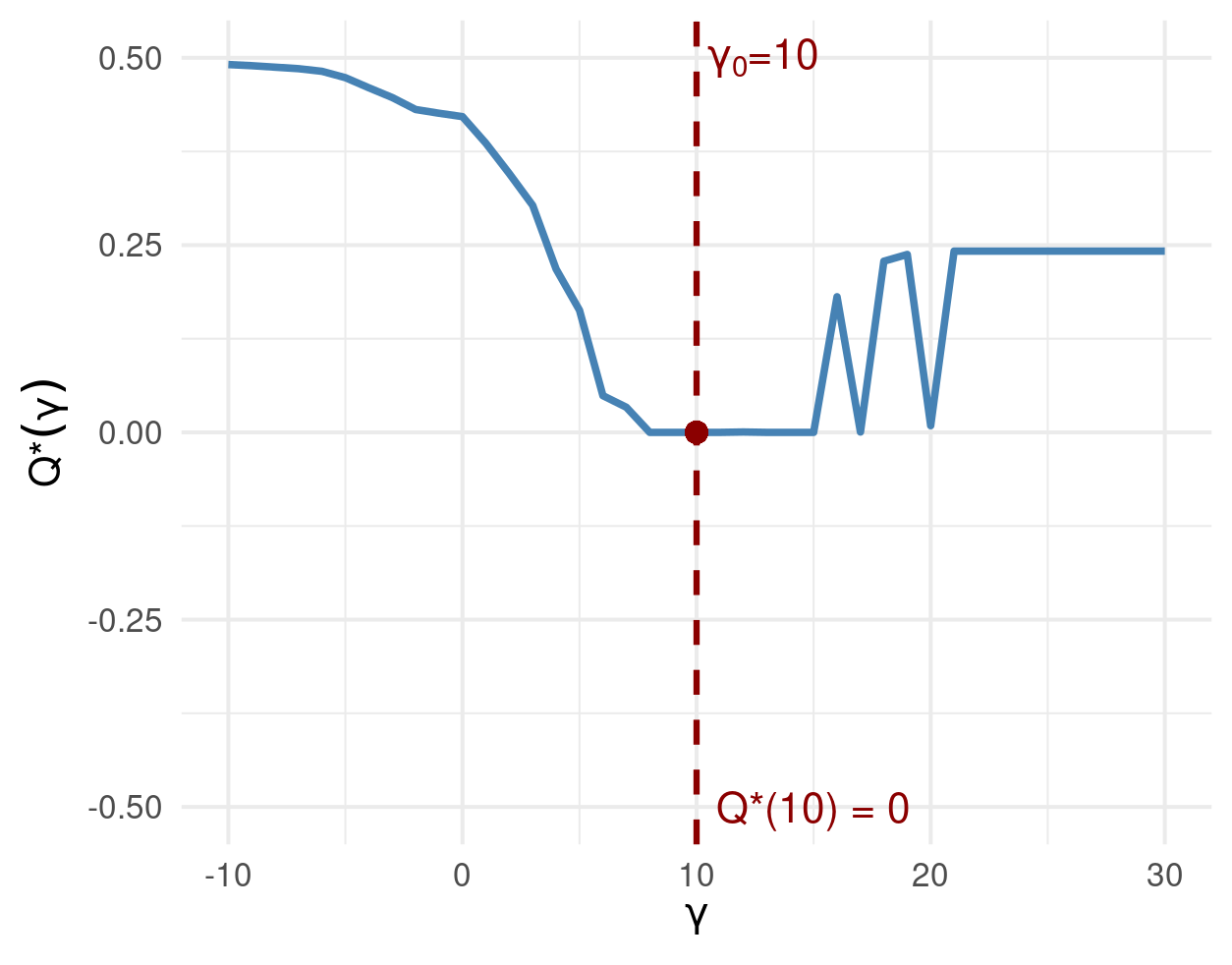}
\end{figure}

\begin{figure}
\caption{$Q^{*}\left(\protect\g\right)$ under Continuous Specification of
$X_{it}$}
\label{fig:Qstar_cts}
\centering{}\includegraphics[scale=0.65]{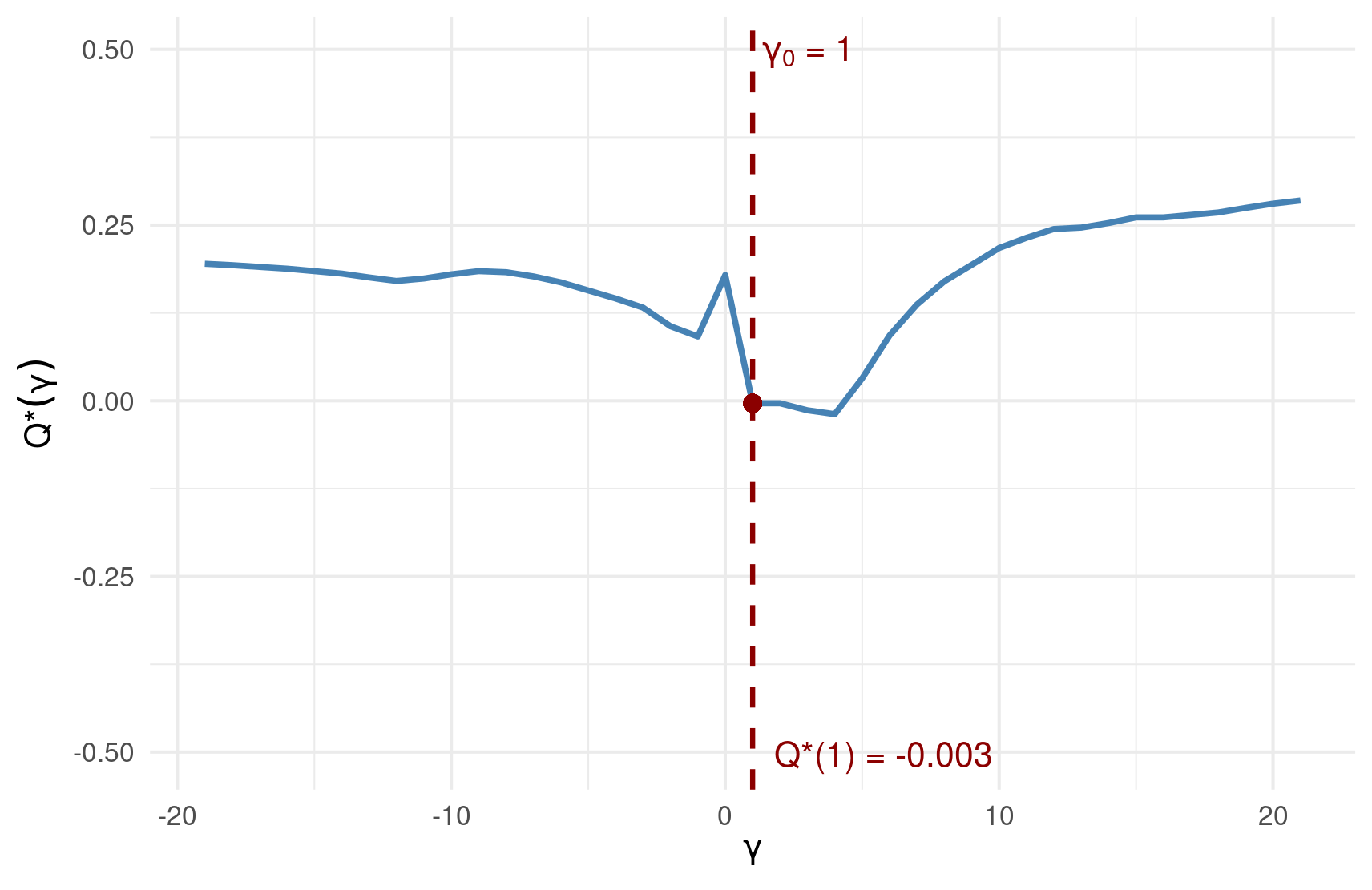}
\end{figure}

We emphasize our visualization of $\G_{I}$ via $Q^{*}\left(\g\right)$
should be interpreted as \emph{conservative, }since we only set ${\cal C}=\left[-20,20\right]$
and the global optimization algorithm (GenSA) may not get the absolute
maximum on ${\cal C}\times{\cal Z}$ (so the maximum value returned
by the algorithm may be strictly smaller than the true maximum). Consequently,
the visualized function $\hat{Q}^{*}\left(\g\right)$ should be interpreted
as a lower bound of an approximation of the true $Q^{*}\left(\g\right)$.
\end{document}